\documentclass[a4paper,UKenglish,cleveref, autoref, thm-restate,numberwithinsect]{lipics-v2021}

\pdfoutput=1 %
\hideLIPIcs  %

\bibliographystyle{plainurl}%

\title{Pathfinding in Self-Deleting Graphs} %

\author{Michal {Dvořák}}{Czech Technical University, Prague, Czech Republic}{michal.dvorak@fit.cvut.cz}{https://orcid.org/0000-0002-5048-773X}{}
\author{Dušan {Knop}}{Czech Technical University, Prague, Czech Republic}{dusan.knop@fit.cvut.cz}{https://orcid.org/0000-0003-2588-5709}{}
\author{Michal {Opler}}{Czech Technical University, Prague, Czech Republic}{michal.opler@fit.cvut.cz}{https://orcid.org/0000-0002-4389-5807}{}
\author{Jan {Pokorný}}{Czech Technical University, Prague, Czech Republic}{jan.pokorny@fit.cvut.cz}{https://orcid.org/0000-0003-3164-0791}{}
\author{Ondřej {Suchý}}{Czech Technical University, Prague, Czech Republic}{ondrej.suchy@fit.cvut.cz}{https://orcid.org/0000-0002-7236-8336}{}
\author{Krisztina {Szilágyi}}{Czech Technical University, Prague, Czech Republic}{krisztina.szilagyi@fit.cvut.cz}{https://orcid.org/0000-0003-3570-0528}{}

\authorrunning{M. Dvořák, D. Knop, M. Opler, J. Pokorný, O. Suchý, K. Szilágyi} %

\Copyright{Michal Dvořák, Dušan Knop, Michal Opler, Jan Pokorný, Ondřej Suchý, Krisztina Szilágyi} %

\ccsdesc[500]{Theory of computation~Parameterized complexity and exact algorithms}%

\keywords{Parameterized complexity, self-deleting graphs, pathfinding} %

\category{} %

\funding{This work was co-funded by the European Union under the project Robotics and advanced industrial production (reg. no. CZ.02.01.01/00/22\_008/0004590). Michal Dvořák and Jan Pokorný acknowledge the support of Grant Agency of the Czech Technical University in Prague, grant No. SGS23/205/OHK3/3T/18.}%

\acknowledgements{}

\nolinenumbers %

\EventEditors{Ho-Lin Chen, Wing-Kai Hon, and Meng-Tsung Tsai}
\EventNoEds{3}
\EventLongTitle{36th International Symposium on Algorithms and Computation (ISAAC 2025)}
\EventShortTitle{ISAAC 2025}
\EventAcronym{ISAAC}
\EventYear{2025}
\EventDate{December 7--10, 2025}
\EventLocation{Tainan, Taiwan}
\EventLogo{}
\SeriesVolume{359}
\ArticleNo{19}
\newcommand{\itemstyle}[1]{\textcolor{lipicsGray}{\sffamily\bfseries\upshape\mathversion{bold}#1}}
\usepackage[disable]{todonotes}
\usepackage{xspace}
\usepackage{tikz-cd}
\usepackage{booktabs}

\newcommand{\NP}{{\sf NP}\xspace}
\newcommand{\coNP}{{\sf coNP}\xspace}
\newcommand{\W}[1]{{\sf W[#1]}\xspace}
\newcommand{\FPT}{{\sf FPT}\xspace}

\newcommand{\paraNP}{{\sf para-NP}\xspace}
\newcommand{\SDpath}{\textsc{Self-Deleting \mbox{$s$-$t$-path}}\xspace}
\newcommand{\SSDpath}{\textsc{Shortest Self-Deleting \mbox{$s$-$t$-path}}\xspace}
\newcommand{\SCHCpath}{\textsc{Shortest \mbox{$\chi$-compliant} \mbox{$s$-$t$-path}}\xspace}

\newcommand{\tsat}{\textsc{$3$Sat}\xspace}
\newcommand{\twsat}{\textsc{$2$Sat}\xspace}

\newcommand{\poly}{\ensuremath{\operatorname{poly}}}

\newcommand{\vcn}{\ensuremath{\operatorname{vcn}}\xspace} %
\newcommand{\td}{\ensuremath{\operatorname{td}}\xspace} %
\newcommand{\fen}{\ensuremath{\operatorname{fen}}\xspace} %
\newcommand{\fvsn}{\ensuremath{\operatorname{fvsn}}\xspace} %
\newcommand{\bw}{\ensuremath{\operatorname{bw}}\xspace} %
\newcommand{\vi}{\ensuremath{\operatorname{vi}}\xspace} %
\newcommand{\nd}{\ensuremath{\operatorname{nd}}\xspace} %
\newcommand{\cvdn}{\ensuremath{\operatorname{cvdn}}\xspace} %
\newcommand{\mw}{\ensuremath{\operatorname{mw}}\xspace} %
\newcommand{\sd}{\ensuremath{\operatorname{sd}}\xspace}

\usepackage{etoolbox}
\usepackage{placeins}
\def\ShortVersion{1} %
\ifdefined\ShortVersion
\newcommand{\sv}[1]{#1}
\newcommand{\lv}[1]{}
\newcommand{\appendixText}{}
\newcommand{\toappendix}[1]{\gappto{\appendixText}{{#1}}}
\else
\newcommand{\sv}[1]{}
\newcommand{\lv}[1]{#1}
\newcommand{\appendixText}{}
\newcommand{\toappendix}[1]{#1}
\fi
\newcommand{\appmark}{$\star$}

\newcommand{\appsection}[2]{\section{#1}\label{#2}\toappendix{\sv{\section{Omitted material from \Cref{#2}: #1}}}}
\newenvironment{apprestatable}[2]{\sv{\restatable[\appmark]{#1}{#2}}\lv{\restatable{#1}{#2}}}{\endrestatable}

\theoremstyle{plain}
\newtheorem{rrule}{Reduction Rule}
\theoremstyle{definition}
\newtheorem{construction}[theorem]{Construction}
\Crefname{rrule}{Reduction Rule}{Reduction Rules}
\Crefname{construction}{Construction}{Constructions}

\begin{document}

\maketitle

\begin{abstract}
In this paper, we study the problem of pathfinding on traversal-dependent graphs, i.e., graphs whose edges change depending on the previously visited vertices.
In particular, we study \emph{self-deleting graphs}, introduced by Carmesin et al.~\cite{carmesin2023hamiltonian}, which consist of a graph $G=(V, E)$ and a function $f\colon V\rightarrow 2^E$, where $f(v)$ is the set of edges that will be deleted after visiting the vertex $v$.
In the \textsc{(Shortest)} \SDpath problem we are given a self-deleting graph and its vertices $s$ and $t$, and we are asked to find a (shortest) path from $s$ to $t$, such that it does not traverse an edge in $f(v)$ after visiting $v$ for any vertex $v$.

We prove that \SDpath is \NP-hard even if the given graph is outerplanar, bipartite, has maximum degree $3$, bandwidth $2$ and $|f(v)|\leq 1$ for each vertex $v$.
We show that \SSDpath is \W{1}-complete parameterized by the length of the sought path and that \SDpath is \W{1}-complete parameterized by the vertex cover number, feedback vertex set number and treedepth.
We also show that the problem becomes \FPT when we parameterize by the maximum size of $f(v)$ and several structural parameters.
Lastly, we show that the problem does not admit a polynomial kernel even for parameterization by the vertex cover number and the maximum size of $f(v)$ combined already on 2-outerplanar graphs.
\end{abstract}

\section{Introduction}
Pathfinding in graphs is a well-studied topic, both from a theoretical and from a practical perspective. The famous Dijkstra's algorithm for finding a shortest path between two vertices runs in time that is quadratic in the number of vertices. However, this algorithm works under the assumption that the underlying graph is static, i.e., does not change. In many practical applications, this assumption does not hold.

One way to reflect the changes in the underlying graph is to model it as a temporal graph, where each edge is present in the graph only at certain times. In this setting, there are several ways to define the ``best'' path between two vertices: shortest path (using the smallest number of edges), fastest (the path that requires the smallest number of timesteps) and foremost (the path that requires the smallest number of timesteps starting from time zero). In all of these cases, the pathfinding problem can be solved in polynomial time~\cite{wu2016efficient}.

In some applications, however, the way that the underlying graph changes is \emph{traversal-dependent}. In other words, the availability of an edge is not time-dependent, but it depends on the previously visited vertices and edges. One such example is open-pit mining, where drilling at a vertex creates a pile of rubble which renders some edges impassable. Another example is autonomous harvesting, in which the vehicle should not return to the previously visited areas to avoid soil compactification~\cite{carmesin2023hamiltonian}. 

In this paper, we consider the model introduced by Carmesin et al.~\cite{carmesin2023hamiltonian}, called \emph{self-deleting graph}. 
A self-deleting graph is a graph $G=(V, E)$ together with a function $f\colon V\rightarrow 2^E$, which describes which edges will be deleted after visiting a vertex.
We stress that the edges in $f(v)$ are not necessarily incident to $v$. We consider the \textsc{(Shortest) Self-Deleting $s$-$t$ path} problem, where we are given a self-deleting graph and its vertices $s$ and $t$ and we are asked to find a (shortest) path from $s$ to $t$ that is valid (i.e., in which we are not traversing an edge in $f(v)$ after visiting the vertex $v$). 

\subparagraph*{Related work.} Several variants of pathfinding with restrictions on vertices and edges have been studied. 

Wojciechowski et al.~\cite{WojciechowskiSA23} introduced a problem called \textsc{Optional Choice Reachability}, where we are given a graph together with a set $S$ of pairs of edges, and we are asked to find an $s$-$t$ path that contains at most one edge from each pair in $S$. 
They showed that this problem is \NP-complete even on directed acyclic graphs (DAGs) of pathwidth 2 and \FPT parameterized by $|S|$. 

The vertex analogue of this problem, where we are given a set of pairs of vertices and we are required to use at most one of them from each pair was introduced by Krause et al.~\cite{KrauseSG1973}. 
On the one hand, the problem is \NP-hard even for DAGs~\cite{GabowMO76}, even if the set of pairs has a specific structure, such as overlapping~\cite{KolmanP09} or ordered~\cite{Kovac13}. 
On the other hand, it is polynomial time solvable, if the structure is well-parenthesized~\cite{KolmanP09}, halving~\cite{KolmanP09}, or nested~\cite{ChenKTRC01}, or in other special cases~\cite{Yinnone97}. 
Notably, Bodlaender et al.~\cite{BodlaenderJK13} studied the problem from parameterized perspective on undirected graphs, showing that it is \W{1}-hard w.r.t.\ vertex cover number of the input graph $G$, but \FPT w.r.t.\ vertex cover number of graph $H$ or treewidth of graph $G \cup H$, where graph $H$ has an edge for each forbidden pair. 
Moreover, it does not admit polynomial kernel w.r.t.\ vertex cover number of graph $G \cup H$, unless $\NP\subseteq \coNP/_{\poly}$~\cite{BodlaenderJK13}.

Szeider~\cite{Szeider03} studied the problem of finding a path where for each vertex, only certain combinations of incoming and outgoing edges are permitted. 
He provides a dichotomy between the cases which are \NP-hard and cases which are linear time solvable, based on the structure of permitted combinations. 
In the variant where the cost of each arc depends on previously traversed arcs, introduced by Kirby and Potts~\cite{KirbyP1969}, the shortest walk can be found in polynomial time~\cite{AnezBP1996,ZiliaskopoulosM1996}, but finding the shortest path is \NP-hard and hard to approximate~\cite{Wojciechowski22}.
In the variant with exclusive-disjunction arc pairs conflicts, introduced by Cerulli et al.~\cite{CerulliGSS23}, we are given pairs of arcs and we have to pay a penalty if the path contains either none of the arcs or both of them. The goal is to find a path minimizing the sum of length of edges and the penalties paid. They show that the problem is \NP-hard and provide heuristics.

\subparagraph*{Our results.} In \Cref{sec:classical_complexity}, we consider the (classical) complexity of \textsc{(Shortest)} \SDpath.
Our first result is that \SDpath is \NP-hard even on a very restricted graph class, namely outerplanar bipartite graphs of maximum degree 3. Moreover, as we show in \Cref{cor:hardness_fv1}, the result holds even when the self-deleting graph deletes at most one edge for each vertex (i.e., $|f(v)|\leq 1$ for all $v$). Next, we show a separation between \SDpath and \SSDpath: namely, on cactus graphs the former problem can be solved in linear time, whereas the latter is \NP-hard (Theorems~\ref{thm:cactus_algorithm} and~\ref{thm:hardness_cactus_shortest} respectively).

In \Cref{sec:parameterized}, we turn our attention to parameterized complexity. Firstly, we consider the parameterization of \SSDpath by solution size, and in \Cref{thm:w1_completeness_ssdpath} we show that it is $\W{1}$-complete. For most structural parameters, \SDpath turns out to be \paraNP-hard or $\W{1}$-complete. For an overview of our results on parameterizations by structural parameters, see \Cref{fig:hierarchy_structural}.

\begin{figure}[ht!]
    \centering
        \begin{tikzpicture}
        [every node/.style={
			draw=none,
			fill=gray!10,
			minimum height=2.8em,
			minimum width=4.5em,
			align=center,
			font = {\small},
            draw=black
		}]
        \tikzstyle{FPT} = [fill=green!15]
		\tikzstyle{Wc} = [fill=orange!30]
		\tikzstyle{NPh} = [fill=red!50]
		\tikzstyle{result} = [draw=black,very thick]
\def\w{1.8}
\def\h{1.3}
\node[NPh] (n_dtc) at (\w*2,-\h*0) {$\operatorname{dtc}$ \\ Cor. \ref{cor:hardness_cliques}};
\node[Wc] (n_vcn) at (\w*5,-\h*0) {$\vcn$ \\ Thm. \ref{thm:w1completeness_vc_dtlf_fvsn}};
\node[NPh] (n_nd) at (\w*3,-\h*1) {\nd \\ Cor. \ref{cor:hardness_cliques}};
\node[Wc] (n_dtlf) at (\w*6,-\h*1) {$\operatorname{dtlf}$ \\ Thm. \ref{thm:w1completeness_vc_dtlf_fvsn}};
\node[Wc] (n_vi) at (\w*5,-\h*1) {\vi \\ Thm. \ref{thm:w1completeness_vi_td}};
\node[FPT] (n_fen) at (\w*7,-\h*1) {\fen \\ Cor. \ref{cor:fpt_by_fen}};
\node[NPh] (n_gamma) at (\w*0,-\h*1) {$\gamma$ \\ Cor. \ref{cor:hardness_cliques}};
\node[NPh] (n_mim) at (\w*1,-\h*1) {$\operatorname{mim}$ \\ Cor. \ref{cor:hardness_cliques}};
\node[NPh] (n_cvdn) at (\w*4,-\h*1) {\cvdn \\ Cor. \ref{cor:hardness_cliques}};
\node[Wc] (n_fvsn) at (\w*6,-\h*2) {\fvsn \\ Thm. \ref{thm:w1completeness_vc_dtlf_fvsn}};
\node[Wc] (n_td) at (\w*5,-\h*2) {\td \\ Thm.
\ref{thm:w1completeness_vi_td}};
\node[NPh] (n_mw) at (\w*2,-\h*2) {$\operatorname{mw}$ \\ Cor. \ref{cor:hardness_cliques}};
\node[NPh] (n_dtcog) at (\w*4,-\h*2) {$\operatorname{dtcog}$ \\ Cor. \ref{cor:hardness_cliques}};
\node[NPh] (n_diam) at (\w*2,-\h*3) {$\operatorname{diam}$ \\ Cor. \ref{cor:hardness_cliques}};

\node[NPh] (n_sd) at (\w*3,-\h*3) {$\sd$ \\ Cor. \ref{cor:hardness_cliques}};

\draw[->] (n_sd) to (n_nd);
\draw[->] (n_sd) to[in=250,out=25] (n_td);
\draw[->] (n_sd) to[in=210,out=35] (n_cvdn);

\draw[->] (n_nd) to (n_dtc);
\draw[->] (n_nd) to[out=30,in=200] (n_vcn);
\draw[->] (n_dtlf) to (n_vcn);
\draw[->] (n_vi) to (n_vcn);
\draw[->] (n_cvdn) to[out=150,in=340] (n_dtc);
\draw[->] (n_fvsn) to (n_fen);
\draw[->] (n_fvsn) to (n_dtlf);
\draw[->] (n_td) to (n_vi);
\draw[->] (n_mw) to (n_nd);
\draw[->] (n_dtcog) to (n_cvdn);

\draw[->] (n_diam) to[out=140,in=300] (n_gamma);
\draw[->] (n_diam) to[out=140,in=270] (n_mim);
\draw[->] (n_diam) to[in=205,out=35] (n_td);
\draw[->] (n_diam) to[out=35,in=200] (n_dtcog);
\draw[->] (n_diam) to (n_mw);

\draw[->] (n_gamma) to[out=60,in=195] (n_dtc);
\draw[->] (n_mim)[out=50,in=200] to (n_dtc);
\draw[->] (n_cvdn) to (n_vcn);

\end{tikzpicture}
    \caption{Results for \SDpath parameterized by structural parameters. Red color stands for \paraNP-hard, green is \FPT and orange is \W{1}-complete. Each parameter is accompanied by the corresponding statement from which the result follows. An arrow $\alpha\to\beta$ indicates a functional upper bound, i.e., $\alpha \leq g(\beta)$ for some (computable) function~$g$. For full names of the parameters refer to \Cref{sec:preliminaries}.}
    \label{fig:hierarchy_structural}
\end{figure}
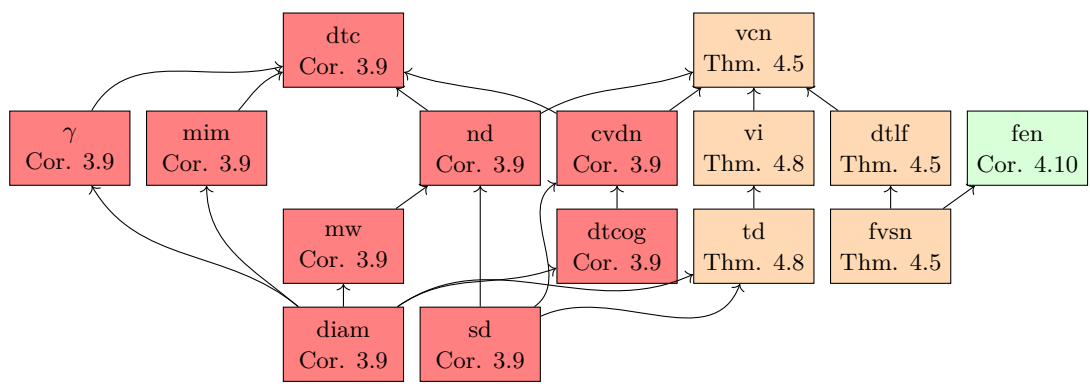

In order to make the problem tractable, we consider the case of bounded deletion set size (i.e. we parameterize the problem by $\mu=\max_{v\in V}|f(v)|$). Although this parameterization alone does not make the problem tractable (in particular, we show that \SSDpath is \paraNP-hard parameterized by $\mu$), it turns out that the parameterization by $\mu$ and $k$ (the number of vertices of the sought path) leads to an \FPT algorithm. This result allows us to obtain several \FPT algorithms for parameterizations by $\mu$ and structural parameters, such as vertex cover number, vertex integrity and treedepth. Using the results of Dvořák et al.~\cite{DvorakKOPSS2025} on the number of edges in traceable graphs with dense structure, we obtain \FPT algorithms for several dense parameters and $\mu$, such as neighborhood diversity, shrub-depth, modular-width, and size of maximum induced matching. For an overview of our results about parameterizations by structural parameters and $\mu$, refer to \Cref{fig:hierarchy_structural_and_mu}.

\begin{figure}[ht!]
    \centering
    \begin{tikzpicture}
    [every node/.style={
			draw=none,
			fill=gray!10,
			minimum height=2.8em,
			minimum width=4.5em,
			align=center,
			font = {\small},
            draw=black
		}]
        \tikzstyle{FPT} = [fill=green!15]
		\tikzstyle{Wc} = [fill=orange!30]
		\tikzstyle{NPh} = [fill=red!50]
		\tikzstyle{result} = [draw=black,very thick]
\def\w{1.8}
\def\h{1.3}
\node[FPT] (n_dtc) at (\w*2,-\h*0) {$\operatorname{dtc}$ \\ Thm. \ref{thm:fpt_by_dense_and_mu}};
\node[FPT] (n_vcn) at (\w*5,-\h*0) {$\vcn$ \\ Cor. \ref{cor:fpt_by_mu_and_vc_vi_td}};
\node[FPT] (n_nd) at (\w*3,-\h*1) {\nd \\ Thm. \ref{thm:fpt_by_dense_and_mu}};
\node[Wc] (n_dtlf) at (\w*6,-\h*1) {$\operatorname{dtlf}$ \\ Thm. \ref{thm:w1completeness_dtlf_mu1}};
\node[FPT] (n_vi) at (\w*5,-\h*1) {\vi \\ Cor. \ref{cor:fpt_by_mu_and_vc_vi_td}};
\node[FPT] (n_fen) at (\w*7,-\h*1) {\fen \\ Cor. \ref{cor:fpt_by_fen}};
\node[NPh] (n_gamma) at (\w*0,-\h*1) {$\gamma$ \\ Thm. \ref{thm:hardness_domination}};
\node[FPT] (n_mim) at (\w*1,-\h*1) {$\operatorname{mim}$ \\ Thm. \ref{thm:fpt_by_dense_and_mu}};
\node[FPT] (n_cvdn) at (\w*4,-\h*1) {\cvdn \\ Thm. \ref{thm:fpt_by_dense_and_mu}};
\node[Wc] (n_fvsn) at (\w*6,-\h*2) {\fvsn \\ Thm. \ref{thm:w1completeness_dtlf_mu1}};
\node[FPT] (n_td) at (\w*5,-\h*2) {\td \\ Cor. \ref{cor:fpt_by_mu_and_vc_vi_td}};
\node[FPT] (n_mw) at (\w*2,-\h*2) {$\operatorname{mw}$ \\ Thm. \ref{thm:fpt_by_dense_and_mu}};
\node[FPT] (n_dtcog) at (\w*4,-\h*2) {$\operatorname{dtcog}$ \\ Thm. \ref{thm:fpt_by_dense_and_mu}};
\node[NPh] (n_diam) at (\w*2,-\h*3) {$\operatorname{diam}$ \\ Thm. \ref{thm:hardness_domination}};

\node[FPT] (n_sd) at (\w*3,-\h*3) {$\sd$ \\ Thm. \ref{thm:fpt_by_dense_and_mu}};

\draw[->] (n_sd) to (n_nd);
\draw[->] (n_sd) to[in=250,out=25] (n_td);
\draw[->] (n_sd) to[in=210,out=35] (n_cvdn);

\draw[->] (n_nd) to (n_dtc);
\draw[->] (n_nd) to[out=30,in=200] (n_vcn);
\draw[->] (n_dtlf) to (n_vcn);
\draw[->] (n_vi) to (n_vcn);
\draw[->] (n_cvdn) to[out=150,in=340] (n_dtc);
\draw[->] (n_fvsn) to (n_fen);
\draw[->] (n_fvsn) to (n_dtlf);
\draw[->] (n_td) to (n_vi);
\draw[->] (n_mw) to (n_nd);
\draw[->] (n_dtcog) to (n_cvdn);

\draw[->] (n_diam) to[out=140,in=300] (n_gamma);
\draw[->] (n_diam) to[out=140,in=270] (n_mim);
\draw[->] (n_diam) to[in=205,out=35] (n_td);
\draw[->] (n_diam) to[out=35,in=200] (n_dtcog);
\draw[->] (n_diam) to (n_mw);

\draw[->] (n_gamma) to[out=60,in=195] (n_dtc);
\draw[->] (n_mim)[out=50,in=200] to (n_dtc);
\draw[->] (n_cvdn) to (n_vcn);

\end{tikzpicture}
    \caption{Results for \SDpath parameterized by structural parameters and $\mu$ combined. The meaning of colors and arrows is the same as in \Cref{fig:hierarchy_structural}.}
    \label{fig:hierarchy_structural_and_mu}
\end{figure}
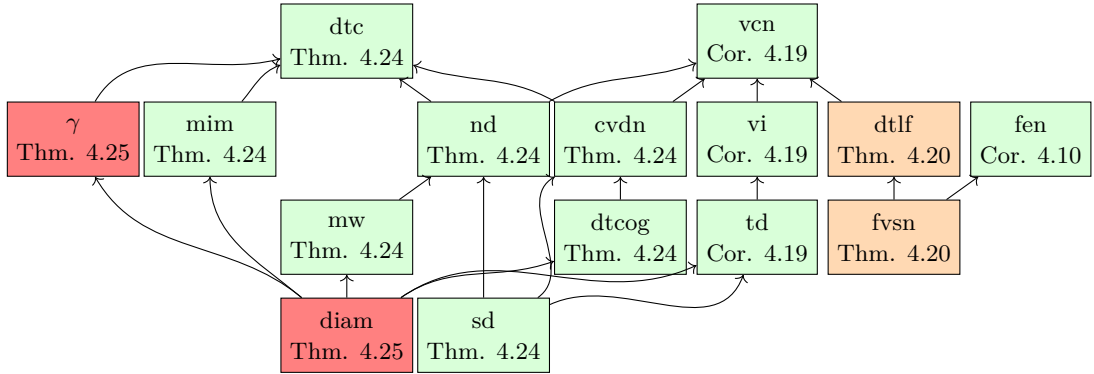

Lastly, in \Cref{sec:kernel}, we consider kernelizing the problem. We show that, under standard parameterized complexity assumptions, \SDpath does not admit a polynomial kernel parameterized by $\mu$ and vertex cover number even on $2$-outerplanar graphs. On the positive side, we obtain a linear kernel for the parameterization by feedback edge number and linear kernel for the parameterization by vertex cover number on outerplanar graphs. Moreover, while \SDpath does not admit a classical kernel parameterized by $\mu$ on cliques, we obtain a linear Turing kernel on cliques parameterized by $\mu$.

\section{Preliminaries}
\label{sec:preliminaries}
\sv{
\toappendix{\section{Additional Preliminaries}}
}
For an integer $k$ we let $[k]=\{1,2,\ldots,k\}$. For any function $f\colon A\to B$ and $X\subseteq A$, the restriction of $f$ to $X$ is denoted $f|_X$.
We use the notation $\widetilde{O}(\cdot)$ to suppress polylogarithmic factors in time complexity.

\subparagraph*{Graph Theory.}
We use standard notations, terminology, and definitions of graph theory, refer to Diestel~\cite{Diestel2017GT} for those. Unless explicitly stated, we consider simple undirected graphs. We denote a $v_1$–$v_k$ walk (or path) in a graph by the sequence
$(v_1,e_1,v_2,\ldots,e_{k-1},v_k)$ of vertices $v_i$ and edges $e_i$, where $e_i=\{v_i,v_{i+1}\}$. For brevity, we may sometimes omit the edges and write just the sequence of vertices. We denote by $\operatorname{cc}(G)$ the number of connected components of a graph $G$. A graph is a \emph{cactus} if every edge lies on at most one cycle. A graph is a \emph{block graph} if every vertex-$2$-connected component is a clique. A $2\times \ell$ grid for some positive integer $\ell$ is a \emph{ladder}. The class of \emph{cographs} is the minimal class of graphs containing the one-vertex graph and closed under complete joins and disjoint unions of two graphs.

\subparagraph*{Self-deleting graphs.}
A \emph{self-deleting graph} is an ordered pair $(G,f)$, where $G=(V,E)$ is an undirected graph and $f$ is a function \mbox{$f\colon V\to 2^E$}. We write $G$ instead of $(G,f)$ if $f$ is clear from the context. The function $f$ is called the \emph{deletion function}. Let $(G,f)$ be a self-deleting graph. A path or walk $(v_1,e_1,v_2,e_2,\ldots,e_{k-1},v_k)$ in $G$ is \emph{$f$-conforming} if for every $j\leq i$ we have $e_i\notin f(v_j)$. We denote $\mu_f=\max_{v\in V}|f(v)|$ and $|f|=\sum_{v\in V}|f(v)|$. If $f$ is clear from the context, we omit the lower index and write just $\mu$.
Central to our paper are the two decision problems \SDpath and \SSDpath. In \SDpath we are given a self-deleting graph $(G,f)$ and two vertices $s,t\in V(G)$ and the task is to decide if there is an $f$-conforming $s$-$t$ path in $G$. In \SSDpath we are in addition given a positive integer $k$ and the task is to decide if there is an $f$-conforming $s$-$t$ path in $G$ on at most $k$ vertices.

We assume that $\mu \geq 1$ as otherwise the problems are trivial. Moreover, if $\mu$ appears in the base of an exponential or in a logarithm, $\mu$ should be replaced by $\max\{\mu,2\}$ in order to avoid degenerate cases. We stick to writing $\mu$ for the sake of readability.  

\subparagraph*{Parameterized complexity.}
We assume the reader is familiar with parameterized complexity. For definitions and comprehensive overview, refer to the monograph of Cygan et al.~\cite{CyganParaAlg2015}. In our work we consider the following structural parameters: \emph{bandwidth} ($\bw$), \emph{distance to clique} ($\operatorname{dtc}$), \emph{vertex cover number} ($\vcn$), \emph{domination number} ($\gamma$), \emph{maximum induced matching} ($\operatorname{mim}$), \emph{neighborhood diversity} ($\nd$), \emph{diameter} ($\operatorname{diam}$), \emph{cluster vertex deletion number} ($\cvdn$), \emph{feedback vertex set number} ($\fvsn$), \emph{distance to linear forest} ($\operatorname{dtlf}$),  \emph{vertex integrity} ($\vi$), \emph{feedback edge number} (\fen), \emph{treedepth} ($\td$), \emph{distance to cograph} ($\operatorname{dtcog}$), \emph{modular-width} ($\operatorname{mw}$), and \emph{shrub-depth} ($\sd$).
\sv{
Formal definitions of all the parameters can be found in~\Cref{subsec:definitions_of_parameters}.
}

\toappendix{

\subsection{Definitions of Parameters}\label{subsec:definitions_of_parameters}
Let $G=(V,E)$ be a graph. The \emph{bandwidth} of $G$ is defined as \[\bw(G)=\min\Big\{\max\big\{|\iota(u)-\iota(v)| \,\big|\, \{u,v\}\in E \big\} \,\Big|\, \iota\colon V \to \mathbb{N} \text{ injective}\Big\}.\]

The \emph{vertex integrity} of a graph $G$, denoted $\vi(G)$, is the smallest number $k$ such that there is a set of at most $k$ vertices whose removal results in a graph where each connected component is of size at most $k$. 

The \emph{neighborhood diversity} of a graph $G$, denoted $\nd(G)$, is the smallest integer $w$ such that there exists a partition of $V$ into $w$ sets $V_1,\ldots,V_w$ such that for any $i\in[w]$ and $u,v\in V_i$ it holds $N(u)\setminus \{v\}=N(v)\setminus \{u\}$. In other words, the sets $V_i$ are either independent or cliques and for any two distinct $V_i,V_j$ either we have for every $v_i\in V_i,v_j\in V_j$ that $\{v_i,v_j\}\in E(G)$ or for every $v_i\in V_i,v_j\in V_j$ that $\{v_i,v_j\}\notin E(G)$.

The following definition is from~\cite{GajarskyLO2013}.
Consider an algebraic expression $A$ that uses the following operations:
\begin{enumerate}[(O1)]
    \item create an isolated vertex;\label{itm:modular_width_O1}
    \item take the \emph{disjoint union} of graphs $G_1,G_2$, denoted by $G_1\oplus G_2$, which is the graph with vertex set $V(G_1)\cup V(G_2)$ and edge set $E(G_1)\cup E(G_2)$;\label{itm:modular_width_O2}
    \item take the \emph{complete join} of $2$ graphs $G_1$ and $G_2$, denoted by $G_1\otimes G_2$, which is the graph with vertex set $V(G_1)\cup V(G_2)$ and edge set $E(G_1)\cup E(G_2)\cup \{\{v,w\}\mid v\in V(G_1)\wedge w\in V(G_2)\}$;\label{itm:modular_width_O3}
    \item for graphs $G_1,\ldots, G_n$ and a pattern graph $G$ with vertices $v_1,\ldots,v_n$ perform the \emph{substitution} of the vertices of $G$ by the graphs $G_1,\ldots, G_n$, denoted by $G(G_1,\ldots, G_n)$, which is the graph with vertex set $\bigcup_{i=1}^n V(G_i)$ and edge set $\bigcup_{i=1}^n E(G_i)\cup \{\{u,v\}\mid u \in V(G_i)\wedge v \in V(G_j)\wedge \{v_i,v_j\}\in E(G)\}$. Hence, $G(G_1,\ldots, G_n)$ is obtained from $G$ by replacing every vertex $v_i\in V(G)$ with the graph $G_i$ and adding all edges between vertices of a graph $G_i$ and the vertices of a graph $G_j$ whenever $\{v_i,v_j\}\in E(G)$.\label{itm:modular_width_O4}
\end{enumerate}
\newcommand{\oitemref}[1]{\itemstyle{\hyperref[#1]{(O\ref{#1})}}}
The \emph{width} of the expression $A$ is the maximum number of vertices of a pattern graph used by any occurrence of the operation \oitemref{itm:modular_width_O4} in $A$ (or $0$ if \oitemref{itm:modular_width_O4} does not occur in $A$). The \emph{modular-width} of a graph~$G$, denoted $\operatorname{mw}(G)$, is the smallest integer $m$ such that $G$ can be obtained from such an algebraic expression of width at most $m$. Note that the operations \oitemref{itm:modular_width_O2} and \oitemref{itm:modular_width_O3} can be seen as a special case of \oitemref{itm:modular_width_O4} with graphs $K_2$, resp. $\overline{K_2}$. 
A graph $G$ is a \emph{cograph} if it has modular width $0$.

Let $\Pi$ be a graph property. A set $S\subseteq V$ (resp. $F\subseteq E$) is a \emph{vertex-modulator (resp. edge-modulator) to $\Pi$} if $G\setminus S\in \Pi$ (resp. $G\setminus F\in \Pi$). The parameter \emph{vertex-distance to $\Pi$} (resp. \emph{edge-distance to $\Pi$}) is the size of the smallest vertex-modulator (resp. edge-modulator) to $\Pi$. The \emph{vertex cover number} is the vertex-distance to edgeless graph. The \emph{feedback vertex set number} (\fvsn) is the vertex-distance to forest. The \emph{feedback edge number} (\fen) is the edge-distance to forest. A \emph{cluster} (graph) is a disjoint union of cliques. The \emph{cluster vertex deletion number} (\cvdn) is the distance to cluster graph. 

A \emph{linear forest} is a forest where each connected component is a path. The \emph{treedepth} of $G$, denoted $\td(G)$, is defined to be the smallest possible depth of a rooted forest $F$ with $V(F)\supseteq V(G)$ such that every edge of $G$ is in ancestor-descendant relationship in $F$.
A set $D\subseteq V(G)$ is \emph{dominating set of $G$} if each vertex of $G$ is in $D$ or has a neighbor in $D$. The \emph{domination number} of $G$, denoted $\gamma(G)$, is the size of smallest dominating set of $G$.

\subparagraph*{Shrub-depth}
\begin{definition}[Tree-model~\cite{Ganian2017ShrubdepthCH}]
    Let $m$ and $d$ be non-negative integers. A \emph{tree-model of $m$ colours and depth $d$} for a graph $G$ is a pair $(T,S)$ of a rooted tree $T$ (of height $d$) and a set $S\subseteq [m]^2\times [d]$ (called a \emph{signature} of the tree-model) such that
    \begin{enumerate}
        \item the length of each root-to-leaf path in $T$ is exactly $d$,
        \item the set of leaves of $T$ is exactly the set $V(G)$ of vertices of $G$,
        \item each leaf of $T$ is assigned one of the colours in $[m]$, and
        \item for any $i,j,\ell$ it holds that $(i,j,\ell)\in S\Leftrightarrow (j,i,\ell)\in S$ (symmetry in the colours), and 
        \item for any two vertices $u,v\in V(G)$ and any $i,j,\ell$ such that $u$ is coloured $i$ and $v$ is coloured $j$ and the distance between $u,v$ in $T$ is $2\ell$, the edge $\{u,v\}$ exists in $G$ if and only if $(i,j,\ell)\in S$.
    \end{enumerate}
\end{definition}
\begin{definition}[Shrub-depth]
    A class $\mathcal{G}$ of graphs has \emph{shrub-depth} at most $d$ if there exists~$m$ such that each $G\in\mathcal{G}$ admits a $(d,m)$ tree-model.
\end{definition}

}%

\subparagraph*{Exponential Time Hypothesis.} Exponential Time Hypothesis (ETH), introduced by Impagliazzo, Paturi and Zane~\cite{ImpagliazzoP01,ImpagliazzoPZ01journal} asserts, roughly speaking, that there is no algorithm for \tsat in time $2^{o(n')}$, where $n'$ is the number of variables of the input formula. In fact, even $2^{o(n'+m')}$ algorithm is ruled out by using the Sparsification Lemma~\cite{ImpagliazzoPZ01journal}, where $m'$ is the number of clauses of the input formula. We also utilize the result of Chen, Huang, Kanj, and Xia~\cite{ChenHKX04conference,ChenHKX06journal} that there is no $g(k)n^{o(k)}$ algorithm for \textsc{(Multicolored) Clique} for any computable function~$g$ unless ETH fails.

\sv{
Statements where proofs or details are omitted due to space constraints are marked with~\appmark. The omitted material is available in the appendix.
}

\appsection{Classical Complexity}{sec:classical_complexity}
As observed by Carmesin et al.~\cite[Lemma 3]{carmesin2023hamiltonian}, if every vertex deletes only its incident edges, \SDpath reduces to pathfinding in a directed graph: deleting $\{u,v\}$ at $v$ orients the edge from $u$ to $v$, and if both endpoints delete it, the edge is removed entirely. Thus, \SDpath is solvable in linear time when all $f(v)$ contain only edges incident to $v$. 
Another scenario when the problem is tractable is that the graph has only a constant number of $s$-$t$ paths, e.g., in trees or graphs of maximum degree $2$. The problem becomes hard already on graphs of maximum degree $3$. We show a polynomial-time reduction from \tsat to \SDpath which we also utilize later with slight modifications.

\begin{figure}[ht!]
    \centering
    \begin{tikzpicture}[scale=.85]
    \def\w{0.8}
    \def\h{0.8}
    \node (ix) at (\w*0,\h*0) [circle,draw,fill=black,label=below left:${s=\iota_x}$] {};
    \node (ox) at (\w*2,\h*0) [circle,draw,fill=black,label=above:$o_x$] {};
    \node (Tx) at (\w*1,\h*1) [circle,draw,fill=red,label=above:$T_x$] {};
    \node (Fx) at (\w*1,\h*-1) [circle,draw,fill=blue,label=below:$F_x$] {};

    \draw (ix) -- (Tx) -- (ox) -- (Fx) -- (ix);

    \node (iy) at (\w*3,\h*0) [circle,draw,fill=black,label=below:$\iota_y$] {};
    \node (oy) at (\w*5,\h*0) [circle,draw,fill=black,label=above:$o_y$] {};
    \node (Ty) at (\w*4,\h*1) [circle,draw,fill=green,label=above:$T_y$] {};
    \node (Fy) at (\w*4,\h*-1) [circle,draw,fill=orange,label=below:$F_y$] {};

    \draw (iy) -- (Ty) -- (oy) -- (Fy) -- (iy);

    \node (iz) at (\w*6,\h*0) [circle,draw,fill=black,label=below:$\iota_z$] {};
    \node (oz) at (\w*8,\h*0) [circle,draw,fill=black,label=above:$o_z$] {};
    \node (Tz) at (\w*7,\h*1) [circle,draw,fill=violet,label=above:$T_z$] {};
    \node (Fz) at (\w*7,\h*-1) [circle,draw,fill=pink,label=below:$F_z$] {};

    \draw (iz) -- (Tz) -- (oz) -- (Fz) -- (iz);

    \draw (ox) to (iy);
    \draw (oy) to (iz);

\begin{scope}[xshift=8cm,yshift=2cm]
    \node (A1) at (\w*-1,\h*-2) [circle,draw,fill=black,label=above:$\iota_{C_1}$] {};
    \node (B1) at (\w*0,\h*-2) [circle,draw,fill=black] {};
    \node (C1) at (\w*1,\h*-2) [circle,draw,fill=black] {};
    
    \node (D1) at (\w*-1,\h*-3) [circle,draw,fill=black] {};
    \node (E1) at (\w*0,\h*-3) [circle,draw,fill=black] {};
    \node (F1) at (\w*1,\h*-3) [circle,draw,fill=black,label=below:$o_{C_1}$] {};

    \draw (A1) -- (B1);
    \draw (B1) -- (C1);
    \draw (D1) -- (E1);
    \draw (E1) -- (F1);

    \draw[blue,ultra thick] (A1) -- node[left]{} (D1);
    \draw[green,ultra thick] (B1) -- node[left]{} (E1);
    \draw[violet,ultra thick] (C1) -- node[right]{} (F1);

    \node (A2) at (\w*2,\h*-2) [circle,draw,fill=black,label=above:$\iota_{C_2}$] {};
    \node (B2) at (\w*3,\h*-2) [circle,draw,fill=black] {};
    \node (C2) at (\w*4,\h*-2) [circle,draw,fill=black] {};
    
    \node (D2) at (\w*2,\h*-3) [circle,draw,fill=black] {};
    \node (E2) at (\w*3,\h*-3) [circle,draw,fill=black] {};
    \node (F2) at (\w*4,\h*-3) [circle,draw,fill=black,label=below:$o_{C_2}$] {};

    \draw (A2) -- (B2);
    \draw (B2) -- (C2);
    \draw (D2) -- (E2);
    \draw (E2) -- (F2);

    \draw[red,ultra thick] (A2) -- node[left]{} (D2);
    \draw[orange,ultra thick] (B2) -- node[left]{} (E2);
    \draw[violet,ultra thick] (C2) -- node[right]{} (F2);

    \node (A3) at (\w*5,\h*-2) [circle,draw,fill=black,label=above:$\iota_{C_3}$] {};
    \node (B3) at (\w*6,\h*-2) [circle,draw,fill=black] {};
    \node (C3) at (\w*7,\h*-2) [circle,draw,fill=black] {};
    
    \node (D3) at (\w*5,\h*-3) [circle,draw,fill=black] {};
    \node (E3) at (\w*6,\h*-3) [circle,draw,fill=black] {};
    \node (F3) at (\w*7,\h*-3) [circle,draw,fill=black,label=below right:${t=o_{C_3}}$] {};

    \draw (A3) -- (B3);
    \draw (B3) -- (C3);
    \draw (D3) -- (E3);
    \draw (E3) -- (F3);

    \draw[red,ultra thick] (A3) -- node[]{} (D3);
    \draw[orange,ultra thick] (B3) -- node[left]{} (E3);
    \draw[pink,ultra thick] (C3) -- node[right]{} (F3);

    \draw(F1) -- (A2);
    \draw(F2) -- (A3);
\end{scope}
    \draw (oz) to (A1);

    \end{tikzpicture}
    \caption{Example of \Cref{construction:sat} for the input formula $\varphi=(x\vee \neg y \vee\neg z)\wedge (\neg x \vee y \vee \neg z) \wedge (\neg x \vee y \vee z)$. The deletion sets are indicated by the colors. 
    }
    \label{fig:construction_sat_example}
\end{figure}
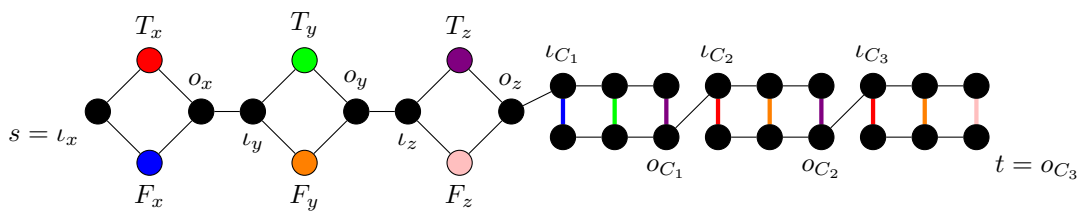

\begin{construction}\label{construction:sat}
Let $\varphi$ be a given CNF-formula over variables $X=\{x_1,\ldots,x_{n'}\}$ with clauses $\mathcal{C}=\{C_1,\ldots C_{m'}\}$. Our construction of an instance of \SDpath consist of variable and clause gadgets that we now introduce. See \Cref{fig:construction_sat_example} for an example. 
\begin{description}
    \item[Variable gadget] The variable gadget for a variable $x\in X$ is a $4$-cycle with vertices $T_x,\iota_x,F_x,o_x$. %
    \item[Clause gadget] The clause gadget for a clause $C\in\mathcal{C}$ is the $2\times |C|$ grid. The top left corner vertex is denoted $\iota_C$ (input) and the bottom right corner vertex is denoted $o_C$ (output). Each of the $|C|$ vertical edges correspond to a literal of $C$. For a literal $\ell \in C$ we denote its corresponding edge by $e^C_\ell$.
\end{description}

Create one variable gadget for each variable and one clause gadget for each clause. Add edges of the form $\{o_{x_k},\iota_{x_{k+1}}\}$ for every $k\in[n'-1]$, add edge $\{o_{x_{n'}},\iota_{C_1}\}$ and finally the edges $\{o_{C_i},\iota_{C_{i+1}}\}$ for every $i\in[m'-1]$. We let $s=\iota_{x_1}$ and $t=o_{C_{m'}}$. It remains to specify the deletion function~$f$. For all vertices $v$ except $T_x,F_x$ inside the variable gadgets, we set $f(v)=\emptyset$. For a variable $x\in X$ we set $f(T_x)=\{e_{\neg x}^C\mid C\in \mathcal{C},\neg x\in C\}$ and $f(F_x)=\{e_x^C\mid C\in \mathcal{C},x \in C\}$. In other words, $T_x$ deletes the literal edges in clause gadgets corresponding to literals $\neg x$ and $F_x$ deletes the literal edges corresponding to literals $x$.
The resulting \SDpath instance is $(G,f,s,t)$.
\end{construction}

\begin{apprestatable}{lemma}{lemmaConstructionSatCorrectness}\label{claim:construction_sat_correctness}
    Let $\varphi$ be the formula and $(G,f,s,t)$ the \SDpath instance obtained by \Cref{construction:sat} from~$\varphi$. Then $\varphi$ is satisfiable if and only if there is an $f$-conforming $s$-$t$ path in $G$.
\end{apprestatable}

\toappendix{
\sv{\lemmaConstructionSatCorrectness*}
\begin{claimproof}
    $\Rightarrow$: Let $\pi\colon X\to\{0,1\}$ be a satisfying assignment for $\varphi$. 
    Consider a path that starts at $s=\iota_{x_i}$ and at each variable gadget either visits the vertex $T_x$ if $\pi(x)=1$ or $F_x$ if $\pi(x)=0$, and then the vertex $o_x$.
    It end in vertex $o_{x_n}$ and takes the edge to $\iota_{C_1}$. 
    Now, for each clause $C$ there is a literal $\ell\in C$ that is satisfied by $\pi$. 
    By construction, the edge $e_{\ell}^C$ is not deleted by traversing the variable gadgets, hence we use it to pass from $\iota_{C}$ to $o_C$. 
    This happens for every $C$ and eventually we arrive at $o_{C_m}=t$.
    As no edge was deleted, the path is $f$-conforming.
    
    $\Leftarrow$: 
    Let $P$ be an $f$-conforming $s$-$t$ path in $G$.
    By construction, any path from $\iota_{x_1}$ to $o_{x_n}$ uses exactly one of $T_x$ or $F_x$ for each variable $x\in X$. 
    This gives rise to an assignment of variables. 
    For each clause $C$ an edge $e_{\ell_i}^C$ among $e_{\ell_1}^C,\ldots,e_{\ell_{|C|}}^C$ is used by $P$.
    It follows that the literal $\ell_i$ is satisfied by the assignment, as otherwise the edge would be deleted.
    Thus the assignment satisfies all the clauses and $\varphi$ is satisfiable.
    But since there is no satisfying assignment for $\varphi$, there exists a clause $C$ for which all the edges $e_{\ell_1}^C,\ldots,e_{\ell_{|C|}}^C$ are deleted. Since those form an $s$-$t$ cut, there cannot be an $f$-conforming $s$-$t$ path in $(G,f)$.
\end{claimproof}
}%

By using \Cref{construction:sat} together with \Cref{claim:construction_sat_correctness} we immediately obtain the following theorem.

\begin{theorem}
    \SDpath is \NP-hard.
\end{theorem}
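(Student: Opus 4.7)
The plan is to apply the polynomial-time reduction furnished by \Cref{construction:sat} from an \NP-hard source problem, namely \sat (or equivalently \tsat), to \SDpath, and then invoke \Cref{claim:construction_sat_correctness} for correctness. Since the excerpt has already laid out both the construction and the equivalence lemma, the proof becomes a short wrap-up.

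First I would verify that \Cref{construction:sat} can be carried out in polynomial time. The variable gadget contributes $4$ vertices and $4$ edges per variable, the clause gadget contributes $2|C|$ vertices and $3|C|-2$ edges per clause $C$, and there are $n'-1$ connecting edges between consecutive variable gadgets and $m'-1$ between consecutive clause gadgets, plus a single bridge edge $\{o_{x_{n'}},\iota_{C_1}\}$. The deletion function is specified only at vertices $T_x$ and $F_x$, and each $f(T_x)$ (resp.\ $f(F_x)$) has size equal to the number of occurrences of $\neg x$ (resp.\ $x$) in $\varphi$; the total size of $f$ is therefore $\sum_{C\in\mathcal C}|C|$. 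All of this can clearly be built from $\varphi$ in time polynomial in $n'+m'$.

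Second, I would appeal directly to \Cref{claim:construction_sat_correctness}, which states that $\varphi$ is satisfiable if and only if the produced self-deleting graph $(G,f)$ admits an $f$-conforming $s$-$t$ path. Combining this equivalence with the polynomial bound on the reduction yields a polynomial-time many-one reduction from \sat to \SDpath, proving \NP-hardness.

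There is no real obstacle here: the creative work is entirely encapsulated in \Cref{construction:sat} and the matching \Cref{claim:construction_sat_correctness}. The only thing worth double-checking is that the forward direction of the equivalence uses the fact that each variable gadget admits exactly the two $\iota_x$-$o_x$ paths (through $T_x$ or through $F_x$), so a satisfying assignment translates verbatim into a valid traversal of the variable chain, and in each clause gadget the surviving literal edge $e^C_\ell$ for a satisfied literal $\ell$ provides the needed $\iota_C$-to-$o_C$ route; conversely, any $f$-conforming $s$-$t$ path induces an assignment via the $T_x/F_x$ choices and must cross each clause gadget via a non-deleted literal edge, which then witnesses a satisfied literal in that clause. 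Hence the theorem follows.
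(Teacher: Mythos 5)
Your proposal is correct and follows exactly the paper's route: the paper likewise obtains the theorem immediately from \Cref{construction:sat} together with \Cref{claim:construction_sat_correctness}, with the polynomial size of the construction noted separately in \Cref{rem:sat_is_linear}. Your additional accounting of gadget sizes and the sanity check of both directions of the equivalence are consistent with the paper's argument and add nothing that would change the approach.
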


\begin{corollary}\label{cor:hardness_fv1}
    \SDpath is \NP-hard even if the deletion function $f$ satisfies $\forall v \in V:|f(v)|\leq 1$, i.e., $\mu \leq 1$.
\end{corollary}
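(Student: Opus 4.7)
The plan is to modify \Cref{construction:sat} so that every vertex deletes at most one edge, while preserving the correctness proved in \Cref{claim:construction_sat_correctness}. In the original construction, the only vertices with $|f(v)|>1$ are $T_x$ and $F_x$ for the variables $x\in X$, and the edges in $f(T_x)\cup f(F_x)$ all lie inside clause gadgets. This suggests a ``subdivision-style'' replacement of these vertices, where a single vertex with a large deletion set is replaced by a path whose internal vertices each delete exactly one edge of the original set.

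Concretely, for each variable $x\in X$, let $f(T_x)=\{e_1,\ldots,e_p\}$ and $f(F_x)=\{e'_1,\ldots,e'_q\}$. Replace the vertex $T_x$ by a path $T_x^1,T_x^2,\ldots,T_x^p$ (keeping the edges $\{\iota_x,T_x^1\}$ and $\{T_x^p,o_x\}$ as the replacements of $\{\iota_x,T_x\}$ and $\{T_x,o_x\}$), and analogously replace $F_x$ by a path $F_x^1,\ldots,F_x^q$. Set $f(T_x^i)=\{e_i\}$ and $f(F_x^j)=\{e'_j\}$ for all valid $i,j$; all other deletion sets remain empty. The resulting self-deleting graph $(G',f')$ has $\mu_{f'}\le 1$ and the construction is clearly polynomial.

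For the equivalence, I would argue as follows. The variable gadgets of $(G',f')$ remain $4$-cycles in which each $s$-$t$ path must traverse exactly one of the two ``sides'' (the $T_x$-path or the $F_x$-path) in full; no edges of these variable gadgets are ever deleted. Hence any $f'$-conforming $s$-$t$ path in $G'$ induces a truth assignment exactly as in the proof of \Cref{claim:construction_sat_correctness}, and the cumulative set of deleted clause-gadget edges after leaving the variable gadgets equals $\bigcup_{\pi(x)=1}f(T_x)\cup\bigcup_{\pi(x)=0}f(F_x)$, identical to the original construction. Conversely, given a satisfying assignment $\pi$, take the $s$-$t$ path that fully traverses $T_x^1\cdots T_x^p$ when $\pi(x)=1$ and $F_x^1\cdots F_x^q$ when $\pi(x)=0$, and then chooses a satisfied-literal edge in each clause gadget; exactly the same edges get deleted as before, so the path is $f'$-conforming.

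The only real point of care is checking that the $s$-$t$ path is \emph{forced} to traverse the entire replacement path through $T_x$ (or $F_x$) and cannot, say, enter it, leave, and reenter. This is immediate because the internal vertices of the replacement paths have degree $2$ in $G'$, so the replacement paths behave like single edges as far as routing is concerned, and no other edges of $G'$ are incident to them. Combining this with \Cref{claim:construction_sat_correctness} yields $\varphi\in\tsat\iff (G',f',s,t)\in\SDpath$, proving \NP-hardness under the restriction $\mu\le 1$.
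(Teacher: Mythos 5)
Your proposal is correct and matches the paper's proof: the paper likewise replaces each $T_x$ (resp.\ $F_x$) by a path on $|f(T_x)|$ (resp.\ $|f(F_x)|$) vertices, each deleting one edge of the original deletion set. Your additional verification that the degree-$2$ internal vertices force full traversal and that the cumulative deletions are unchanged is exactly the (implicit) justification the paper relies on.
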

\begin{proof}
    Use \Cref{construction:sat} with the following modification. Replace the vertex $T_x$ (resp. $F_x$) by a path on $|f(T_x)|$ (resp. $|f(F_x)|$) vertices and delete one edge of the original set $f(v)$ on each vertex of the path to obtain $|f(v)|\leq 1$.
\end{proof}

\begin{remark}\label{rem:sat_is_linear}
    Note that \Cref{construction:sat} produces a graph with $O(n'+m')$ vertices and edges, where $n'$ and $m'$ are the number of variables and clauses in the original \tsat formula.
\end{remark}

\begin{apprestatable}{lemma}{lemmaStructuralPropertiesGrid}
\label{lem:structural_properties_grid}
    $2\times \ell$ grid has bandwidth $2$, treedepth at most $O(\log \ell)$ and vertex integrity at most $O(\sqrt{\ell})$.
\end{apprestatable}
\toappendix{
\sv{\lemmaStructuralPropertiesGrid*}
\begin{proof}
\begin{figure}[ht!]
        \centering
        \begin{tikzpicture}[scale=1, every node/.style={circle, draw, inner sep=2pt}]
    \node (a) at (0,0){};
    \node (b) at (1,0){};
    \node (c) at (2,0){};
    \node (d) at (3,0){};
    \node (e) at (4,0){};
    \node (f) at (5,0){};
    \node (g) at (6,0){};
    \node (h) at (7,0){};

    \draw (a)--(b);
    \draw (c)--(d);
    \draw (e)--(f);
    \draw (g)--(h);
    
    \draw[bend left] (a) to (c);
    \draw[bend right] (b) to (d);
    \draw[bend left] (c) to (e);
    \draw[bend right] (d) to (f);
    \draw[bend right] (f) to (h);
    \draw[bend left] (e) to (g);

    \node[draw=none] (dots) at (8,0){$\cdots$};
    \end{tikzpicture}
        \caption{Linear layout of bandwidth $2$ for $2\times \ell$ grid. The length-one edges resemble the vertical edges of the grid and the length-two edges are the horizontal ones. In general, the $i$-th vertex is connected to the $(i-2)$-th and $(i+2)$-th. Moreover, if $i$ is even, then it is connected to the $(i-1)$-th, if $i$ is odd, it is connected to the $(i+1)$-th.}
        \label{fig:linear_layout_bw_2}
    \end{figure}
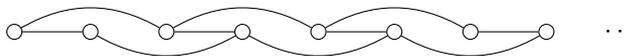
    Let $G$ be a $2\times \ell$ grid.
    \Cref{fig:linear_layout_bw_2} shows that $2\times \ell$ grid has bandwidth $2$. For treedepth, $G$ clearly doesn't contain a path on more than $|V(G)|=2^{\log(2\ell)}$ vertices, hence by the result of Hatzel et al.~\cite[Theorem 3]{Hatzel2024}, $G$ has treedepth at most $20\operatorname{log}(2\ell)=O(\log \ell)$. To see the upper bound on vertex integrity, consider partitioning the $2\times \ell$ grid into $2 \times \lfloor\sqrt{\ell}\rfloor$ grids. Consider the set $S$ consisting of two vertices connected by an edge that are first inside each block. Such set $S$ is of size at most $O(\sqrt{\ell})$ and each component of $G\setminus S$ is of size at most $2\sqrt{\ell}$, hence $\vi(G)\leq O(\sqrt{\ell})$
\end{proof}
}%

Since the resulting graph of \Cref{construction:sat} is a subgraph of a ladder of size $O(n'+m')$, it is straightforward to obtain the following corollary:
\begin{corollary}\label{cor:hardness_op_bi_mdeg3_bw2_fv1}
    \SDpath is \NP-hard even when restricted to outerplanar bipartite graphs of maximum degree $3$ and bandwidth $2$ even with $\mu \leq 1$.
\end{corollary}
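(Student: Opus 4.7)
The plan is to show that the self-deleting graph $(G,f)$ produced by \Cref{construction:sat}, together with the path-replacement modification from \Cref{cor:hardness_fv1}, is a subgraph of a $2\times\ell$ ladder $L$ with $\ell=O(n'+m')$. Once this embedding is in place, all four desired properties follow at once: $L$ is outerplanar (it has a planar drawing with every vertex on the outer face), bipartite (two-color by the parity of row$+$column), has maximum degree $3$, and has bandwidth at most $2$ by \Cref{lem:structural_properties_grid}; each of these properties is preserved under taking subgraphs. Correctness of the reduction and the bound $\mu\le 1$ are already supplied by \Cref{claim:construction_sat_correctness} and the proof of \Cref{cor:hardness_fv1}.

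I would build the embedding gadget by gadget along consecutive column-blocks of $L$. Each clause gadget $C_j$ is a $2\times |C_j|$ grid and is placed directly on $|C_j|$ consecutive columns with $\iota_{C_j}$ and $o_{C_j}$ sitting at opposite corners. Each variable gadget is a cycle (of length $4$ in \Cref{construction:sat}, or longer after the path-replacement of \Cref{cor:hardness_fv1}); I embed it as the outer cycle of a $2\times k$ sub-ladder with $\iota_x$ and $o_x$ at diagonally opposite corners. If the replacement produces an odd cycle (which would break bipartiteness), I insert one dummy vertex with empty deletion set into the longer of the two internal paths; this preserves the reduction because both sides of the cycle still represent the two truth assignments for $x$. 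Finally, by independently choosing a vertical flip of each gadget, every connector edge $\{o_{x_i},\iota_{x_{i+1}}\}$, $\{o_{x_{n'}},\iota_{C_1}\}$ or $\{o_{C_j},\iota_{C_{j+1}}\}$ can be made to join two vertices sitting in adjacent columns of $L$ on the same row, and hence coincides with a horizontal ladder edge.

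The only mildly technical step, and thus the main (though routine) obstacle, is verifying that the orientation choices can always be made consistently along the whole chain. Since each gadget has $\iota$ and $o$ at diagonally opposite corners of its column block and we have free choice of which ladder row plays the role of ``top,'' the alignment can be fixed greedily from left to right, with at most one buffer column inserted at any junction where a parity mismatch would otherwise appear. The resulting embedding has length $O(n'+m')$ because $|f|=O(m')$ and at most $O(n')$ parity-dummy vertices are added, completing the reduction.
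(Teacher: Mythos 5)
Your proposal is correct and takes essentially the same route as the paper, which justifies this corollary only by the one-line observation that the graph from \Cref{construction:sat} (with the modification of \Cref{cor:hardness_fv1}) is a subgraph of a $2\times\ell$ ladder and then inherits outerplanarity, bipartiteness, maximum degree $3$, and bandwidth $2$ (via \Cref{lem:structural_properties_grid}) as subgraph-monotone properties; your write-up supplies the embedding details the paper leaves implicit. One small imprecision: placing $\iota_x$ and $o_x$ at diagonally opposite corners of a $2\times k$ sub-ladder forces the two internal paths of the variable cycle to have \emph{equal} length, so you must pad the \emph{shorter} deletion path with dummy vertices until the two sides match (or differ by exactly two, with $\iota_x$ and $o_x$ in the same row), rather than inserting a single parity dummy into the longer path; this costs only $O(|f|)=O(m')$ additional vertices and does not affect the $O(n'+m')$ size bound.
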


\begin{apprestatable}{corollary}{corollaryHardnessUintLadderBlock}\label{cor:hardness_uint_ladder_block}
    \SDpath remains \NP-hard even when restricted to the classes of unit interval graphs, ladder graphs, or block graphs even when $\mu \leq 1$.
\end{apprestatable}
\toappendix{
\sv{\corollaryHardnessUintLadderBlock*}
\begin{proof}
    For unit interval graphs, observe that the graph~$G$ resulting from \Cref{construction:sat} is a subgraph of some unit interval graph~$G'$.
Let $G''$ be the graph obtained from~$G'$ by attaching a sufficiently long path to $s$ and let $s'$ denote the other endpoint.
The vertices on the $s'$-$s$-path simply delete one by one edges from~$E(G')\setminus E(G)$.
Any $f$-conforming $s'$-$t$-path must first traverse the $s'$-$s$-path and upon arrival at~$s$, only edges from~$G$ are available.

The idea for ladder graphs is similar.
The graph~$G$ resulting from \Cref{construction:sat} is a subgraph of a ladder graph~$G'$.
Add sufficiently long part before~$s$ that takes care of pruning~$G'$ down to~$G$.

For block graphs, use the same idea. Fill in all missing edges in the blocks of the graph $G$ resulting from \Cref{construction:sat}. Call this graph $G'$. Create graph $G''$ from $G'$ by appending a sufficiently long $s'$-$s$ path before $s$, where each vertex deletes the extra edges added to~$G$.
\end{proof}
}

We can also easily obtain hardness on cliques by adding all non-existent incident edges to the deletion set of every vertex. However, as we will see later, we cannot hope to upper bound~$\mu$ by a constant in cliques and preserve \NP-hardness (\Cref{obs:fpt_on_cliques_worse,cor:fpt_on_cliques_singleexp}).

\begin{apprestatable}{corollary}{corollaryHardnessCliques}\label{cor:hardness_cliques}
    \SDpath is \NP-hard even when restricted to cliques.
\end{apprestatable}
\toappendix{
\sv{\corollaryHardnessCliques*}
\begin{proof}
    Reduce from \SDpath, which we know is \NP-hard by \Cref{cor:hardness_fv1}. Given instance $(G,f,s,t)$ of \SDpath, we create an equivalent instance $(G',f',s,t)$ where $G'$ is a clique. We let $V(G')=V(G)$ and $E(G')=\binom{V(G')}{2}$. We let $f'(v)=f(v)\cup \{\{v,w\}\mid \{v,w\}\notin E(G)\}$ for every $v\in V(G)$. In other words, every vertex deletes incident edges that were nonexistent in the original graph $G$. Clearly there is an $f$-conforming $s$-$t$ path in $(G,f)$ if and only if there is an $f'$-conforming $s$-$t$ path in $(G',f')$.
\end{proof}
}

\subparagraph{Cactus graphs.}
\Cref{cor:hardness_op_bi_mdeg3_bw2_fv1,cor:hardness_uint_ladder_block} show that even if we allow a slight generalization of trees, the problem becomes \NP-hard. One particular non-trivial case that allows for a polynomial-time algorithm is the class of cactus graphs. We show that when restricted to cactus graphs, \SDpath becomes polynomial-time solvable (\Cref{thm:cactus_algorithm}). Interestingly, deciding an existence of an $f$-conforming path is easy, however \SSDpath remains \NP-hard even in cactus graphs (\Cref{thm:hardness_cactus_shortest}). 
Note that in cactus graphs each edge lies on at most $1$ cycle, whereas the graph resulting from \Cref{construction:sat} has each edge on at most $2$ cycles so the problem becomes \NP-hard in this case.

\begin{apprestatable}{theorem}{theoremCactusAlgorithm}\label{thm:cactus_algorithm}
    \SDpath can be solved in linear time if the underlying graph is a cactus.
\end{apprestatable}
\sv{
\begin{proof}[Proof Sketch]
    Consider the block-cut tree of $G$ and note that we can only restrict ourselves to the part of $G$ that contains the $s$-$t$ path in the block-cut tree of $G$. This part consists of cycles and bridges. In each cycle we have two options to choose from. Each choice possibly forbids some choices in the future. This can be encoded as an implication of the type \emph{if we choose this path, then we cannot choose some other path in the future}. We can thus reduce the problem to \twsat. Details can be found in the appendix.
\end{proof}
}

\toappendix{
\sv{\theoremCactusAlgorithm*}
\begin{proof}
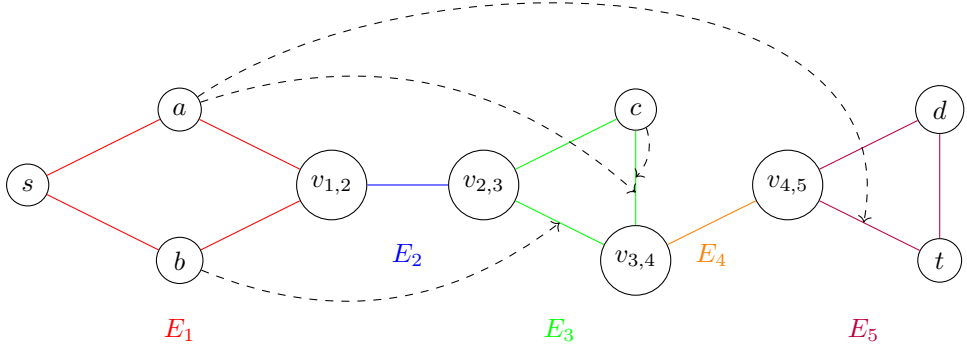
\begin{figure}[ht!]
    \centering
\begin{tikzpicture}[scale=1, every node/.style={circle, draw, minimum size=5mm}]
    \node (A) at (0,0) {$s$};
    \node (B) at (2,1) {$a$};
    \node (C) at (4,0) {$v_{1,2}$};
    \node (D) at (2,-1) {$b$};
    \node (E) at (6,0) {$v_{2,3}$};
    \node (F) at (8,1) {$c$};
    \node (G) at (8,-1) {$v_{3,4}$};
    \node (H) at (10,0) {$v_{4,5}$};
    \node (I) at (12,1) {$d$};
    \node (J) at (12,-1) {$t$}; 

    \draw[red] (A) -- (B) -- (C) -- (D) -- (A);
    \node[draw=none,below,red] at (2,-1.5) {$E_1$}; %

    \draw[blue] (C) -- (E);
    \node[draw=none,below,blue] at (5,-0.5) {$E_2$}; %

    \draw[green] (E) -- (F) -- (G) -- (E);
    \node[draw=none,below,green] at (7,-1.5) {$E_3$}; %

    \draw[orange] (G) -- (H);
    \node[draw=none,below,orange] at (9,-0.5) {$E_4$}; %

    \draw[purple] (H) -- (I) -- (J) -- (H);
    \node[draw=none,below,purple] at (11,-1.5) {$E_5$}; %

    \draw[->,dashed,bend left,out=45,in=90] (B) to (11,-0.5);
    \draw[->,dashed,bend left] (B) to (8,-0.1);
    \draw[->,dashed,bend right] (D) to (7,-0.5);
    \draw[->,dashed,bend left] (F) to (8,0.1);

\end{tikzpicture}
    \caption{Situation in the proof of \Cref{thm:cactus_algorithm}. The dashed arrows represent the deletion sets~$f(v)$. We have $P_1^1=(s,b,v_{1,2}),P_1^2=(s,a,v_{1,2}),P_3^1=(v_{2,3},v_{3,4}),P_3^2=(v_{2,3},c,v_{3,4}),P_5^1=(v_{4,5},t),P_5^2=(v_{4,5},d,t)$. The resulting \textsc{$2$-Sat} formula is: $\varphi = (x_1\Rightarrow \neg x_3)\wedge(\neg x_1\Rightarrow x_3)\wedge (\neg x_1\Rightarrow \neg x_5)\wedge (\neg x_3 \Rightarrow x_3)$. Note that the last clause has a special feature, since the $e=\{c,v_{3,4}\}$ lies on the same path as the vertex $c$ that deletes it. This immediately forbids the use of $P_3^2$. This corresponds to setting $x_3$ to \texttt{true} as the only option to satisfy $\neg x_3 \Rightarrow x_3$. A possible (and in fact unique) satisfying assignment for $\varphi$ is $x_1\mapsto \texttt{false},x_3\mapsto \texttt{true},x_5\mapsto \texttt{false}$ which corresponds to the (unique) $f$-conforming $s$-$t$ path $(s,a,v_{1,2},v_{2,3},v_{3,4},v_{4,5},d,t)$. }
    \label{fig:cactus_algorithm}
\end{figure}
    Let $(G,f)$ be a self-deleting graph, $G$ a cactus, and $s,t\in V(G)$ two vertices. We design a linear-time algorithm that either finds an $f$-conforming $s$-$t$ path or reports that none exists. Since $G$ is a cactus, each block is either a cycle or a bridge. If we contract the cycles into single vertices, we obtain a tree. 
    If there is an $f$-conforming $s$-$t$ path, then the only candidate is the unique path from $s$ to $t$ in the tree. 
    Going back to~$G$, the $s$-$t$ path is uniquely determined up to choosing which part of the cycles we traverse. 
    Let us focus on this part of $G$ consisting of sequence of bridges and cycles connecting $s$ and $t$. See \Cref{fig:cactus_algorithm} for a visualization. 
    Let $B_1, \ldots, B_k$ be the blocks in the sequence with $s \in V(B_1)$ and $t \in V(B_k)$.
    Let $V_i =V(B_i)$ 
    and denote $v_{i,i+1}$ the unique cut vertex in $V_i\cap V_{i+1}$ and further let $s=v_{0,1}$ and $t=v_{k,k+1}$. We regard blocks~$B_i$ that are cycles as a union of two vertex-disjoint paths $P_i^1$ and $P_i^2$ from $v_{i-1,i}$ to $v_{i,i+1}$.
    
    Let us also define a partial order $\leq$ on the vertices induced by the blocks as follows. Let $u\in V_i,v\in V_j$ be two vertices. If $i<j$, then $u\leq v$ and if $i=j$, then $u\leq v$ if and only if $u$ is a (not necessarily direct) predecessor of $v$ on one of the paths $P_i^1$ or $P_i^2$. For an edge $e=\{u,v\}$ and vertex $w$, by $e\leq w$ we mean $u\leq w \wedge v\leq w$.

    \begin{claim}\label{obs:cactus_alg_1}
        We can ignore edges in deletion sets $e\in f(v)$ such that $e\ngeq v$.
    \end{claim}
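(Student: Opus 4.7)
The plan is to show that every deletion pair $(v,e)$ with $e \in f(v)$ and $e \ngeq v$ is vacuous, in the sense that no $f$-conforming $s$-$t$ path can ever be blocked by such a constraint, so it may be removed from $f$ without changing the set of solutions. The key preliminary observation I would rely on is that any $s$-$t$ path in the restricted portion of $G$ must traverse the blocks $B_1,\ldots,B_k$ in order, and inside each cycle block $B_i$ it picks exactly one of the two internal paths $P_i^1,P_i^2$. Consequently the order in which the path visits vertices is a linear extension of $\leq$, and if the path traverses an edge $e$ then both endpoints of $e$ are consecutive in this order.

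I would then do a case analysis on the block $B_i$ containing both endpoints of $e$ (well-defined since $G$ is a cactus) relative to the block $B_j$ containing $v$. If $i>j$, both endpoints of $e$ satisfy $\geq v$, so $e \geq v$, contradicting the hypothesis. If $i<j$, both endpoints of $e$ lie strictly before $B_j$, so any traversal of $e$ happens before $v$ is visited, and the constraint is vacuous. For the remaining case $i=j$, I would split according to whether $e$ and $v$ lie on the same internal path of $B_i$: if on different paths, any $s$-$t$ path through $B_i$ uses only one of $P_i^1,P_i^2$, so it either visits $v$ or traverses $e$ but never both; if on the same path $P_i^l$, then $e \ngeq v$ forces at least one endpoint of $e$ to strictly precede $v$ on $P_i^l$, so $e$ is either traversed strictly before $v$ or traversed to arrive at $v$, and in either case the check in the definition of $f$-conforming does not apply to $v$.

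The main obstacle will be the corner cases where $v$ is a cut vertex of $B_i$, namely $v\in\{v_{i-1,i}, v_{i,i+1}\}$, because such $v$ lies on both internal paths simultaneously. Unwrapping the definition of $\leq$, if $v=v_{i-1,i}$ then every edge of $B_i$ has both endpoints $\geq v$, so $e \geq v$ and the hypothesis already excludes this subcase; if $v=v_{i,i+1}$ then every edge of $B_i$ is traversed before the path reaches $v$ along any choice of internal path, so the deletion again imposes no real restriction. With these edge cases handled, the case analysis is complete and the claim follows.
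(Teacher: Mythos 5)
Your proof is correct and rests on exactly the same observation as the paper's (one-line) argument: the visit order of any $s$-$t$ path is consistent with the partial order $\leq$, so a vertex $v$ can only be visited at or before the traversal of an edge $e$ if $e\geq v$. The paper leaves it at that, while you spell out the block-by-block case analysis and the cut-vertex corner cases; this is just a more explicit verification of the same idea.
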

    \begin{claimproof}
        Observe that $s$-$t$ paths are in one-to-one correspondence with chains in the partial order $\leq$. Hence if both $e$ and $v$ appear in some $s$-$t$ path, then $e$ will precede $v$.
    \end{claimproof}
    From now on, we shall assume that for every $e\in f(v)$ we have $e\geq v$.
    \begin{claim}\label{obs:cactus_alg_2}
        We can immediately delete edges in $f(v_{i,i+1})$ for $i\in[k]\cup \{0\}$.
    \end{claim}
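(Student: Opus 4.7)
The plan is to show that every $f$-conforming $s$-$t$-path must pass through each cut vertex $v_{i,i+1}$ \emph{before} traversing any edge it deletes, which allows us to safely delete these edges from~$G$ upfront without changing the set of valid paths.

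First, I will recall from the setup that any $s$-$t$-path in the cactus must visit every cut vertex $v_{0,1}=s,\, v_{1,2},\, \ldots,\, v_{k-1,k},\, v_{k,k+1}=t$ in this exact order, because these are precisely the cut vertices separating $s$ from $t$ in the tree of blocks. Next, by \Cref{obs:cactus_alg_1}, for a fixed $i$ it suffices to consider only edges $e\in f(v_{i,i+1})$ with $e\geq v_{i,i+1}$ in the partial order~$\leq$; every other edge of $f(v_{i,i+1})$ can be discarded.

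Now take such an edge $e\in f(v_{i,i+1})$ with $e\geq v_{i,i+1}$. The definition of~$\leq$ forces $e$ to lie in some block $B_j$ with $j\geq i+1$, or in $B_{i+1}$ on the portion strictly after~$v_{i,i+1}$. Hence on any $s$-$t$-path $P$ that uses~$e$, the vertex $v_{i,i+1}$ is visited strictly before $e$ is traversed, so $e\in f(v_{i,i+1})$ would disqualify $P$ from being $f$-conforming. Conversely, a path that avoids $e$ is unaffected by its removal. Therefore deleting every such $e$ from $G$ preserves exactly the set of $f$-conforming $s$-$t$-paths, which is what the claim asserts.

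I do not expect a genuine obstacle here: the argument is essentially a bookkeeping verification once \Cref{obs:cactus_alg_1} and the linear sequence of blocks from~$s$ to~$t$ are in place. The only point that requires a little care is confirming that $e\geq v_{i,i+1}$ really implies $e$ lies strictly after $v_{i,i+1}$ along every $s$-$t$-path (including when $e$ sits in $B_{i+1}$ itself), which follows directly from the way the partial order was defined on the two internally disjoint paths $P_{i+1}^1,P_{i+1}^2$ of the block $B_{i+1}$.
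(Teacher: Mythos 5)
Your argument is correct and matches the paper's own proof: both rely on the fact that every $s$-$t$-path passes through each cut vertex $v_{i,i+1}$, so (after the reduction of \Cref{obs:cactus_alg_1} to edges $e\geq v_{i,i+1}$) any such edge is inevitably deleted before it could be traversed, and thus can be removed upfront. Your version just spells out the bookkeeping that the paper leaves implicit.
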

    \begin{claimproof}
        Any $s$-$t$ path will pass through the cut vertices $v_{i,i+1}$, hence any edges $e\geq v_{i,i+1}$ will be inevitably deleted and hence no $f$-conforming $s$-$t$ path can use these edges. Recall that we already assume that $e\geq v$ for every $e\in f(v)$.
    \end{claimproof}

    \begin{claim}\label{obs:cactus_alg_3}
        If there is a vertex $v$ and a bridge $e\in E_i$ such that $e\in f(v)$ (and $e\geq f(v)$), then we can safely remove vertex $v$.
    \end{claim}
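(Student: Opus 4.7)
The plan is to show that under the hypothesis, no $f$-conforming $s$-$t$ path can pass through the vertex $v$, which makes deleting $v$ from the graph a safe reduction. The key fact I would use is that since $e$ is a bridge lying in block $B_i$ along the unique block-cut sequence from $s$ to $t$, every $s$-$t$ walk (and in particular every $s$-$t$ path) in $G$ must traverse $e$: removing $e$ disconnects $s$ from $t$.

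Given that, I would argue by contradiction. Suppose some $f$-conforming $s$-$t$ path $P$ contains $v$. Then $P$ also uses $e$. Since we maintain the invariant established in the first claim that every remaining edge $e \in f(v)$ satisfies $e \geq v$, and since $s$-$t$ paths correspond to chains in the partial order $\leq$, the vertex $v$ must be visited strictly before $e$ is traversed along $P$. But then, the moment $v$ is visited, the edge $e \in f(v)$ is removed, contradicting the $f$-conformity of $P$. Hence no $f$-conforming $s$-$t$ path contains $v$, and removing $v$ preserves the set of $f$-conforming $s$-$t$ paths.

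One small detail to verify is that $v$ is not a cut vertex $v_{j,j+1}$, where a naive removal of the vertex would disconnect the graph. But such cases have already been handled by the second claim, which simply deletes the offending edges from $f(v_{j,j+1})$ rather than removing the vertex. Therefore we may assume $v$ lies in the interior of a cycle block $B_j$, on one of the two internal paths $P_j^1$ or $P_j^2$; removing $v$ then merely eliminates the path through that side of the cycle and forces any candidate solution to traverse the other side of $B_j$, without disconnecting $s$ from $t$. I do not expect any step here to be the main obstacle: the argument is short and local, and the real work of the linear-time algorithm will be in collecting these reductions into a clean encoding into \textsc{$2$-Sat}.
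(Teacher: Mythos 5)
Your proof is correct and follows essentially the same argument as the paper: any $f$-conforming $s$-$t$ path must cross the bridge $e$, and since $e \geq v$ the path would visit $v$ before traversing $e$, deleting it first — so no conforming path contains $v$. Your extra remark that $v$ is an internal vertex of a cycle block (so its removal does not disconnect $s$ from $t$) is a reasonable detail the paper leaves implicit.
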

    \begin{claimproof}
        No $f$-conforming $s$-$t$ path can use the vertex $v$ as it would otherwise delete the bridge~$e$ and there would be no $f$-conforming $v$-$t$ path in the graph, hence no $s$-$t$ path passing through $v$.
    \end{claimproof}

    We can now assume that $f(v_{i,i+1})=\emptyset$. It remains to resolve the deletion sets of the internal vertices of the paths $P_i^1$ and $P_i^2$. By \Cref{obs:cactus_alg_3} every such vertex deletes only edges~$e$ that lie on some cycle and $e\geq v$.

    We now transform \SDpath to \textsc{$2$-Sat}. We introduce a variable $x_i$ for every cycle $B_i$. The semantics of the variable is as follows. If $x_i$ is set to true, then the path should use $P_i^1$, otherwise the path should use $P_i^2$ on the cycle $B_i$. We create clauses as follows. 
    For each $i\in[k]$ and each internal vertex $v$ of  $P_i^1$ and $j\geq i$ we add the clause $x_i\Rightarrow \neg x_j$ if $f(v)\cap E(P_j^1)\neq \emptyset$ and $x_i\Rightarrow x_j$ if $f(v)\cap E(P_j^2)\neq \emptyset$. Symmetrically for internal vertices $v$ of $V(P_i^2)$ we add clauses $\neg x_i\Rightarrow \neg x_j$ if $f(v)\cap E(P_j^1)\neq \emptyset$ and $\neg x_i \Rightarrow x_j$ if $f(v)\cap E(P_j^2)\neq\emptyset$.

    It remains to say that the entire preprocessing and construction of the \textsc{$2$-Sat} instance takes $O(|V|+|E|+|f|)$ time. The resulting \textsc{$2$-Sat} instance has at most $|E|$ variables and $\sum_{v\in V}|f(v)|$ clauses and can be thus solved in time $O(|E|+|f|)$ by standard algorithms~\cite{AspvallPlassTarjan1979}.
\end{proof}
}%

\begin{apprestatable}{theorem}{theoremHardnessCactusShortest}\label{thm:hardness_cactus_shortest}
    \SSDpath is \NP-hard even when restricted to cactus graphs, and $\mu\leq 1$.
\end{apprestatable}
\sv{
\begin{proof}[Proof Sketch.]
We provide a polynomial reduction from \textsc{Independent Set}. For each vertex we introduce one cycle on the designated $s$-$t$-path (in the block-cut tree). One of the paths corresponds to taking the vertex and the other to not taking it. Inclusion of a vertex forbids inclusion of any of its neighbors by deleting an edge of the appropriate path. Crucially, the inclusion path is shorter, so the desired length of the path forces an appropriate number of vertices to be included in the independent set.
\end{proof}
}

\toappendix
{
\sv{\theoremHardnessCactusShortest*}
\begin{proof}
    We provide a polynomial reduction from \textsc{Independent Set in Cubic Graphs}, which is NP-complete~\cite{GareyJS76}.
    Here, we are given a cubic ($3$-regular) graph $G$ and an integer $k$ and the question is whether $G$ contains an independent set of size (at least)~$k$.
    We construct a cactus graph $G'$ as follows.
    For each vertex $v$ of $G$ we introduce a gadget consisting of two vertices $s^v$ and $t^v$ and two $s^v$-$t^v$-paths $P^v_0$ and $P^v_1=(s^v,a_1^v,a_2^v,a_3^v,t^v)$ of length $5$ and~$4$, respectively.
    We arbitrarily order the vertices of $G$ as $V(G)=\{v_1, v_2, \ldots, v_n\}$, for each $i \in \{1, \ldots, n-1\}$ we identify $t^{v_i}$ with $s^{v_{i+1}}$, and let $s=s^{v_1}$ and $t=t^{v_n}$.
    This finishes the construction of $G'$.
    Note that it is a cactus graph.
    For each vertex $v \in V(G)$, suppose that $N(v)=\{u_1, u_2, u_3\}$.
    Then we let $f(a_j^v)=\{\{s^{u_j},a^{u_j}_1\}\}$ for each $j \in \{1,2,3\}$.
    We leave $f(v)$ empty for all other vertices of $G'$.
    
    We claim that $G$ has an independent set of size $k$ if and only if $G'$ has an $f$-conforming $s$-$t$-path of length at most $5n-k$.
    
    We start with the only if part. 
    Let $S \subseteq V(G)$ be an independent set in $G$ of size $k$.
    We construct an $s$-$t$-path $P$ by taking for each $i \in \{1, \ldots, n\}$ the path $P_0^{v_i}$ if $v_i \notin S$ and the path $P_1^{v_i}$ if $v_i \in S$.
    As we have taken $(n-k)$ times a path of length $5$ and $k$ times a path of length~$4$, the total length of path $P$ is $5n-k$.
    We claim that $P$ is $f$-conforming.
    Indeed, suppose that there is $a_j^v \in V(P)$ such that $f(a_j^v)=\{\{s^{u_j},a_1^{u_j}\}\}$ and the edge $\{s^{u_j},a_1^{u_j}\}$ also belongs to path $P$.
    This implies that $v \in S$ and also $u_j \in S$, contradicting that $S$ is an independent set, as $u_j \in N(v)$.
    
    Now we turn to the if part.
    Suppose that $P$ is an $f$-conforming $s$-$t$-path of length at most $5n-k$ in $G'$.
    For each $i \in \{1, \ldots, n\}$, $P$ has to use either path $P^{v_i}_0$ or $P^{v_i}_1$ within the gadget of vertex $v_i$ to reach from $s^{v_i}$ to $t^{v_i}$.
    Namely, as $P^{v_i}_0$ has length $5$, $P^{v_i}_1$ has length~$4$, and $P$ has length at most $5n-k$, $P$ has to use $P^{v_i}_1$ for at least $k$ indices $i$.
    We let $S$ be the set of vertices $v_i \in V(G)$ such that $P$ uses $P^{v_i}_1$.
    By the previous argument, $S$ is of size at least $k$.
    We claim that $S$ is independent in $G$.
    Suppose it is not, namely there are $v_i$ and $v_{i'}$ with $i < i'$ such that $v_i \in S$, $v_{i'} \in S$ and $\{v_i,v_{i'}\} \in E(G)$.
    Then $v_{i'} \in N(v_i)$ and there is $j \in \{1,2,3\}$ such that $f(a_j^{v_i})=\{\{s^{v_{i'}},a_1^{v_{i'}}\}\}$, contradicting that $P$ is $f$-conforming.
    Therefore, $S$ is indeed independent.
    
    As the reduction can be clearly carried out in polynomial time, this finishes the proof.
\end{proof}
}%

\appsection{Parameterized complexity}{sec:para_complexity}
\label{sec:parameterized}

In this section we focus on parameterized complexity of (\textsc{Shortest}) \SDpath. Note that the problem is \paraNP-hard parameterized by bandwidth or maximum degree (hence also parameterized by treewidth) due to \Cref{cor:hardness_op_bi_mdeg3_bw2_fv1}. Moreover, it is also \paraNP-hard parameterized by any parameter that is constant on cliques (\Cref{cor:hardness_cliques}).

We show that \SSDpath is \W{1}-complete parameterized by $k$ (the number of vertices of the sought path) (\Cref{thm:w1_completeness_ssdpath}) and that \SDpath is \W{1}-hard parameterized by the vertex cover number. We then show that \SDpath is in fact \W{1}-complete for parameters vertex cover number, vertex integrity, treedepth, distance to linear forest, and feedback vertex set number (\Cref{thm:w1completeness_vc_dtlf_fvsn,thm:w1completeness_vi_td}).

The last hope for positive algorithmic results lies in the parameter feedback edge number ($\fen$). Here, we observe that \SDpath parameterized by $\fen$ can be solved in $O(2^{\fen}(n+m+|f|))$ time (\Cref{cor:fpt_by_fen}). Later, in \Cref{sec:kernels} we show that \SDpath even admits kernel with $O(\fen)$ vertices and edges (however, this does not yield as fast algorithm). Refer to \Cref{fig:hierarchy_structural} for a graphical overview of the results for structural parameters alone.

Then, %
in order to obtain algorithmic results, we combine structural parameters with the parameter $\mu=\max_{v\in V}|f(v)|$. 
While \SSDpath is \paraNP-hard parameterized by $\mu$ alone (\Cref{cor:hardness_fv1}), and \W{1}-complete parameterized by $k$ (\Cref{thm:w1_completeness_ssdpath}), the problem becomes \FPT parameterized by $k$ and $\mu$ combined (\Cref{thm:fpt_by_k_and_mu_better}). This also yields many \FPT results for \SDpath parameterized by structural parameters and $\mu$ combined (\Cref{thm:fpt_by_dense_and_mu}). 
Refer to \Cref{fig:hierarchy_structural_and_mu} for an overview of such results. %

\subsection{\W{1}-completeness for length of the path}

We begin with a reduction from \textsc{Multicolored Clique} to \SDpath. We developed this construction independently, although later we discovered that it is similar to the one of Bodlaender et al.~\cite[Theorem 9]{BodlaenderJK13} for a related problem.
\begin{construction}\label{construction:mcc}
    Let $G=(V,E)$ be the input graph for \textsc{Multicolored Clique} and let $V=V_1\cup \cdots \cup V_k$ be the partition of $V$ into $k$ color classes. We build an instance $(G',f,s,t)$ of \SDpath as follows. Create $k+1$ guard vertices $g_0,g_1,\ldots,g_k$. For every $i\in[k]$ and $v\in V_i$ add a vertex $y_v$ and connect it to $g_{i-1}$ and $g_i$ by edges $e_1^v=\{g_{i-1},y_v\}$ and $e_2^v=\{y_v,g_i\}$. We denote by $P_v^i$ the path $(g_{i-1},e_1^v,y_v,e_2^v,g_i)$. This completes the description of the graph $G'$. We let $s=g_0$ and $t=g_k$. It remains to specify the deletion sets. The only vertices with nonempty deletion sets will be the vertices $y_v$ on the paths $P_v^i$. We let $f(y_v)=\bigcup \{e_1^w,e_2^w \mid w\in V_j,j\neq i, \{w,v\}\notin E(G)\}$.
\end{construction}

\begin{apprestatable}{lemma}{lemmaCorrectnessMCC}\label{lem:correctness_mcc}
    Let $(G, V_1, \ldots, V_k)$ be an instance of \textsc{Multicolored Clique} and $(G',f,s,t)$ be the instance of \SDpath obtained from it by \Cref{construction:mcc}. There is a multicolored clique in $G$ if and only if there is an $f$-conforming $s$-$t$ path in $G'$.
\end{apprestatable}

\toappendix
{
\sv{\lemmaCorrectnessMCC*}
\begin{proof}
    $\Rightarrow$: If $v_{i_1},v_{i_2},\ldots,v_{i_k}$ induces a multicolored clique in $G$, then the $f$-conforming $s$-$t$ path is created by joining the guard vertices with the paths $P_{v_{i_1}}^1,P_{v_{i_2}}^2\ldots,P_{v_{i_k}}^k$. We have $\{v_{i_a},v_{i_b}\}\in E(G)$ for any distinct $a,b\in[k]$, no edges of the paths $P_{v_{i_j}}$ are deleted by the preceding vertices, hence the resulting path is $f$-conforming.

    $\Leftarrow$: Let $P$ be a $f$-conforming $s$-$t$ path. By construction, it necessarily consists of the guard vertices joined by the paths $P_{v}$ and for each $i\in[k]$ there is exactly one subpath $P_{v}$ for $v\in V_i$.
    Let $S = \bigcup_{i\in[k]}\{v\in V_i\mid \text{the segment $P_v$ is contained in $P$}\}$.
    If $u \in S \cap V_i$ was not adjacent to $v \in V_j \cap S$ for some $j > i$, then $E(P_v) \subseteq f(u)$, contradicting $P$ being $f$-conforming.
    Hence $S$ is a multicolored clique in $G$.
\end{proof}
}%

Note that in the instance $(G',f,s,t)$ from \Cref{construction:mcc}, any $f$-conforming $s$-$t$ path has at most $2k+1$ vertices. We immediately obtain \W{1}-hardness of \SSDpath
w.r.t. $k$.

\begin{apprestatable}{theorem}{theoremWOneCompletenessSSDpath}\label{thm:w1_completeness_ssdpath}
    \SSDpath is \W{1}-complete w.r.t.\ $k$.
\end{apprestatable}
\toappendix
{
\sv{\theoremWOneCompletenessSSDpath*}
\begin{proof}
    The hardness part follows from \Cref{construction:mcc} and \Cref{lem:correctness_mcc} since any $f$-conforming $s$-$t$ path in $(G',f)$ has at most $2k+1$ vertices. It remains to prove \W{1} membership. We provide a parameterized reduction from \SSDpath to \textsc{Multicolored Clique}. Let $(G=(V,E),f,s,t,k)$ be an instance of \SSDpath. We create an instance of \textsc{Multicolored Clique}  $(G',k')$ as follows. First, enhance the set $E$ with loops on each vertex, i.e., $E^*=E\cup \{e_{vv}\mid v\in V\}$, where $e_{vv}$ is the loop on vertex $v$ and denote $G^*=(V,E^*)$. The graph $G'$ is built as follows:

    \begin{enumerate}[(a)]
        \item Create $k$ copies of the vertex set $V$, denoted $V_1,V_2,\ldots,V_{k}$ and $k-1$ vertex sets $W_1,W_2,\ldots, W_{k-1}$ where each set corresponds to the set of edges $E^*$. We refer to sets $V_i$ as \emph{vertex layers} and to $W_i$ as \emph{edge layers}. 
        For vertex $v\in V$ denote $x_v^i\in V_i$ the vertex corresponding to $v$ in the $i$-th vertex layer and for $e\in E^*$ denote the corresponding vertex in the edge layer $W_i$ by $y_e^{i}$.
        \label{cons1}
        \item Connect the vertex layers into a complete $k$-partite graph: add an edge between each $x_u^i,x_v^j,i\neq j,u,v\in V$.
        \label{cons2}
        \item Connect the edge layers into a complete $(k-1)$-partite graph: add edge between each $y_e^i,y_f^j,i\neq j,e,f\in E^*$.
        \label{cons3}
        \item For each edge layer $W_i$ and $j>i+1$ connect $y_e^i$ to $x_v^j$ and $y_f^j$ for any $v\in V,e,f\in E^*$.
        \label{cons4}
        \item For each vertex layer $V_i$ and  $j>i+1$ connect $x_v^i$ to $x_u^j$ and $y_e^j$ for any $u,v\in V,e\in E^*$
        \label{cons5}
        \item Now define the edges between consecutive layers. Locally, $V_i\cup W_i$ or $V_i\cup W_{i-1}$ resemble the incidence graph of vertices and edges of $G$. 
        I.e., connect $x_v^i\in V_i$ to $y_e^i\in W_i$ and to $y_e^{i-1}\in W_{i-1}$ if $v\in e$.
        \label{cons6}
        \item Now incorporate the deletions. For any $x_v^i\in V_i$ and $j\geq i$ delete the edge from $x_v^i$ to $y_e^j$ if and only if $e\in f(v)$.
        \label{cons7}
        \item Remove all vertices except $x_s^1$ from $V_1$ and all vertices except $x_t^k$ from $V_k$.
        \label{cons8}
    \end{enumerate}
    \newcommand{\stepref}[1]{\itemstyle{\hyperref[#1]{(\ref{#1})}}}
    This finishes the construction of the instance $(G',k',V_1, \ldots, V_k, W_1, \ldots, W_{k-1})$ of \textsc{Multicolored Clique}. We have $k'=2k-1$ and $V(G')=V_1\cup V_2\cup \cdots V_k \cup W_1\cup \cdots W_{k-1}$ is the partition of vertices.

    \begin{claim}
        If there is an $f$-conforming $s$-$t$ path on at most $k$ vertices in $G$, then there is a multicolored clique in $G'$.
    \end{claim}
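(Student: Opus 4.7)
\begin{claimproof}[Proof proposal]
The plan is to exhibit an explicit candidate clique of size $k'=2k-1$ in $G'$ and verify, step by step, that every pair of its vertices is adjacent, with the $f$-conformity of the input path being used exactly to defeat the deletions of rule \stepref{cons7}.

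Let $P=(v_1=s,e_1,v_2,e_2,\ldots,e_{\ell-1},v_\ell=t)$ be an $f$-conforming $s$-$t$ path in $G$ with $\ell\le k$. First, I would normalize $P$ to have exactly $k$ vertices by padding with the loop at $t$: set $v_i:=t$ for $\ell<i\le k$ and $e_i:=e_{tt}\in E^*$ for $\ell\le i\le k-1$. This is precisely why loops were added to $E^*$ in \stepref{cons1}. The padding is harmless for conformity since loops do not belong to $E$ and therefore cannot appear in any deletion set. Now define
\[
S\;=\;\{x_{v_i}^{i}\mid i\in[k]\}\;\cup\;\{y_{e_i}^{i}\mid i\in[k-1]\}.
\]
By construction, $S$ contains exactly one vertex from each of the $2k-1$ color classes $V_1,\ldots,V_k,W_1,\ldots,W_{k-1}$, and membership in $V_1,V_k$ is fine because $v_1=s$ and $v_k=t$, matching the sole surviving vertices after \stepref{cons8}.

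Next I would verify that every pair of distinct vertices of $S$ is adjacent in $G'$, splitting into three types. Pairs of the form $(x_{v_i}^{i},x_{v_j}^{j})$ with $i<j$ are adjacent via \stepref{cons2} (if $j=i+1$) or \stepref{cons5}, and no rule deletes such vertex–vertex edges. Pairs $(y_{e_i}^{i},y_{e_j}^{j})$ with $i<j$ are adjacent via \stepref{cons3} or \stepref{cons4}. The interesting case is the mixed pair $(x_{v_i}^{i},y_{e_j}^{j})$, which I would split by the sign of $j-i$: for $j<i-1$ adjacency comes from \stepref{cons4} and \stepref{cons7} does not apply; for $j=i-1$ adjacency comes from \stepref{cons6} since $v_i\in e_{i-1}$ by the definition of a path (and by the loop construction in the padded region), and again \stepref{cons7} does not apply because it requires $j\ge i$; finally for $j\ge i$ adjacency is provided by \stepref{cons6} when $j=i$ (using $v_i\in e_i$) or by \stepref{cons5} when $j>i$, and here \stepref{cons7} becomes relevant.

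The main obstacle is precisely this last sub-case: I need $e_j\notin f(v_i)$ for every $j\ge i$. This is exactly the $f$-conformity condition applied to $P$, namely that for all indices $i\le j$ in the path, $e_j\notin f(v_i)$. In the padded region ($j\ge\ell$) the edge $e_j$ is a loop, which cannot lie in any $f(v)\subseteq 2^E$, so the condition holds trivially. Once these adjacencies are checked, $S$ is a clique with one vertex in each color class, i.e.\ a multicolored clique of size $2k-1=k'$ in $G'$, which is what we had to produce.
\end{claimproof}
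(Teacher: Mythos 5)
Your proof is correct and follows essentially the same route as the paper's: extend $P$ to a $k$-vertex walk by padding with loops at $t$, take the canonical set $\{x_{v_i}^i\}\cup\{y_{e_i}^i\}$, check all pairwise adjacencies via the construction rules, and use $f$-conformity precisely to neutralize the deletions of step (g); yours is if anything more explicit in the case analysis. The one shared blemish is the pair $(x_{v_i}^i,y_{e_{i+1}}^{i+1})$: as literally written, step (e) only connects $V_i$ to $W_j$ for $j>i+1$ and step (f) only to $W_i$ and $W_{i-1}$, so neither rule covers $j=i+1$ --- an off-by-one in the construction that the paper's own proof glosses over in exactly the same way, so it is not a defect of your argument.
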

    \begin{proof}
        Let $P=(s=v_1,e_1,v_2,\ldots,e_{\ell-1},t=v_\ell)$ be an $f$-conforming $s$-$t$ path in $G$. We extend $P$ into an $f$-conforming $s$-$t$ walk $P^*=(v_1^*,e_1^*,v_2^*,\ldots,e_{k-1}^*,v_k^*)$ in $G^*$ on exactly $k$ vertices by adding loops on vertex $t$ if necessary. Let $X^V=\{x_{v_i^*}^i\mid i\in[k]\},Y^W=\{y_{e_i^*}^i\mid i\in[k-1]\}$. We show that $G'[X^V\cup Y^W]$ is a clique. To see this, note that $G[X^V]$ and $G[Y^W]$ are cliques by construction steps \stepref{cons2},\stepref{cons3}. By construction steps \stepref{cons4},\stepref{cons5} every $x_{v_i^*}$ is connected to any $y_{e_j^*}$ if $|j-i|>1$. Since $P^*$ is a path, there are also $G'$-edges for $|j-i|\leq 1$ by construction step \stepref{cons6}. Finally, since $P^*$ is $f$-conforming, construction step \stepref{cons7} deletes none of the edges.
    \end{proof}

    \begin{claim}
        If there is a multicolored clique in $G'$, then there is an $f$-conforming $s$-$t$ path on at most $k$ vertices in $G$.
    \end{claim}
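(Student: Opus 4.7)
The plan is to decode a multicolored clique of $G'$ into an $f$-conforming $s$-$t$ walk of exactly $k$ vertices in the loop-augmented graph $G^*$, and then shortcut that walk to a path in $G$.

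First, I would observe that any multicolored clique must contain exactly one vertex from each of the color classes $V_1,\ldots,V_k, W_1,\ldots,W_{k-1}$. By construction step (8), the classes $V_1$ and $V_k$ have been pruned to a single vertex, forcing the choices $x_s^1$ and $x_t^k$. Denote the remaining chosen vertices by $x_{v_i}^i \in V_i$ for $2 \le i \le k-1$ and $y_{e_i}^i \in W_i$ for $1 \le i \le k-1$, and set $v_1 = s$ and $v_k = t$.

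Next, I would extract a walk from the adjacencies between consecutive layers. By step (6), the edge $\{x_{v_i}^i, y_{e_i}^i\}$ in $G'$ forces $v_i \in e_i$, and the edge $\{y_{e_i}^i, x_{v_{i+1}}^{i+1}\}$ forces $v_{i+1} \in e_i$. Hence each $e_i$ is either the edge $\{v_i, v_{i+1}\}$ of $G$, or the loop at $v_i$ (in which case $v_i = v_{i+1}$). This yields a walk $W = (v_1, e_1, v_2, \ldots, e_{k-1}, v_k)$ in $G^*$ from $s$ to $t$ on exactly $k$ vertices.

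For $f$-conformity I would invoke step (7) across all non-consecutive pairs: for every $i \le j$ the clique contains the edge $\{x_{v_i}^i, y_{e_j}^j\}$, which by step (7) can only exist if $e_j \notin f(v_i)$. Hence after erasing the loops from $W$, the resulting walk uses only edges of $G$ and is $f$-conforming. Finally, I would turn this walk into an $s$-$t$ path with at most $k$ vertices by the standard shortcut operation: whenever a vertex repeats, excise the intermediate segment. The main point to verify is that shortcutting preserves $f$-conformity; this is immediate because the shortened path uses a subset of the walk's edges in the same relative order, so every retained edge is still traversed only after vertices that do not delete it.
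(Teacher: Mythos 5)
Your proposal is correct and follows essentially the same route as the paper's own proof: read off the walk from the layer structure via step (6), use step (7) to rule out any deleted edge appearing after its deleting vertex, and shortcut the resulting $k$-vertex walk to a path. You are somewhat more explicit than the paper about the role of the loops in $E^*$ and about why shortcutting preserves $f$-conformity, but these are elaborations of the same argument rather than a different one.
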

    \begin{proof}
        Let $x_{v_i}^i \in V_i$ and $y_{e_i}^i\in W_i$ be the vertices of the multicolored clique. We claim that the walk $P=(v_1,e_1,v_2,\ldots,e_{k-1},v_k)$ is $f$-conforming. As it has exactly $k$ vertices, it can be shortened to an $f$-conforming path on at most $k$ vertices. Since the only vertex in $V_1$ is $x_s^1$ and the only vertex in $V_k$ is $x_t^1$ (construction step \stepref{cons1}, we have $v_1=s$ and $v_k=t$. Incidence of edges $e_i$ with $v_{i-1}$ and $v_i$ are due to construction step \stepref{cons6}. Note that if $e_i\in f(v_j)$ for some $j\leq i$, then there is no edge from $x_v^i$ to $y_e^j$ by construction step \stepref{cons7}, contradicting that the vertices $x_{v_i}^i$ and $y_{e_i}^i$ induce a clique.
    \end{proof}
As the construction can be carried out in polynomial time, this finishes the proof.
\end{proof}
}%
\subsection{Parameterization by structural parameters alone}
Observe that the resulting graph $G'$ from \Cref{construction:mcc} has vertex cover number at most $k+1$, because $G'\setminus \{g_0,\ldots,g_k\}$ is edgeless. We immediately obtain \W{1}-hardness of \SDpath for the parameter vertex cover number. 

We establish membership in \W{1} of \SDpath parameterized by the feedback vertex set number  (\Cref{thm:membership_w1_fvsn}) and treedepth (\Cref{obs:w1_membership_td}). These results together with \W{1}-hardness for vertex cover number establish \W{1}-completeness for the following parameters: vertex cover, distance to linear forest, feedback vertex set, vertex integrity, and treedepth (\Cref{thm:w1completeness_vc_dtlf_fvsn,thm:w1completeness_vi_td}).

\begin{apprestatable}{theorem}{theoremMembershipWOneFVSN}\label{thm:membership_w1_fvsn}
    \SDpath parameterized by the feedback vertex set number is in \W{1}.
\end{apprestatable}
\sv{
\begin{proof}[Proof Sketch]
    If $G$ has a feedback vertex set $S$, then any path in $G$ is split by $S$ into at most $|S|+1$ segments where the segments are uniquely determined by their endpoints, as $G\setminus S$ is a forest. We can thus equivalently look for a path of length $O(\fvsn)$ in a graph where we represent long paths in $G\setminus S$ by paths of length two and reflect the deletions along these unique $u$-$v$ paths in $G\setminus S$. Full proof can be found in the appendix.
\end{proof}
}
\toappendix{
\sv{\theoremMembershipWOneFVSN*}
\begin{proof}
    We reduce \SDpath parameterized by \fvsn to \SSDpath parameterized by $k$ which is in \W{1} by \Cref{thm:w1_completeness_ssdpath}. Let $(G,f,s,t)$ be the input instance to \SDpath parameterized by \fvsn, we build a new instance $(G',f',s',t',k')$ as follows. 
    
    Let $S\subseteq V(G)$ be the modulator to forest of size \fvsn and denote $F=G\setminus S$ (note that $S$ can be computed in \FPT time parameterized by its size~\cite{KociumakaP2014}). For $u,v\in V(F)$ let $P_{u,v}$ denote the unique path from $u$ to $v$ in $F$ (if it exists).

    We let $s'=s,t'=t$. We build a $1$-subdivided clique on the vertices of $F$. More precisely, for any $\{u,v\}\subseteq V(F)$ we create a path $Q_{u,v}=(u,e_1^{uv},y_{uv},e_2^{uv},v)$, where $y_{uv}$ is a new vertex. All edges with endpoints in $S$ remain untouched. Formally, we have $V(G')=V(G)\cup \{y_{uv}\mid \{u,v\}\subseteq V(F)\}$ and $E(G')=\{\{x,y\}\in E, \{x,y\}\cap S\neq \emptyset\} \cup \{e_1^{uv},e_2^{uv}\mid \{u,v\}\subseteq V(F)\}$.
    
    Now we deal with the deletion sets.
    \begin{enumerate}[(a)]
        \item Each vertex retains deletions of edges incident to $S$.\label{itm:fvsn_step1}
        \item For each vertex $v\in V(G)$ of the original graph and $\{x,y\}\subseteq V(F)$, if $E(P_{x,y})\cap f(v)\neq \emptyset$, then add the edges $e_1^{xy},e_2^{xy}$ to $f'(v)$.\label{itm:fvsn_step2}
        \item The middle vertices $y_{uv}$ delete everything that is deleted by the inner vertices of the path $P_{u,v}$. Suppose that for some $z\in V(P_{u,v})\setminus \{u,v\}$ we have $e\in f(z)$. If $e \in E(F)$, then for every $\{x,y\}\subseteq V(F)$, $\{x,y\} \neq \{u,v\}$ such that $e\in E(P_{x,y})$, we add $e_{1}^{xy}$ and $e_2^{xy}$ to $f'(y_{uv})$. If $e$ was incident to $S$, then simply add $e$ to $f'(y_{uv})$.\label{itm:fvsn_step3}
        \item Lastly, for any $\{u,v\}\subseteq V(F)$, if the path $P_{u,v}$ is not $f$-conforming, then $y_{uv}$ also deletes the edge $e_1^{uv}$ and if $P_{v,u}$ is not $f$-conforming, then $y_{uv}$ also deletes the edge $e_2^{uv}$. \label{itm:fvsn_step4}
    \end{enumerate}
    \newcommand{\stepref}[1]{\itemstyle{\hyperref[#1]{(\ref{#1})}}}
    Note that in particular, if the path $P_{u,v}$ does not exist in $F$, then in particular it is not $f$-conforming, hence $y_{uv}$ disallows passing in both directions by \stepref{itm:fvsn_step4}.
    Finally, set $k'=4\fvsn + 3$.

    \begin{claim}
        If there exists an $f$-conforming $s$-$t$ path in $G$, then there is an $f$-conforming $s'$-$t'$ path in $G'$ on at most $k'$ vertices.
    \end{claim}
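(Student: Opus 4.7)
The plan is to start from an $f$-conforming $s$-$t$ path $P = (s = v_0, e_1, v_1, \ldots, e_\ell, v_\ell = t)$ in $G$ and explicitly build a short $f'$-conforming $s'$-$t'$ path $P'$ in $G'$. The key structural observation is that since $S$ is a feedback vertex set, $F = G \setminus S$ is a forest, so any subpath of $P$ contained in $V(F)$ with endpoints $u,v$ must coincide with the unique $F$-path $P_{u,v}$. The at most $\fvsn$ vertices of $S$ appearing on $P$ partition $P$ into at most $\fvsn+1$ maximal ``$F$-segments'', each of which is either a single vertex or equal to some $P_{u,v}$.

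To construct $P'$, I would walk along $P$ and keep every vertex of $S$ and every edge of $P$ with an endpoint in $S$ unchanged (these exist unchanged in $G'$); each maximal $F$-segment with distinct endpoints $u,v$ is replaced by the three-vertex shortcut $(u,\,e_1^{uv},\,y_{uv},\,e_2^{uv},\,v)$, and single-vertex $F$-segments are kept as they are. Counting: at most $\fvsn$ vertices of $S$ plus at most $\fvsn+1$ segments each contributing at most three vertices, giving $|V(P')| \le 4\fvsn + 3 = k'$, as required.

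The nontrivial part is to verify that $P'$ is $f'$-conforming. For every edge $e'$ of $P'$ and every earlier vertex $v'$ on $P'$ one must show $e' \notin f'(v')$. I would split into four cases by whether $e'$ is an edge inherited from $G$ (incident to $S$) or a new edge $e_i^{uv}$, and whether $v'$ lies in $V(G)$ or is a subdivision vertex $y_{xy}$. In each case, the definition of $f'$ in steps \itemref{itm:fvsn_step1}--\itemref{itm:fvsn_step3} translates the putative membership $e' \in f'(v')$ into the existence of some vertex $z$ on $P$ (either $v'$ itself, or an internal vertex of $P_{x,y}$) that deletes an edge $e$ on $P$ (either $e'$ itself, or an internal edge of $P_{u,v}$) with $z$ strictly preceding $e$ on $P$; but this contradicts that $P$ is $f$-conforming. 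Because the order of vertices on $P$ is preserved by the replacement, the ordering hypotheses transfer correctly between $P$ and $P'$.

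The main obstacle, and where the argument requires the most care, is step \itemref{itm:fvsn_step4}: when we traverse the gadget $(u, y_{uv}, v)$ in $P'$ we must ensure that the directional deletions at $y_{uv}$ do not block the edge we are about to use next. This is handled by observing that $P_{u,v}$ appears as a contiguous subpath of the $f$-conforming path $P$, and therefore is itself $f$-conforming in exactly the direction in which the gadget is traversed; hence the clause in step \itemref{itm:fvsn_step4} that would delete the ``outgoing'' edge of $y_{uv}$ is vacuous for our path. Once this directional bookkeeping is set up, the four case analyses are routine and yield $f'$-conformance of $P'$.
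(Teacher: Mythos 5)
Your proposal is correct and follows essentially the same route as the paper: split $P$ into at most $\fvsn+1$ forest segments, replace each multi-vertex segment by the three-vertex gadget $Q_{u,v}$, count $3(\fvsn+1)+\fvsn = k'$ vertices, and verify $f'$-conformity by appealing to steps (b)--(d) exactly as the paper does (including the observation that step (d) is vacuous because each segment is a contiguous $f$-conforming subpath of $P$). The only difference is presentational: you organize the conformity check as an explicit four-way case split, while the paper argues the same three points directly.
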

    \begin{proof}
        Let $P=(s=v_1,e_1,v_2,\ldots,e_{\ell-1},v_{\ell}=t)$ be an $f$-conforming $s$-$t$ path in $G$. The modulator $S$ splits $P$ into subpaths $P_1,P_2,\ldots,P_q$ for some $q\leq |S|+1$,
        where $P_i=(v^i_1,e_1^i,v_2^i\ldots,e_{\ell_i-1}^i,v_{\ell_i}^i)$ and the $i$-th and $(i+1)$-th subpath are joined via edges $\{v^i_{\ell_i},x\},\{x,v^{i+1}_1\}$ for some $x\in S$. The desired path in $G'$ is created by replacing every subpath $P_i$ with $|V(P_i)|\geq 2$ by the path
        $Q_{v_1^i,v_{\ell_i}^i}$. Note that the edges of the path $Q_{v_1^i,v_{\ell_i}^i}$ are not deleted by $v_1^i$ nor by any vertices preceeding $v_1^i$. To see this, note that if some vertex $v$ deleted an edge $e_1^{v_1^iv_{\ell_i}^i}$ or $e_2^{v_1^iv_{\ell_i}^i}$ this means that $E(P_{x,y})\cap f(v)=\emptyset$ by \stepref{itm:fvsn_step2}, contradicting that $P$ was originally $f$-conforming. Next, 
        $y_{v_1^i,v_{\ell_i}^i}$ cannot delete $e^{v_1^i,v_{\ell_i}^i}_2$ as the subpath $P_i$ was also $f$-conforming, so \stepref{itm:fvsn_step4} did not apply. Vertex $y_{v_1^i,v_{\ell_i}^i}$ also could not delete any future edges after $v_{\ell_i}$ as otherwise, by \stepref{itm:fvsn_step3} the inner vertices of the path $P_i$ deleted some $e\in E(P_{x,y})$ in some future subpath $P_{x,y}$, or it was the case of an edge incident to $S$, but that is also in contradiciton with the entire path $P$ being $f$-conforming (due to \stepref{itm:fvsn_step1}).
        Hence the resulting path is $f$-conforming. Each segment now contains at most $3$ vertices and there are at most $\fvsn+1$ segments. Altogether the resulting path in $G'$ has at most $3(\fvsn + 1)+\fvsn = 4\fvsn + 3=k'$ vertices.
    \end{proof}
    \begin{claim}
        If there exists an $f$-conforming $s'$-$t'$ path in $G'$ on at most $k'$ vertices, then there exists an $f$-conforming $s$-$t$ path in $G$.
    \end{claim}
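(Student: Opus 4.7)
The plan is to take the given $f'$-conforming $s'$-$t'$ path $P'$ in $G'$ and build an $f$-conforming $s$-$t$ walk $W$ in $G$; any walk can then be shortened to a path by removing cycles while preserving $f$-conformance, which yields the desired $s$-$t$ path in $G$. Since every middle vertex $y_{uv}$ has only the two neighbours $u$ and $v$ in $G'$, any occurrence of $y_{uv}$ in $P'$ is part of a full traversal of the length-two path $Q_{u,v}$ in one of the two directions, while the remaining edges of $P'$ are edges of $G$ incident to $S$. Thus $P'$ decomposes into ``backbone'' edges and $Q_{u,v}$-blocks, and I would construct $W$ by replacing each $Q_{u,v}$-block by the corresponding forest path $P_{u,v}$ (or its reverse $P_{v,u}$) in $F$. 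By construction the backbone vertices of $P'$ are preserved, so $W$ is a well-defined $s$-$t$ walk in $G$.

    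The main work is then to verify that $W$ is $f$-conforming. I would proceed by contradiction: assume that some vertex $v_i$ at position $i$ of $W$ has an edge $e_j$ used at a later position $j$ with $e_j\in f(v_i)$, and derive a contradiction with $P'$ being $f'$-conforming by case analysis on where $v_i$ and $e_j$ sit. If $v_i\in S$ or $v_i$ is an endpoint of one of the $Q$-blocks of $P'$, then $v_i$ itself already appears in $P'$; rules (a) and (b) of the construction then place either $e_j$ (when $e_j$ is incident to $S$) or the surrogate edges $e_1^{xy},e_2^{xy}$ of the later $Q_{x,y}$-block of $P'$ that carries $e_j$ into $f'(v_i)$, contradicting $f'$-conformance. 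If $v_i$ is an inner vertex of some forest path $P_{u,v}$ that $W$ inherits from a $Q_{u,v}$-block of $P'$, the corresponding witness in $P'$ is $y_{uv}$, and rule (c) transfers the relevant deletion onto $f'(y_{uv})$, reducing the situation to the previous case.

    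The one subcase not handled by rule (c) is when $v_i$ and $e_j$ both lie strictly inside the same forest path $P_{u,v}$; here the traversal of $P_{u,v}$ (in the direction used by $P'$) is itself not $f$-conforming, which is precisely the situation triggering rule (d), forcing $y_{uv}$ to delete the endpoint edge of $Q_{u,v}$ that $P'$ uses right after visiting $y_{uv}$, contradicting $f'$-conformance once more. Once $W$ is shown to be $f$-conforming, iterating the standard cycle-removal operation on $W$ produces the desired $f$-conforming $s$-$t$ path in $G$. I expect the main obstacle to lie precisely in this last subcase, since it requires carefully lining up the direction in which $P'$ traverses $Q_{u,v}$ with the directional definition of $f$-conformance for $P_{u,v}$ versus $P_{v,u}$, which is where the asymmetric rule (d) is designed to intervene; the remaining cases are straightforward verifications against the four deletion rules of the construction.
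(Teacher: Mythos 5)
Your proposal is correct and takes essentially the same route as the paper's proof: replace each $Q_{u,v}$-block of $P'$ by the forest path $P_{u,v}$ to obtain a walk, verify $f$-conformance by matching each potential violation to the deletion rule that would have blocked $P'$ (rules (a)/(b) when the deleting vertex survives in $P'$, rule (c) when it is an inner forest vertex represented by $y_{uv}$, rule (d) for violations internal to a single $P_{u,v}$, which also guarantees the replacement paths exist), and finally shorten the walk to a path. Your identification of the directional subtlety in rule (d) — that $y_{uv}$ must delete the $Q_{u,v}$-edge used \emph{after} visiting it — is exactly the point the paper's argument relies on as well.
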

    \begin{proof}
        Let $P'=(s=v_1',e_1',v_2',\ldots,e'_{r-1},v'_r=t)$ be an $f'$-conforming $s$-$t$ path in $G'$ on at most $k'$ vertices. Whenever $v'_i=y_{uv}$ for some $\{u,v\}\subseteq V(F)$, then necessarily $(v'_{i-1},e_{i-1}',v'_i,e_i',v'_{i+1})=Q_{u,v}$. Replace this segment by the path $P_{u,v}$ in the original graph~$G$. We possibly create a walk. 

        We now argue that the resulting path (walk) after replacing all such segments is $f$-conforming. First, note that if the path $P_{u,v}$ in $G$ did not exist (for example because $u,v$ are in different connected components of $G\setminus S$), then in particular neither the path $P_{u,v}$ nor $P_{v,u}$ is $f$-conforming and in this case, due to \stepref{itm:fvsn_step4}, the middle vertex $v_i'$ deleted $e_i'$, which is not possible. Moreover, this rules out the possibility that some vertex on the path $P_{u,v}$ deletes some future edge of the path $P_{u,v}$ as otherwise \stepref{itm:fvsn_step4} applied and again $v_i'$ would delete $e_i'$.

        Next, we verify that no vertex $x$ before $v_{i-1}'$ deleted an edge of $P_{u,v}$. To see this note that this would imply, by \stepref{itm:fvsn_step3} that $x$ in $P'$ deleted the edges $e_1^{uv}$ and $e_2^{uv}$, again contradicting $f$-conformity of $P'$.

        Finally, we verify that the inner vertices of $P_{u,v}$ do not delete any edges in the future. To see this, note that in that case \stepref{itm:fvsn_step3} applied and in this case $e_1^{xy},e_2^{xy}$ were added to the deletion set of $v_i'$, or again it was the case of an edge incident to $S$, which again contradicts the assumption that $P'$ was $f$-conforming.

        Finally, we shorten the $f$-conforming $s$-$t$ walk in $G$ to an $f$-conforming $s$-$t$ path in $G$ and this finishes the proof of the claim.
    \end{proof}
    As the reduction can be performed in fpt-time, this finishes the proof.
\end{proof}
}%

\begin{apprestatable}{theorem}{theoremWOneCompletenessVcDtlfFvsn}\label{thm:w1completeness_vc_dtlf_fvsn}
    \SDpath parameterized by feedback vertex set, distance to linear forest, or vertex cover is \W{1}-complete, solvable in $n^{O(\alpha)}$ time and unless ETH fails, there is no $g(\alpha) n^{o(\alpha)}$ algorithm for any of the above parameters and any computable function $g$.
\end{apprestatable}

\toappendix{
\sv{\theoremWOneCompletenessVcDtlfFvsn*}
\begin{proof}
    \SDpath is \W{1}-hard w.r.t.\ vertex cover number because the resulting graph $G'$ from \Cref{construction:mcc} has vertex cover number at most $k+1$. This is because $G'\setminus \{g_0,\ldots,g_k\}$ is edgeless. \W{1} membership for \fvsn
    follows from \Cref{thm:membership_w1_fvsn}. Clearly $\SSDpath$ can be solved in $n^{O(k)}$ time be guessing all possible $k$-tuples of vertices representing the path. The running times of the algorithms for other parameters and the lower bounds follow from the fact that all the involved reductions are linear in the parameter.
\end{proof}
}%

\begin{apprestatable}{lemma}{lemmaLengthOfPathInTdAndVi}\label{lem:length_of_path_in_td_and_vi}
    Let $G$ be a graph with treedepth $\td$ and vertex integrity $\vi$. Then $G$ contains no path on more than $2^{\td}$ or $\vi^2+2\vi$ vertices.
\end{apprestatable}
\toappendix
{
\sv{\lemmaLengthOfPathInTdAndVi*}
\begin{proof}
    By the result of Nešetřil and Ossona de Mendez~\cite[Chapter 6.2]{SPARSITY}, an $n$-vertex path has treedepth equal to $\lceil \log_2(n+1)\rceil$. Hence if $G$ contained a path $P$ on more than $2^{\td}$ vertices, then since treedepth is monotone under taking subgraphs, it follows that $\td(G)\ge \td(P) \ge \lceil\log_2(2^{\td}+2)\rceil>\td$, a contradiction.

    For vertex integrity, we show that if $G$ has vertex integrity $\vi$, then $G$ contains no path on more than $\vi^2+2\vi$ vertices. To see this, note that if $S$ is the modulator of size $\vi(G)$ such that $G\setminus S$ has components of size at most $\vi(G)$, then any path is split by $S$ into at most $\vi(G)+1$ segments. Each such segment can have at most $\vi(G)$ vertices as it has to belong to a connected component of $G\setminus S$. Hence there are at most $\vi(G)+1$ segments of size $\vi(G)$ plus the vertices of $S$, which in total gives an upper bound of $\vi(G)+\vi(G)(\vi(G)+1)$ on the number of vertices of any path in $G$.
\end{proof}
}%

\begin{observation}\label{obs:w1_membership_td}
    \SDpath parameterized by treedepth is in \W{1}.
\end{observation}
\begin{proof}
    We provide a parameterized reduction from \SDpath parameterized by treedepth to \SDpath parameterized by $k$, which is in \W{1} by \Cref{thm:w1_completeness_ssdpath}. Let $(G,f,s,t)$ be an instance of \SDpath, we return the instance $(G,f,s,t,k)$ where $k=2^{\td(G)}$. Correctness follows from \Cref{lem:length_of_path_in_td_and_vi}.
\end{proof}

\begin{apprestatable}{theorem}{theoremCompletenessLowerboundVertexIntegrityTreedepth}\label{thm:w1completeness_vi_td}
    \SDpath is \W{1}-complete parameterized by treedepth or vertex integrity. More precisely,
    \SDpath can be solved in $n^{O(2^{\td})}$ and $n^{O(\vi^2)}$ time.
    For any $\varepsilon>0$, algorithms for \SDpath with running times 
    $n^{O(\vi^{2-\varepsilon})}\poly(n)$ or  $n^{2^{o(\td)}}\poly(n)$ violate ETH.
\end{apprestatable}
\toappendix
{
\sv{\theoremCompletenessLowerboundVertexIntegrityTreedepth*}
\begin{proof}
    For the algorithms reduce \SDpath to \SSDpath for $k=2^{\td}$ and $k = \vi^2+2\vi$, respectively. Correctness follows from \Cref{lem:length_of_path_in_td_and_vi}. For the lower bound, recall that \Cref{construction:sat} yields a subgraph of $2\times \ell$ grid where $\ell=O(n'+m')$ where $n'$ and $m'$ are the number of variables and clauses of the original \tsat formula (see also \Cref{rem:sat_is_linear}). \Cref{lem:structural_properties_grid} establishes that $2\times \ell$ grid has logarithmic treedepth and vertex integrity of $O(\sqrt{\ell})$. Therefore, the algorithms with running times 
    $n^{O(\vi^{2-\varepsilon})}\poly(n)$ or  $n^{2^{o(\td)}}\poly(n)$ would imply $2^{o(n'+m')}$-time algorithms for \tsat, violating ETH.
\end{proof}
}%

We complement the hardness results by a positive result for the parameter feedback edge number. Note that given a path $P$ in a self-deleting graph, it is easy to check whether $P$ is $f$-conforming in time $O(n+m+|f|)$.

\begin{apprestatable}{lemma}{lemmaFenTwoToKManyPaths}\label{lem:fen_two_to_k_many_paths}
    Let $G$ be a graph and $s,t\in V(G)$ two fixed vertices in $G$. Then the number of $s$-$t$-paths in $G$ is at most $2^{\fen(G)}$.
\end{apprestatable}
\toappendix{
\sv{\lemmaFenTwoToKManyPaths*}
The proof of this lemma is similar to that of Demaine et al.~\cite[Section 4]{DemaineEHJLUU19} (see also the ArXiv version, Section 4.1).

\begin{proof}
    Fix one $s$-$t$ path $P$. Consider the $\mathbb{Z}_2$-vector space $\mathcal{C}(G)$ of Eulerian subgraphs of $G$ (also known as the \emph{cycle space} of $G$). Observe that any symmetric difference of the edge set of $P$ with some other $s$-$t$ path $P'$ (different from $P$) yields an unique Eulerian subgraph of $G$ with at least one edge. There are $2^{\dim \mathcal{C}(G)}-1$ possible Eulerian subgraphs of $G$ with at least one edge. It is a well-known fact that $\dim \mathcal{C}(G) = |E(G)|-|V(G)|+\operatorname{cc}(G)=\fen(G)$. Together with $P$ we obtain the total number of $2^{\fen(G)}$ $s$-$t$ paths in $G$, as claimed.
\end{proof}
}%

\begin{apprestatable}{corollary}{corollaryFptByFen}\label{cor:fpt_by_fen}
     \SDpath can be solved in $O(2^{\fen(G)}(n+m+|f|))$ time. Moreover, $2^{o(\fen)}\poly(n)$-time algorithm for \SDpath violates ETH.
\end{apprestatable}
\toappendix{
\sv{\corollaryFptByFen*}
\begin{proof}
    The algorithm immediately follows from \Cref{lem:fen_two_to_k_many_paths}. It is not hard to observe that we can also enumerate all the paths in $2^{\fen}O(n+m+|f|)$ time and check in $O(n+m+|f|)$ time per path whether it is $f$-conforming. The lower bound comes from the fact that the instance produced by \Cref{construction:sat} has $\fen = O(n)$, hence an $2^{o(\fen)}\poly(n)$ algorithm yields $2^{o(n'+m')}$ algorithm for \tsat (see also \Cref{rem:sat_is_linear}).
\end{proof}
}

\subsection{\FPT algorithm for $k$ and $\mu$ combined}
We demonstrate that \SSDpath parameterized by $k$ and $\mu$ combined is in \FPT. We utilize color-coding, \lv{originally }introduced by Alon\sv{ et al.}\lv{, Yuster and Zwick}~\cite{AlonYZ1995}\lv{ to solve the problem of finding a simple path in a graph on (at least) $k$ vertices in $2^{O(k)}m\log n$ time}. \lv{The general idea is as follows. }We consider a colorful variant of the problem, where we consider coloring of the edges and we only distinguish edges based on their colors. The sought solution -- an $f$-conforming path $P=(v_1,e_1,\ldots,e_{r-1},v_r)$ on $r \leq k$ vertices interacts with the edges $e_i$ in the path and with the edges in the deletion sets $f(v_1),\ldots,f(v_r)$. By assumption there are at most $\mu k+k-1$ such edges in total. Hence if we can ensure that the coloring will behave nicely on the set $\{e_1,\ldots,e_{r-1}\}\cup \bigcup_{i=1}^rf(v_i)$, we will find the path even in the colorful variant. 
We now formally define the colorful variant of our problem, which we refer to as \SCHCpath.

\begin{restatable}{definition}{definitionRainbow}
Let $(G,f)$ be a self-deleting graph and let $\chi\colon E(G)\to [q]$ be a coloring of its edges. Let $P=(v_1,e_1,\ldots,e_{r-1},v_r)$ be a path in $G$. We say that 
\begin{enumerate}
    \item $P$ is \emph{$\chi$-compliant} if $\chi(e_i)\notin \chi(f(v_j))$ for any $j\leq i$,
    \item $P$ is \emph{half-$\chi$-rainbow} if $\chi(\bigcup_{i=1}^{r}f(v_i)\setminus E(P))\cap \chi (E(P))=\emptyset$ and $\chi|_{E(P)}$ is injective.
    \item $P$ is \emph{$\chi$-rainbow} if for $F=E(P)\cup \bigcup_{i=1}^rf(v_i)$ the restriction $\chi|_{F}$ is injective.
\end{enumerate}
\end{restatable}

In the \SCHCpath problem we are given a self-deleting graph $(G,f)$, positive integer $k$, coloring $\chi \colon E(G)\to [q]$, vertices $s,t\in V(G)$ and the task is to decide whether there is a $\chi$-compliant $s$-$t$ path on at most $k$ vertices in $G$.

\begin{apprestatable}{lemma}{lemmaBasicRainbowProperties}\label{obs:basic_properties_chi_compl_conf_rainbow}
    The following holds for any path $P$:
    \begin{enumerate}
        \item $P$ is $\chi$-rainbow $\Rightarrow$ $P$ is half-$\chi$-rainbow.
        \item $P$ is half-$\chi$-rainbow and $f$-conforming $\Rightarrow$ $P$ is $\chi$-compliant.
        \item $P$ is $\chi$-compliant $\Rightarrow$ $P$ is $f$-conforming.
    \end{enumerate}
\end{apprestatable}
\toappendix{
\sv{\definitionRainbow*}
\sv{\lemmaBasicRainbowProperties*}
\begin{proof}
    \begin{enumerate}
        \item Clearly, since $\chi$ is injective on $F=E(P)\cup \bigcup_{i=1}^rf(v_i)$, it is also injective on $E(P)\subseteq F$. Let $A=E(P),B=\bigcup_{i=1}^rf(v_i)$. Since $(A\setminus B), B$ is a partition of $A\cup B$, and $\chi|_{A\cup B}$ is injective, clearly $\chi(A\setminus B)\cap \chi(B)=\emptyset$.
        \item Suppose for the sake of contradiction that $P$ is not $\chi$-compliant. That is, for some $j\leq i$ we have $\chi(e_i)\in \chi(f(v_j))$, i.e., there is an edge $e^*\in f(v_j)$ such that $\chi(e_i)=\chi(e^*)$. Now, either $e^*$ is outside $P$, which contradicts the assumption that $\chi(\bigcup_{i=1}^{r}f(v_i)\setminus E(P))\cap \chi (E(P))=\emptyset$, because $e_i\in E(P)$. Or $e^*$ is on $P$, thus $e_i=e^*$ by injectivity of $\chi|_{E(P)}$, contradicting the assumption that $P$ was $f$-conforming.
        \item For the sake of contradiction, suppose that $P$ is not $f$-conforming, i.e., there are indices $j\leq i$ such that $e_i\in f(v_j)$. But this implies that $\chi(e_i)\in \chi(f(v_j))$, contradicting the assumption that $P$ is $\chi$-compliant.
    \end{enumerate}
\end{proof}

}
\begin{apprestatable}{lemma}{lemmaAlgorithmQColors}\label{lem:algorithm_Q_colors}
    Given an instance of \SCHCpath and a set of colors $Q\subseteq [q]$, we can in $O(2^{|Q|}(n+m+|f|))$ time decide whether there exists a $\chi$-compliant path on at most $k$ vertices using only colors from $Q$.
\end{apprestatable}

\toappendix{
\sv{\lemmaAlgorithmQColors*}
\begin{proof}
    We construct an auxiliary simple directed graph $G'$ representing a state space as follows. A vertex of $G'$ is a pair $(v,Q')$ where $Q'\subseteq Q,v\in V(G)$, representing the fact that we can reach $v$ from $s$ with colors from $Q'$ still available. The edges of $G'$ are as follows. There is a directed edge from $(u,Q_1)$ to $(v,Q_2)$ if $\{u,v\}\in E(G)$, $\chi(\{u,v\})\in Q_1$ and $Q_2 = Q_1\setminus \chi(f(v))$.
    
    \begin{claim}
        There is a $\chi$-compliant $s$-$t$ path on at most $k$ vertices in $G$ if and only if there is some $Q_t\subseteq Q$ such that there is a $(s,Q_s)$-$(t,Q_t)$ path on at most $k$ vertices in $G'$, where $Q_s=Q\setminus \chi(f(s))$.
    \end{claim}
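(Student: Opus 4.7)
The plan is to prove both directions by exhibiting an explicit correspondence between $\chi$-compliant paths in $G$ and paths in the state-space graph $G'$. The design principle behind $G'$ is that a state $(v, Q')$ records having arrived at $v$ with $Q'$ as the set of colors still available; since entering $s$ already removes $\chi(f(s))$ from availability, the natural starting state is $(s, Q_s) = (s, Q\setminus \chi(f(s)))$ rather than $(s,Q)$, and the arc rule $R \to R \setminus \chi(f(v))$ maintains this bookkeeping at every step.

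For the forward direction, given a $\chi$-compliant $s$-$t$ path $P = (v_1, e_1, \ldots, e_{r-1}, v_r)$ with $r \leq k$, I would inductively set $R_i = Q \setminus \chi\bigl(\bigcup_{j \leq i} f(v_j)\bigr)$ and verify that $(v_1, R_1), \ldots, (v_r, R_r)$ is a path in $G'$. Vertex distinctness in $G'$ follows from distinctness of the $v_i$, and the arc condition $\chi(e_i) \in R_i$ translates exactly to the $\chi$-compliance condition $\chi(e_i) \notin \chi(f(v_j))$ for every $j \leq i$. Setting $Q_t = R_r$ completes this direction.

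For the backward direction, given a $(s, Q_s)$-$(t, Q_t)$ path in $G'$ on at most $k$ vertices, I would project onto the first coordinate to obtain a sequence $w_1, \ldots, w_{r'}$ in $G$ with $r' \leq k$. A straightforward induction from the arc rule gives $R_i' = Q \setminus \chi\bigl(\bigcup_{j \leq i} f(w_j)\bigr)$, and the membership $\chi(\{w_i, w_{i+1}\}) \in R_i'$ is precisely $\chi$-compliance at step $i$.

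The one subtlety, and the main point requiring care, is that the projection need not be a simple path in $G$: two distinct states in $G'$ can share a first coordinate. I would address this by first observing that the projection is a $\chi$-compliant $s$-$t$ walk on at most $k$ vertices, and then shortcutting cycles to obtain a simple path. Shortcutting preserves $\chi$-compliance because it only removes vertices (and hence constraints of the form $\chi(e_i) \notin \chi(f(v_j))$) from the walk, never inserting new edges or reordering existing ones. Since the resulting path uses a subset of the walk's vertices, the bound $r' \leq k$ is preserved, concluding the proof.
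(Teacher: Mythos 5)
Your proposal is correct and takes essentially the same approach as the paper: the forward direction builds the explicit state sequence $(v_i, Q\setminus\chi(\bigcup_{j\le i}f(v_j)))$, and the backward direction projects to a $\chi$-compliant walk and then shortcuts it to a simple path. The only addition is that you justify why shortcutting preserves $\chi$-compliance (order-preservation of the remaining vertex--edge pairs), a step the paper leaves implicit.
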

    
    \begin{claimproof}
        $\Rightarrow$: Let $P=(s=v_1,e_1,v_2,\ldots,e_{r-1},v_r=t)$ for $r\leq k$ be a $\chi$-compliant $s$-$t$ path in~$G$. The sequence of vertices of the $(s,Q_s)$-$(t,Q_t)$ path in $G'$ are given by $(s,Q_s),(v_2,Q_s\setminus \chi(f(v_2))),\ldots,(v_i,Q_s\setminus \bigcup_{j\le i}\chi(f(v_j))),\ldots,(t,Q_t=Q_s\setminus \bigcup_{j \le r}\chi(f(v_j)))$. It is straightforward to verify from the definition of $G'$ that this is indeed a path in $G'$.

        $\Leftarrow$: Let $P=((s=v_1,Q_1=Q_s),(v_2,Q_2),\ldots,(v_r=t,Q_r=Q_t))$, $r\leq k$ be a path in $G'$. We show that $(v_1,v_2,\ldots,v_r)$ is a $\chi$-compliant walk, which indeed can be shortened to a $\chi$-compliant path on at most $k$ vertices. Clearly $e_i=\{v_i,v_{i+1}\}\in E(G)$ by the definition of $G'$.
        We verify that for all $j\leq i$ we have $\chi(e_i)\notin \chi(f(v_j))$. Observe that $Q_r\subseteq Q_{r-1}\subseteq \cdots \subseteq Q_1$ by the definition of $G'$. For the sake of contradiction suppose that $\chi(e_i)\in \chi(f(v_j))$.
        Since $Q_j = Q_{j-1}\setminus \chi(f(v_j))$ (or $j=1$ and $Q_j=Q\setminus \chi(f(s))$), we obtain that $\chi (e_i)\notin Q_j$. And since $Q_i\subseteq Q_j$, it follows that $\chi(e_i)\notin Q_i$. Recall that $e_i=\{v_i,v_{i+1}\}$ and since $\chi(\{v_i,v_{i+1}\})\notin Q_i$, there is no edge from $(v_i,Q_i)$ to $(v_{i+1},Q_{i+1})$, contradicting that $P$ was a path in $G'$.
    \end{claimproof}
    To find an  $(s,Q_s)$-$(t,Q_t)$ path on at most $k$ vertices in $G'$ we can use the standard BFS algorithm. It remains to bound the running time. Note that the number of vertices and edges of $G'$ is at most $2^{|Q|}n$ and $2^{|Q|}m$, respectively, and we can build it in $O(2^{|Q|}(n+m+|f|))$ time.
\end{proof}
}

\begin{lemma}\label{lem:chi_compliant_running_time}
    \SCHCpath can be solved in $O(2^q(n+m+|f|))$ or in ${O(\binom{q}{k}2^k(n+m+|f|))}$ time.
\end{lemma}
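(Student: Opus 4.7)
The plan is to use \Cref{lem:algorithm_Q_colors} as a black box, together with a simple combinatorial observation on how many distinct colors a short path can use.

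For the first bound, I would simply invoke \Cref{lem:algorithm_Q_colors} with $Q = [q]$. A $\chi$-compliant $s$-$t$ path on at most $k$ vertices using colors from $[q]$ is precisely what the original problem asks for, so the running time $O(2^{|Q|}(n+m+|f|)) = O(2^q(n+m+|f|))$ transfers directly.

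For the second bound, the key observation is that any $\chi$-compliant path $P$ on at most $k$ vertices contains at most $k-1$ edges, so the color set $\chi(E(P))$ has cardinality at most $k-1 \leq k$. The plan is to enumerate all subsets $Q \subseteq [q]$ of size exactly $k$ (if $q \leq k$, we instead invoke the first bound once, which is absorbed by the claimed running time) and to invoke \Cref{lem:algorithm_Q_colors} on each such $Q$. If a $\chi$-compliant $s$-$t$ path $P$ on at most $k$ vertices exists, then $\chi(E(P))$ can be extended arbitrarily to some $Q^{\star} \subseteq [q]$ of size exactly $k$; when the subroutine is invoked with $Q = Q^{\star}$, the path $P$ uses only colors from $Q^{\star}$ and is $\chi$-compliant, so the subroutine returns yes. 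Conversely, any yes-answer from the subroutine is witnessed by a genuine $\chi$-compliant path of the required length, so no false positives arise. Summing over all $\binom{q}{k}$ choices yields total time $\binom{q}{k} \cdot O(2^k (n+m+|f|))$ as claimed.

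There is no real obstacle here: the statement is essentially a direct corollary of \Cref{lem:algorithm_Q_colors} paired with the trivial bound on the number of distinct edge colors used by a path on at most $k$ vertices. The only thing one needs to be mildly careful about is the degenerate case $q < k$, which is handled by falling back to the first bound.
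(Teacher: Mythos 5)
Your proposal is correct and follows essentially the same route as the paper: plug $Q=[q]$ into \Cref{lem:algorithm_Q_colors} for the first bound, and enumerate small color sets (the paper uses size $k-1$, you use size $k$; both are fine since a path on at most $k$ vertices has at most $k-1$ edge colors) for the second. Your explicit treatment of the degenerate case $q<k$ and of the extension of $\chi(E(P))$ to a set of the enumerated size is slightly more careful than the paper's one-line argument, but the substance is identical.
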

\begin{proof}
    We can either plug $Q=[q]$ into the algorithm of \Cref{lem:algorithm_Q_colors} and obtain the running time $O(2^q(n+m+|f|))$ or we can try all possible sets $Q$ of $k-1$ colors and \lv{plug it into \Cref{lem:algorithm_Q_colors} to }obtain the running time $O(\binom{q}{k}2^k(n+m+|f|))$. Note that we seek a path on (at most) $k$ vertices, hence (at most) $k-1$ edges.
\end{proof}

\begin{apprestatable}{theorem}{theoremRandomizedColorCodingAlg}\label{thm:randomized_color_coding_alg}
    There is a randomized algorithm solving \SSDpath with the following guarantees. Given $\varepsilon\in(0,1)$, it runs in $2^{O(k\log \mu)}(n+m)\ln \frac{1}{\varepsilon}$ time. Moreover, if the input is a no-instance, the algorithm outputs no. If the input is a yes-instance, the algorithm outputs yes with probability at least $1 - \varepsilon$. 
\end{apprestatable}
\sv{\begin{proof}[Proof sketch.]
    Asume that $\mu \geq 1$ as otherwise the problem is trivial. We use the algorithm from \Cref{lem:chi_compliant_running_time} with running time $O(\binom{q}{k}2^k(n+m+|f|))$. We try a random edge coloring $\chi$ using $q=4k\mu$ colors. For a fixed coloring using $q$ colors, we lower bound the probability that a path $P$ is $\chi$-compliant, given that $P$ is $f$-conforming. Thus, we are equivalently lower bounding the probability that the algorithm succeeds in finding the $f$-conforming path. Let $P=(v_1,e_1,\ldots,e_{r-1},v_r)$. We lower bound the probability that $P$ is half-$\chi$-rainbow. Suppose that the edges in $\bigcup_{i=1}^r f(v_i)$ are colored by $w\leq |\bigcup_{i=1}^r f(v_i)\setminus E(P)|\leq r\mu \leq k\mu$ colors from the set $[q]$. For $P$ to become half-$\chi$-rainbow, we need to ensure that the edges in $E(P)$ are colored by one of the remaining $q-w\geq 4k\mu-k\mu = 3k\mu$ colors and that the coloring is injective on $E(P)$. The probability of that happening for the $r-1$ edges is
    $\frac{q-w}{q}\cdot \frac{q-w-1}{q}\cdot \cdots \cdot \frac{q-w-r+2}{q}$.
    Note that the last term lower bounds every other term and the last term is lower bounded by $\frac{1}{2}$ because $\frac{q-w-r+2}{q}\geq \frac{3k\mu-k}{4k\mu}\geq \frac{1}{2}$ given that $\mu \geq 1$. Hence the probability that $P$ is half-$\chi$-rainbow is at least $2^{-r+1}\geq 2^{-k}$.
     
    We repeat the above for $2^{k}\ln \frac{1}{\varepsilon}$ random choices of $\chi$ and the proof on the probability of error and running time follow. The full proof can be found in the appendix. 
\end{proof}}

\toappendix{
\sv{\theoremRandomizedColorCodingAlg*}
\begin{proof}
    Asume that $\mu \geq 1$ as otherwise the problem is trivial. We use the algorithm from \Cref{lem:chi_compliant_running_time} with running time $O(\binom{q}{k}2^k(n+m+|f|))$. We try a random edge coloring $\chi$ using $q=4k\mu$ colors. For a fixed coloring using $q$ colors, we lower bound the probability that a path $P$ is $\chi$-compliant, given that $P$ is $f$-conforming. Thus, we are equivalently lower bounding the probability that the algorithm succeeds in finding the $f$-conforming path. Let $P=(v_1,e_1,\ldots,e_{r-1},v_r)$. We lower bound the probability that $P$ is half-$\chi$-rainbow. Suppose that the edges in $\bigcup_{i=1}^r f(v_i)$ are colored by $w\leq |\bigcup_{i=1}^r f(v_i)\setminus E(P)|\leq r\mu \leq k\mu$ colors from the set $[q]$. For $P$ to become half-$\chi$-rainbow, we need to ensure that the edges in $E(P)$ are colored by one of the remaining $q-w\geq 4k\mu-k\mu = 3k\mu$ colors and that the coloring is injective on $E(P)$. The probability of that happening for the $r-1$ edges is
    $\frac{q-w}{q}\cdot \frac{q-w-1}{q}\cdot \cdots \cdot \frac{q-w-r+2}{q}$.
    Note that the last term lower bounds every other term and the last term is lower bounded by $\frac{1}{2}$ because $\frac{q-w-r+2}{q}\geq \frac{3k\mu-k}{4k\mu}\geq \frac{1}{2}$ given that $\mu \geq 1$. Hence the probability that $P$ is half-$\chi$-rainbow is at least $2^{-r+1}\geq 2^{-k}$.
     
    Repeat the above for $2^{k}\ln \frac{1}{\varepsilon}$ random choices of $\chi$. Clearly if there is no $f$-conforming $s$-$t$ path the algorithm never finds a $\chi$-compliant path for any $\chi$ due to \Cref{obs:basic_properties_chi_compl_conf_rainbow}. On the other hand, the probability that it outputs no if there is an $f$-conforming path is at most $(1-2^{-k})^{2^{k}\ln\frac{1}{\varepsilon}}\leq (e^{-2^{-k}})^{2^k\ln\frac{1}{\varepsilon}}=e^{-\ln\frac{1}{\varepsilon}}=\varepsilon$. We used the inequality $1-x\leq e^{-x}$ valid for any $x\in\mathbb{R}$ for $x=2^{-k}$. Hence it outputs yes with probability at least $1-\varepsilon$, as desired. Up to multiplicative constants, the running time can be bounded as follows:
    \[
        \binom{4k\mu}{k}2^k(n+m+|f|)\cdot 2^k \ln\frac{1}{\varepsilon}\leq 4^k\left(\frac{4k\mu\cdot e}{k}\right)^k(n+m+|f|)\ln\frac{1}{\varepsilon}\leq 2^{O(k\log \mu)}(n+m)\ln \frac{1}{\varepsilon}
    \]
    Note that $|f|\leq n\cdot \mu$, so $|f|$ gets hidden in the $2^{O(k\log \mu)}\cdot n$ factor in the running time.
    We used the known bound on binomial coefficient: $\binom{n}{k}\leq \left(\frac{n\cdot e}{k}\right)^k$.
\end{proof}
}%
\lv{\subparagraph*{Derandomization.}}
There are two ways to derandomize the \lv{randomized }algorithm given in \Cref{thm:randomized_color_coding_alg}. 
Neither of the two algorithms is Pareto optimal with respect to $\mu$ and $k$.
Hence, each of them turns out to be more suitable in different scenarios. 

\toappendix{\sv{\subsection{Derandomization of the color coding}}}
\toappendix{
\begin{theorem}[Naor et al.~\cite{NaorSS95}]\label{thm:hashing}
    Let $m,q$ be any positive integers.
    There exists a family $\mathcal{F}$ of colorings $\chi\colon [m]\to [q]$ of size $2^{O(q)}\log m$ constructible in $2^{O(q)}m\log m$ time such that for any $F\subseteq [m]$ with $|F|\leq q$ there exists $\chi\in \mathcal{F}$ such that $\chi|_F$ is injective.
\end{theorem}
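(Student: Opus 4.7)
The plan is to build the family $\mathcal{F}$ in two stages via a classical composition trick. First, I would construct a family $\mathcal{H}_1$ of reducers $h\colon [m]\to[q^2]$ such that for every $F\subseteq[m]$ with $|F|\leq q$, some $h\in\mathcal{H}_1$ is injective on $F$. A random function $h\colon [m]\to[q^2]$ is injective on a fixed $F$ of size $\leq q$ with probability at least $\prod_{i=0}^{q-1}(1-i/q^2)\geq 1/2$, so a purely random family of size $O(q\log m)$ works. However, to get an efficient deterministic construction, I would rely on an explicit perfect hash family based on error-correcting codes (Reed--Solomon evaluations, as in the original argument of Naor, Schulman and Srinivasan): concatenate an outer code of length $O(\log m)$ over an alphabet of size $\mathrm{poly}(q)$ with an appropriate inner code to get a family of size $O(q^{O(1)}\log m)$, constructible in $\mathrm{poly}(q)\cdot m\log m$ time.

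Second, I would build a much smaller family $\mathcal{H}_2$ of functions $g\colon[q^2]\to[q]$ that is an $(q^2,q)$-perfect hash family. Here the universe is already small, so I can afford brute force. By the probabilistic method, a random $g$ is injective on a fixed $F\subseteq[q^2]$ of size $\leq q$ with probability $\geq q!/q^q\geq e^{-q}$, and there are at most $q^{2q}$ such sets; picking $N=2^{O(q)}$ random functions and taking a union bound gives existence. Derandomization by the standard method of conditional expectations then yields $\mathcal{H}_2$ of size $2^{O(q)}$ in time $2^{O(q)}\mathrm{poly}(q)$.

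The final family is the composition $\mathcal{F}=\{g\circ h\mid h\in\mathcal{H}_1,\, g\in\mathcal{H}_2\}$. For any target $F$ of size $\leq q$, pick $h\in\mathcal{H}_1$ injective on $F$; then $h(F)\subseteq[q^2]$ has size $\leq q$, so some $g\in\mathcal{H}_2$ is injective on $h(F)$, whence $g\circ h$ is injective on $F$. The size is $|\mathcal{H}_1|\cdot|\mathcal{H}_2|=2^{O(q)}\log m$ and the construction time is $2^{O(q)}m\log m$, as required.

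The main obstacle is the explicit, deterministic construction of $\mathcal{H}_1$: the probabilistic argument is easy, but achieving size $O(\mathrm{poly}(q)\log m)$ with a near-linear-time construction requires genuinely algebraic tools (Reed--Solomon codes or, equivalently, small-bias spaces over an appropriate alphabet). The perfect-hashing derandomization for $\mathcal{H}_2$ is routine once one accepts spending $2^{O(q)}$ time, and the composition step is essentially bookkeeping.
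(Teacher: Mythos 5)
This statement is not proved in the paper at all: it is imported as a black box from Naor, Schulman and Srinivasan \cite{NaorSS95}, so there is no in-paper argument to compare against. Your two-level composition is indeed the skeleton of the actual NSS construction, and your outer family $\mathcal{H}_1$ is precisely the $(m,q,q^2)$-splitter of size $q^{O(1)}\log m$ that the paper separately quotes (the Alon--Yuster--Zwick/splitter theorem in the appendix); the composition step and the size/coverage bookkeeping are correct.

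There is, however, one genuine gap: your treatment of $\mathcal{H}_2$. The union-bound existence argument and the resulting \emph{size} bound $2^{O(q)}$ are fine, but the claim that the method of conditional expectations yields $\mathcal{H}_2$ in time $2^{O(q)}\poly(q)$ does not hold as stated. The greedy/conditional-expectation derandomization must track all subsets $F\subseteq[q^2]$ with $|F|\leq q$, of which there are $\binom{q^2}{\leq q}=2^{\Theta(q\log q)}$, and the underlying sample space of functions $[q^2]\to[q]$ has size $q^{q^2}$; fixing the function coordinate by coordinate still forces you to evaluate conditional expectations by summing over all $2^{\Theta(q\log q)}$ uncovered sets at each step. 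So the construction of $\mathcal{H}_2$ as you describe it runs in $2^{\Theta(q\log q)}$ time, which is not $2^{O(q)}$ and would spoil the overall $2^{O(q)}m\log m$ bound (and, downstream, the $2^{O(\mu k)}$ running time the paper extracts from this theorem). This is exactly the point where NSS need an additional idea: a further intermediate splitter that breaks $[q^2]$ into roughly $q/\log q$ blocks so that brute force is only ever applied to sets of size $O(\log q)$, which is where the $e^{q}q^{O(\log q)}=2^{O(q)}$ bound actually comes from. So the inner family is not ``routine'' in the way your last paragraph suggests; both levels of the construction require nontrivial tools, not just the outer one.
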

}
\begin{apprestatable}{theorem}{theoremFptByKandMuWorse}\label{thm:fpt_by_k_and_mu_worse}
    \SSDpath is 
    solvable in $2^{O(\mu k)}(n+m)\log n$ time.
\end{apprestatable}
\toappendix{
\sv{\theoremFptByKandMuWorse*}
\begin{proof}
    Let $q=\mu k+k-1$. We construct in $2^{O(q)}m\log m$ time the family $\mathcal{F}$ from \Cref{thm:hashing}. For every $\chi\in \mathcal{F}$, we decide whether there exists $\chi$-compliant $s$-$t$ path on at most $k$ vertices in $O(2^q(n+m+|f|))$ time using \Cref{lem:chi_compliant_running_time}. Correctness is as follows. Clearly, if we find a $\chi$-compliant $s$-$t$ path on at most $k$ vertices, it is also $f$-conforming by \Cref{obs:basic_properties_chi_compl_conf_rainbow}. On the other hand, if there is an $f$-conforming $s$-$t$ path $P=(v_1,e_1,v_2,\ldots,e_{r-1},v_r)$ for some $r\leq k$, by the properties of $\mathcal{F}$, there is $\chi\in \mathcal{F}$ such that $\chi$ is injective on $F=\bigcup_{i=1}^{r-1}\{e_i\}\cup \bigcup_{i=1}^rf(v_i)$, hence $P$ becomes $\chi$-rainbow, and by \Cref{obs:basic_properties_chi_compl_conf_rainbow}, $P$ is also $\chi$-compliant for this choice of $\chi$, hence it is found by the algorithm. The total running time is $2^{O(q)}m\log m + 2^{O(q)}\log (m) \cdot 2^q(n+m+|f|)$ which is the promised running time $2^{O(\mu k)}(n+m)\log n$. Note again that since $|f|\leq n\cdot \mu$, we hide the $|f|$ factor in the $2^{O(q)}\cdot n$ term in the running time.
\end{proof}
}

\toappendix{
Note that the perfect hashing family from \Cref{thm:hashing} introduces a $2^{O(q)}$ factor in the running time. We can instead use an $(m,q,q^2)$-splitter, but we will need to use $q^2=(\mu k+k-1)^2\leq k^2(\mu+1)^2$ colors.
}

\toappendix{
\begin{definition}[\cite{CyganParaAlg2015}]
    An $(n,k,\ell)$-splitter is a family $\mathcal{F}$ of functions from $[n]$ to $[\ell]$ such that for every set $S\subseteq [n]$ of size $k$ there is a function $\chi\in \mathcal{F}$ that splits $S$ evenly. That is, for every $1\leq j,j'\leq \ell, \Big||f^{-1}(j)\cap S)|-|f^{-1}(j')\cap S|\Big|\leq 1$.
\end{definition}

Note that for $\ell \geq k$ in the above definition, \emph{splitting $S$ evenly} reduces to $f$ being injective on $S$.

\begin{theorem}[\cite{AlonYZ1995}, verbatim from \cite{CyganParaAlg2015}]\label{thm:splitter}
    For any $n,k\geq 1$ one can construct an $(n,k,k^2)$-splitter of size $k^{O(1)}\log n$ in time $k^{O(1)}n\log n$.
\end{theorem}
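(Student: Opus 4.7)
The plan is to construct the splitter in two stages: first a probabilistic existence argument, then an explicit derandomization through $k$-wise independent hashing combined with a greedy covering step.

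First I would note that since $k^2 \geq k$, what we need is effectively an $(n,k,k^2)$-perfect hash family: a family $\mathcal{F}$ of functions $\chi\colon[n]\to[k^2]$ such that for every $k$-set $S\subseteq [n]$ at least one $\chi\in\mathcal{F}$ is injective on $S$. A uniformly random such $\chi$ is injective on a fixed $S$ with probability $\prod_{i=0}^{k-1}(1-i/k^2)$, which is bounded below by an absolute constant (at least $1/4$ for $k\geq 2$). A standard union bound over the $\binom{n}{k}$ sets shows that a random family of $O(k\log n)$ functions is a valid splitter with positive probability; this is the target size we must reach constructively.

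Next I would derandomize via polynomial hashing. Pick a prime $p\in[2k^2,4k^2]$ (available by Bertrand's postulate), embed $[n]$ into $\mathbb{F}_q$ for a prime $q>n$, and consider hash functions of the form $h_g(x)=g(x)\bmod p$ for polynomials $g\in\mathbb{F}_q[x]$ of degree at most $k-1$. This family is $k$-wise independent, so for any fixed $k$-set $S$, the fraction of $g$ such that $h_g$ is \emph{not} injective on $S$ is, by a union bound over the $\binom{k}{2}$ potential collisions, at most $\binom{k}{2}/p \leq 1/4$. Thus at least a $3/4$ fraction of the polynomials hash $S$ injectively.

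Finally I would apply the method of conditional probabilities (with the non-injectivity fraction as a pessimistic estimator) to pick, one polynomial coefficient at a time, a sequence of $O(k\log n)$ polynomials whose associated hash functions cover every $k$-set. Each choice only needs to preserve a counting invariant over the currently uncovered $k$-sets, and can be made in $k^{O(1)}n$ time by scanning all coefficient values. Iterating $k^{O(1)}\log n$ times yields the claimed family size and the $k^{O(1)}n\log n$ running time.

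The main obstacle is making the derandomization step algorithmically efficient while keeping the family size polynomial in $k$: a naive enumeration over all degree-$(k-1)$ polynomials has size $q^k$, which is far too large, so one must either process coefficients sequentially with a carefully chosen potential function or combine the above with an outer universe-reduction step via a perfect hash family on $[n]$ into a universe of size $k^{O(1)}$ (on which one can afford exhaustive search). Either route gives the explicit $k^{O(1)}\log n$ bound of Alon--Yuster--Zwick.
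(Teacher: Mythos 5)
First, a point of comparison: the paper does not prove this statement at all --- it is imported verbatim from Cygan et al.\ (originally Alon--Yuster--Zwick / Naor--Schulman--Srinivasan) and used as a black box, so there is no in-paper proof to measure your proposal against. Judging it on its own terms, the probabilistic existence argument and the $k$-wise independent polynomial family are fine (modulo the small bias from reducing $\mathbb{F}_q$ modulo $p$, which you can absorb by taking $q$ large relative to $n$), but the derandomization step has a genuine gap. The method of conditional probabilities with ``the non-injectivity fraction over currently uncovered $k$-sets'' as pessimistic estimator requires evaluating, after each partial fixing of coefficients, a conditional expectation that is a sum over up to $\binom{n}{k}$ surviving sets; there is no way to do this in the claimed $k^{O(1)}n$ time, and you do not exhibit a potential function that is actually computable. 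The same objection applies to a greedy ``pick the function covering the most uncovered sets'' selection. You flag this as the main obstacle, but neither escape route you offer is carried through: sequential coefficient processing still needs a computable estimator, and ``universe reduction followed by exhaustive search'' is too slow as stated, since exhaustively searching for an $(m,k,k^2)$-splitter even on a universe of size $m=k^{O(1)}$ ranges over $(k^2)^{m}$ candidate functions, not $k^{O(1)}$ of them.

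The standard way to close the gap is fully explicit and avoids conditional probabilities entirely: for every $k$-set $S\subseteq[n]$ there is a prime $p=O(k^2\log n)$ such that $x\mapsto x\bmod p$ is injective on $S$ (each pair of elements rules out at most $\log_2 n$ primes, and there are more than $\binom{k}{2}\log_2 n$ primes below $ck^2\log n$), and a second explicit level --- e.g.\ FKS-style multipliers $y\mapsto (ay\bmod p)\bmod k^2$, for which the expected number of collisions on a $k$-set is below $1$ --- finishes the job; the whole family is simply enumerated. Even then, hitting size $k^{O(1)}\log n$ rather than $k^{O(1)}\log^2 n$ requires some care in how the two levels are composed. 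Until the conditional-probabilities step is replaced by something of this explicit form, neither the size bound nor the running-time claim is established by your argument.
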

}
\begin{apprestatable}{theorem}{theoremFptByKAndMuBetter}\label{thm:fpt_by_k_and_mu_better}
    \SSDpath is solvable in $2^{O(k\log(k\mu))}(n+m)\log n$ time.
\end{apprestatable}
\toappendix{
\sv{\theoremFptByKAndMuBetter*}
\begin{proof}
    We use the same approach as in the proof of \Cref{thm:fpt_by_k_and_mu_worse}, but instead of the family of colorings we use an $(m,q,q^2)$-splitter and algorithm with running time $O(\binom{q^2}{k}2^k(n+m+|f|))$ from \Cref{lem:chi_compliant_running_time}. Recall that $q=\mu k+k-1\leq k(\mu+1)$ to obtain the total running time of $q^{O(1)}m\log m + (k^2(\mu+1)^2)^k\cdot 2^k(n+m+|f|)q^{O(1)}\log m \leq 2^{O(k\log (k\mu))}(n+m)\log (n)$, as promised. Here we used the bound $\binom{n}{k}\leq n^k$ and $|f| \in 2^{O(\log \mu)} \cdot n$.
\end{proof}
}%
\subsection{Parameterization by structural parameters and $\mu$ combined.}

We utilize the \FPT algorithms w.r.t. $k$ and $\mu$ combined to obtain several \FPT algorithms for various structural parameters  combined with~$\mu$. In the results that follow, we apply whichever of \Cref{thm:fpt_by_k_and_mu_better} or \Cref{thm:fpt_by_k_and_mu_worse} yields the better asymptotic running time for the given parameters. 

We start with the vertex cover number for which we prove similar bound as in \Cref{lem:length_of_path_in_td_and_vi}.

\begin{apprestatable}{observation}{obsLengthOfPathInVc}\label{obs:length_of_path_in_vc}
    Let $G$ be a graph with vertex cover number $\vcn$. Then $G$ contains no path on more than $2\vcn+1$ vertices.
\end{apprestatable}
\toappendix{
\sv{\obsLengthOfPathInVc*}
\begin{proof}
    Let $S\subseteq G$ be a vertex cover of $G$ of size $\vcn$.
    Let $P$ be any path in $G$ and consider the split of $P$ into segments by $S$. Observe that no segment can contain more than one vertex as otherwise there is an edge in $G\setminus S$, contradicting the assumption that $S$ is a vertex cover. Hence $P$ has at most $2\vcn + 1$ vertices.
\end{proof}
}%

\begin{apprestatable}{corollary}{corollaryFPTByMuAndVcViTd}\label{cor:fpt_by_mu_and_vc_vi_td}
    \SDpath can be solved in \mbox{$2^{O(\mu\cdot \vcn)}(n+m)\log n$} time, in \mbox{$2^{O(\mu\cdot \vi^2)}(n+m)\log n$} time, or in \mbox{$2^{O(\mu\cdot 2^{\td})}(n+m)\log n$} time. Moreover, algorithms with running times 
    \mbox{$2^{o(\vcn)}\poly(n)$}, \mbox{$2^{o(\vi^2)}\poly(n)$}, or \mbox{$2^{2^{o(\td)}}\poly(n)$} for \SDpath even for $\mu=1$ violate ETH.
\end{apprestatable}
\toappendix{
\sv{\corollaryFPTByMuAndVcViTd*}
\begin{proof}
    The algorithms are direct corollaries of \Cref{lem:length_of_path_in_td_and_vi}, \Cref{obs:length_of_path_in_vc}, and  \Cref{thm:fpt_by_k_and_mu_worse}. In fact, if one used \Cref{thm:fpt_by_k_and_mu_better}, we could reduce the $\mu$ factor to $\log \mu$, but an extra $\log \alpha$ factor would appear and this would not give the assymptotically optimal running time for constant $\mu$.
    
    The lower bounds follow from the same ideas as in the proof of \Cref{thm:w1completeness_vi_td}.
\end{proof}
}%
Unless $\FPT = \W{1}$, the \FPT algorithm for vertex cover cannot be extended already to the parameter distance to linear forest. 

\begin{apprestatable}{theorem}{theoremWOneCompletenessDtlfMuOne}\label{thm:w1completeness_dtlf_mu1}
    \SDpath is \W{1}-complete parameterized by the distance to linear forest, even if $\mu\leq 1$.
\end{apprestatable}

\toappendix{
\sv{\theoremWOneCompletenessDtlfMuOne*}
\begin{proof}
    Membership in \W{1} follows from \W{1} membership for \fvsn (\Cref{thm:membership_w1_fvsn}). For the hardness, take \Cref{construction:mcc} and replace each vertex $y_w$ by a path on $|f(y_w)|$ vertices and delete one edge of the original set $f(y_w)$ per vertex of the new path. Such graph has $\mu_f\leq 1$. The graph $G\setminus \{g_0,\ldots,g_k\}$ is not not necessarily an independent set but a collection of vertex-disjoint paths. Hence $G$ has distance to disjoint paths at most $k+1$.
\end{proof}
}%
\lv{
Notice that with the modification of \Cref{construction:mcc} from the proof of \Cref{thm:w1completeness_dtlf_mu1}, the resulting path might be long. This confirms the intuition that if we parameterize by both $k$ and $\mu$, \SSDpath is \FPT, while parameterization by $k$ or $\mu$ alone is \W{1}-hard (resp. \paraNP-hard).
}
\subsection*{Dense parameters and $\mu$ combined.}

In the remainder of this section we focus on parameters whose bounded values together with the existence of a long path imply dense structure of the graph in some sense. As a warmup example, recall that \SDpath is \NP-hard on cliques (\Cref{cor:hardness_cliques}). Observe that there is always an $f$-conforming path on at most $\mu+2$ vertices when the underlying graph is a clique, if there is any $f$-conforming path at all. This is because the vertex $s$ cannot delete more than $\mu$ edges, so if the path is longer, then there is a shortcut that we can take. By plugging this upper bound into \Cref{thm:fpt_by_k_and_mu_better}, we obtain the following:

\begin{observation}\label{obs:fpt_on_cliques_worse}
\SDpath is solvable in $2^{O(\mu \log \mu)}(n+m)\log n$ time on cliques.    
\end{observation}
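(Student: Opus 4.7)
The plan is to combine the observation noted just before the statement — that on a clique any $f$-conforming $s$-$t$ path can be shortened to at most $\mu+2$ vertices — with \Cref{thm:fpt_by_k_and_mu_better}.

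First, I would formalize the shortening argument. Let $P=(s=v_0,v_1,\ldots,v_\ell=t)$ be a minimum-length $f$-conforming $s$-$t$ path and suppose for contradiction that $\ell\ge\mu+2$. Since $G$ is a clique, all edges $\{v_0,v_j\}$ for $j\in\{2,\ldots,\ell\}$ exist, giving $\ell-1\ge\mu+1$ shortcut candidates. Because $|f(v_0)|\le\mu$, at least one such $\{v_0,v_j\}$ is not in $f(v_0)$. I then argue that the shortcut $P'=(v_0,v_j,v_{j+1},\ldots,v_\ell)$ is itself an $f$-conforming $s$-$t$ path: the first edge $\{v_0,v_j\}$ is not deleted by $v_0$ by choice of $j$; and for each $i\in\{j,\ldots,\ell-1\}$, the edge $\{v_i,v_{i+1}\}$ was already traversed in $P$ after visiting $v_0,\ldots,v_i$, so it lies outside $f(v_0)\cup f(v_i)$, which is all that is required for $P'$. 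Since $P'$ is simple (we only skipped vertices) and strictly shorter than $P$, this contradicts the minimality of $P$.

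Having established that we may restrict the search to paths on at most $k:=\mu+2$ vertices, the problem reduces to an equivalent \SSDpath instance with this value of $k$. Applying \Cref{thm:fpt_by_k_and_mu_better} gives a running time of
\[
2^{O(k\log(k\mu))}(n+m)\log n \;=\; 2^{O(\mu\log\mu)}(n+m)\log n,
\]
which matches the claimed bound.

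The only place where I would need to be careful is verifying that the shortcut preserves $f$-conformity rather than just connectivity. Since the construction only removes vertices from the interior of $P$, no "new" earlier deletions can arise along the shortened prefix, and the deletion sets of the retained vertices acted on later edges in exactly the same way in $P$; so the check goes through without any real obstacle.
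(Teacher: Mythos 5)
Your proposal is correct and follows exactly the paper's route: the shortcut argument showing that a shortest $f$-conforming path on a clique has at most $\mu+2$ vertices (since $s$ can delete at most $\mu$ of its incident edges), followed by plugging $k=\mu+2$ into \Cref{thm:fpt_by_k_and_mu_better}. One phrasing nit: what $P'$ actually requires is that $\{v_i,v_{i+1}\}$ avoids $f(v_0)\cup f(v_j)\cup\cdots\cup f(v_i)$, not merely $f(v_0)\cup f(v_i)$, but your justification via the $f$-conformity of $P$ already yields this stronger containment, so the argument goes through.
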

We build upon this idea. In general, vertices of a $k$-vertex path can delete at most $k\mu$ edges. On the other hand, if we consider only shortest ($f$-conforming) paths, any shortcut edge on the vertices of the path must be deleted as otherwise the path can be shortened, contradicting that it is the shortest path. We thus obtain the following:

\begin{observation}\label{lem:shortest_path}
    Let $(G=(V,E),f)$ be a self-deleting graph and let $P=(v_1,v_2,\ldots,v_k)$ be a shortest $f$-conforming $v_1$-$v_k$ path in $(G,f)$. Let $G_P=G[V(P)]$ be the subgraph of $G$ induced by the vertices of $P$. Then $|E(G_P)|\leq k\mu+k-1\leq k(\mu+1)$.
\end{observation}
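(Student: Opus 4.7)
The plan is to partition the edges of $G_P$ into two types: the $k-1$ path edges $\{v_i,v_{i+1}\}$ and the \emph{chords} $\{v_i,v_j\}$ with $i<j$ and $j\geq i+2$. The path edges contribute exactly $k-1$, so it suffices to bound the number of chords by $k\mu$.

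The key step will be a standard shortcut argument. Suppose, toward a contradiction, that some chord $\{v_i,v_j\}$ with $i<j$ does not belong to $\bigcup_{\ell=1}^{i}f(v_\ell)$. Then I would consider the walk $P'=(v_1,\ldots,v_i,v_j,v_{j+1},\ldots,v_k)$, which has $k-(j-i-1)<k$ vertices, is still a $v_1$-$v_k$ walk, and is in fact a path because $P$ was. It remains to check that $P'$ is $f$-conforming. The prefix $(v_1,\ldots,v_i)$ is unchanged from $P$, so no deletion violation can occur there. The new edge $\{v_i,v_j\}$ is by assumption not in $f(v_\ell)$ for any $\ell\leq i$, so it is traversable at this point. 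For the suffix $(v_j,v_{j+1},\ldots,v_k)$, observe that in $P'$ every vertex $v_\ell$ with $\ell\geq j$ is preceded only by a subset of the vertices that preceded it in $P$ (we merely dropped $v_{i+1},\ldots,v_{j-1}$). Hence the set of edges deleted before reaching $v_\ell$ in $P'$ is contained in the corresponding set for $P$; since $P$ is $f$-conforming, so is the suffix of $P'$. Thus $P'$ is a strictly shorter $f$-conforming $v_1$-$v_k$ path, contradicting the shortest-path assumption.

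It follows that every chord of $G_P$ lies in $\bigcup_{\ell=1}^{k}f(v_\ell)$, a set whose size is at most $\sum_{\ell=1}^{k}|f(v_\ell)|\leq k\mu$. Therefore the number of chords is at most $k\mu$, and combined with the $k-1$ path edges we obtain
\[
|E(G_P)|\leq k\mu+k-1\leq k(\mu+1),
\]
as claimed. The only delicate point I expect is the $f$-conformity argument for the suffix of $P'$: one must be careful to observe that shortcutting can only shrink the set of triggered deletions, never enlarge it, so conformity is inherited downward. Once this monotonicity is noted, the counting is immediate.
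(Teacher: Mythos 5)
Your proof is correct and follows exactly the argument the paper sketches before stating the observation: the $k-1$ path edges are counted directly, and every chord must appear in some $f(v_\ell)$ (hence at most $k\mu$ of them), since otherwise the shortcut would yield a strictly shorter $f$-conforming path. Your careful verification that the suffix of the shortcut path remains conforming (because the set of previously visited vertices only shrinks) is the right way to make the paper's one-line justification rigorous.
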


\renewcommand{\arraystretch}{1.4}
\begin{table}[th!]
    \centering
    \begin{tabular}{|c|c|c|c|}
        \toprule
         parameter $\alpha$ & lower bound & length of path & running time \\\midrule
         distance to cograph & $n\log \frac{n}{\alpha}$ & $\alpha2^{O(\mu)}$ & $2^{\alpha2^{O(\mu)}}(n+m)\log n$ \\\hline
         cluster vertex deletion number
        & $\frac{n^2}{\alpha}$ & $O(\alpha\mu)$ & $2^{\widetilde{O}(\alpha \mu)}(n+m)\log n$ \\\hline
         neighborhood diversity & $\frac{n^2}{\alpha}$ & $O(\alpha \mu)$ & $2^{\widetilde{O}(\alpha \mu)}(n+m)\log n$ \\\hline
         modular-width & $n\log_{\alpha}n$ & $\alpha^{O(\mu)}$ & $2^{\alpha^{O(\mu)}}(n+m)\log n$\\\hline
         maximum induced matching & $\frac{n^2}{\alpha}$ & $O(\alpha \mu)$ & $2^{\widetilde{O}(\alpha \mu)}(n+m)\log n$  \\\hline
         shrub-depth & $n^{1+\frac{1}{2^\alpha-1}}$ & $\mu^{2^{O(\alpha)}}$ & $2^{\mu^{2^{O(\alpha)}}}(n+m)\log n$ \\\bottomrule
    \end{tabular}
    \caption{
    Overview of the framework for parameterization by structural parameters $\alpha$ and $\mu$ combined. The lower bound column is the asymptotic lower bound on the number of edges for traceable graph with given parameter bounded by $\alpha$ proven by Dvořák et al.~\cite{DvorakKOPSS2025}. The third column indicates what is the implied upper bound on the length of any shortest $f$-conforming path in such graphs. The fourth column is the final running time of the algorithm using the better from \Cref{thm:fpt_by_k_and_mu_better,thm:fpt_by_k_and_mu_worse}. Note that $\alpha$ should be replaced by $\max \{\alpha, 1\}$. We write just $\alpha$ for readability purposes.}
    \label{tab:comprehensive_overview_c_plus_structural}
\end{table}

We now utilize the results of Dvořák et al.~\cite{DvorakKOPSS2025}. They provide a lower bound on the number of edges in the input graph, given that it contains long (Hamiltonian) path. Recall that graphs containing a Hamiltonian path are also called \emph{traceable}. By combining these bounds together with \Cref{lem:shortest_path}, we obtain an upper bound on the length of a shortest $f$-conforming $s$-$t$ path in the underlying graph in terms of $\mu$ and some structural parameter $\alpha$, see \Cref{tab:comprehensive_overview_c_plus_structural} for an overview. Recall that $\widetilde{O}$ supresses polylogarithmic factors.

\begin{lemma}\label{lem:ck_parameter}
    Let $\alpha$ be a graph parameter monotone under taking induced subgraphs. Suppose there are global constants $C,D$ such that any traceable graph $G$ with $n\geq C\cdot \alpha(G)$ vertices contains at least $D\frac{n^2}{\alpha(G)}$ edges. Then \SDpath can be solved in $2^{\widetilde{O}(\alpha \mu)}(n+m)\log n$ time. In particular it is \FPT parameterized by $\alpha$ and $\mu$ combined.
\end{lemma}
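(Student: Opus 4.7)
The plan is to combine \Cref{lem:shortest_path} with the density assumption on traceable graphs to upper-bound the length of any shortest $f$-conforming $s$-$t$ path by $O(\alpha\mu)$, and then feed this bound into the color-coding algorithm of \Cref{thm:fpt_by_k_and_mu_better}.

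First I would let $P$ be a shortest $f$-conforming $s$-$t$ path in $(G,f)$, on $k$ vertices, and consider the induced subgraph $H=G[V(P)]$. Since $P$ is a Hamiltonian path of $H$, the graph $H$ is traceable. By monotonicity of $\alpha$ under induced subgraphs, $\alpha(H)\leq \alpha(G)=\alpha$. \Cref{lem:shortest_path} then yields $|E(H)|\leq k(\mu+1)$.

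Next I would split into two cases on the size of $k$ so that the density hypothesis is only applied when it actually fires. If $k<C\cdot \alpha(H)$, then already $k=O(\alpha)=O(\alpha\mu)$. Otherwise $k\geq C\cdot \alpha(H)$, and the assumed lower bound gives $|E(H)|\geq D\,k^{2}/\alpha(H)$. Combining the two bounds on $|E(H)|$ gives $D\,k^{2}/\alpha(H)\leq k(\mu+1)$, i.e.\ $k\leq (\mu+1)\alpha(H)/D=O(\alpha\mu)$. In both cases $k=O(\alpha\mu)$, so in particular the shortest $f$-conforming $s$-$t$ path (if any exists) has at most $k^{*}=O(\alpha\mu)$ vertices.

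Finally, the existence of an $f$-conforming $s$-$t$ path in $(G,f)$ is equivalent to the existence of one on at most $k^{*}$ vertices, so I would invoke \Cref{thm:fpt_by_k_and_mu_better} with path-length budget $k^{*}$. The resulting running time is $2^{O(k^{*}\log(k^{*}\mu))}(n+m)\log n=2^{O(\alpha\mu\log(\alpha\mu))}(n+m)\log n=2^{\widetilde{O}(\alpha\mu)}(n+m)\log n$, as claimed, and since $k^{*}$ is bounded by a function of $\alpha$ and $\mu$ the algorithm is \FPT in these parameters combined. The main subtle point I expect to handle carefully is the case split: the density hypothesis only fires once the traceable induced subgraph is sufficiently large relative to $\alpha(H)$, so the small-$k$ branch must be disposed of separately before the density inequality can be inverted; the rest is a direct chain of estimates.
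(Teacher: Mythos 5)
Your proposal is correct and follows essentially the same route as the paper's proof: bound the number of vertices of a shortest $f$-conforming path by $O(\alpha\mu)$ via the dichotomy between the small-$k$ case and the density lower bound combined with \Cref{lem:shortest_path}, then invoke \Cref{thm:fpt_by_k_and_mu_better}. The explicit case split you highlight is exactly how the paper handles the hypothesis only applying for $k\geq C\cdot\alpha$ (it takes $k=\max\{C\cdot\alpha(G),\frac{1}{D}\alpha(G)(\mu+1)\}$), so there is nothing to add.
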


\begin{proof}
    Let $(G,f,s,t)$ be the input instance of \SDpath and let $P$ be a shortest $f$-conforming $s$-$t$ path in $G$. Let $k=|V(P)|$ and let $G_P=G[V(P)]$ be the graph induced by vertices of $P$. By assumption on $\alpha$, either $k<C \cdot \alpha(G_P)\leq C \cdot \alpha(G)$ (because $\alpha$ is monotone),
    or $|E(G_P)|\geq D\frac{k^2}{\alpha(G_P)}$. By \Cref{lem:shortest_path} we also have $|E(G_P)|\leq k(\mu+1)$. By combining these two bounds we obtain the bound $k\leq \frac{1}{D}\alpha(G_P)(\mu+1) \leq \frac{1}{D}\alpha(G)(\mu+1)$. By plugging $k=\max\{C\cdot \alpha(G),\frac{1}{D}\alpha(G)\cdot (\mu+1)\}$ into \Cref{thm:fpt_by_k_and_mu_better} we obtain the promised algorithm with running time $2^{O(\alpha \mu \log (\alpha \mu))}(n+m)\log n$.
\end{proof}

We could prove an analogous version of \Cref{lem:ck_parameter} also with the functions $n\mapsto D\cdot n\log \frac{n}{\alpha}$, $n\mapsto D\cdot n\log_\alpha n$, or $n\mapsto D\cdot n^{1+\frac{1}{2^{\alpha}-1}}$ with different running time of the algorithm (see \Cref{tab:comprehensive_overview_c_plus_structural}).

\begin{theorem}\label{thm:fpt_by_dense_and_mu}
    \SDpath is \FPT w.r.t. $\alpha$ and $\mu$ combined where $\alpha$ is one of the following parameters: cluster vertex deletion number, neighborhood diversity, distance to cograph, modular-width, maximum induced matching, shrub-depth.
\end{theorem}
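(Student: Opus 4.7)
The plan is to apply the template of \Cref{lem:ck_parameter} six times, once per parameter, using for each the appropriate edge lower bound for traceable graphs supplied by Dvořák et al.~\cite{DvorakKOPSS2025} and summarized in the ``lower bound'' column of \Cref{tab:comprehensive_overview_c_plus_structural}. Concretely, for each parameter $\alpha$ in the list I first check that $\alpha$ is monotone under taking induced subgraphs; this is straightforward for all six (a vertex deletion set to a hereditary class can only shrink, a neighborhood-diversity partition induces one on any induced subgraph, likewise for cograph/cluster modulators, modular decompositions, induced matchings, and tree-models witnessing shrub-depth).

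Next, following the argument in the proof of \Cref{lem:ck_parameter}, I look at a shortest $f$-conforming $s$-$t$ path $P$ and the induced subgraph $G_P = G[V(P)]$, which is traceable and still has parameter value at most $\alpha(G)$. Combining the lower bound on $|E(G_P)|$ from~\cite{DvorakKOPSS2025} with the upper bound $|E(G_P)| \leq k(\mu+1)$ from \Cref{lem:shortest_path} yields an upper bound on $k = |V(P)|$ purely in terms of $\alpha$ and $\mu$: $O(\alpha\mu)$ for cluster vertex deletion, neighborhood diversity, and maximum induced matching (using the $n^2/\alpha$ bound, precisely as in \Cref{lem:ck_parameter}); $\alpha \cdot 2^{O(\mu)}$ for distance to cograph (using $n \log(n/\alpha)$); $\alpha^{O(\mu)}$ for modular-width (using $n \log_\alpha n$); and $\mu^{2^{O(\alpha)}}$ for shrub-depth (using $n^{1 + 1/(2^\alpha - 1)}$). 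Plugging this length bound in as $k$ into \Cref{thm:fpt_by_k_and_mu_better} (or \Cref{thm:fpt_by_k_and_mu_worse} when more favourable) gives, in each case, an \FPT algorithm with the running time displayed in the last column of \Cref{tab:comprehensive_overview_c_plus_structural}, which proves the theorem.

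The only non-routine obstacle is translating the three non-standard lower bound shapes ($n \log(n/\alpha)$, $n \log_\alpha n$, $n^{1 + 1/(2^\alpha - 1)}$) through the same argument as in \Cref{lem:ck_parameter}: solving $k(\mu+1) \geq D\, g(k,\alpha)$ for $k$ is an elementary inequality manipulation in each case, but it has to be done carefully to obtain the stated bounds (for shrub-depth, for instance, one isolates $k$ from $k(\mu+1) \geq D\, k^{1 + 1/(2^\alpha - 1)}$, giving $k \leq ((\mu+1)/D)^{2^\alpha - 1}$). Once those three short computations are done, the six cases of the theorem follow uniformly from the same scheme as \Cref{lem:ck_parameter}, so the proof is essentially a table-lookup combined with invoking \Cref{thm:fpt_by_k_and_mu_better,thm:fpt_by_k_and_mu_worse}.
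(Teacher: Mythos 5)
Your proposal is correct and follows essentially the same route as the paper: the paper proves the cluster-vertex-deletion case explicitly by invoking Lemma~\ref{lem:ck_parameter} with the concrete constants from~\cite{DvorakKOPSS2025} (handling the degenerate $\cvdn(G)=0$ case via Observation~\ref{obs:fpt_on_cliques_worse}) and states that the remaining five parameters follow analogously from the lower bounds in Table~\ref{tab:comprehensive_overview_c_plus_structural}, exactly as you describe. Your explicit isolation of $k$ for the three non-quadratic bound shapes matches the lengths listed in the table, so the only cosmetic difference is that you spell out uniformly what the paper delegates to ``proved similarly.''
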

\begin{proof}
    We prove the theorem for cluster vertex deletion number, the rest is proved similarly by using suitable lower bound from \Cref{tab:comprehensive_overview_c_plus_structural}. If $\cvdn(G)=0$, the graph is a cluster and we can restrict ourselves to the clique where $s$ and $t$ lies and use \Cref{obs:fpt_on_cliques_worse}. Otherwise, if $\cvdn(G)\geq 1$, by the result of Dvořák et al.~\cite{DvorakKOPSS2025} any traceable graph $G$ on $n \geq 4\cvdn(G)$ vertices contains at least $\frac{n^2}{16(\cvdn(G)+1)}\geq \frac{n^2}{32\cvdn(G)}$ edges. Invoke \Cref{lem:ck_parameter} for $D=\frac{1}{32}$ and $C=4$ to obtain the desired algorithm.
\end{proof}

\subparagraph*{Domination Number}
\FPT algorithms for distance to cograph, modular-width, or maximum induced matching cannot be extended to an \FPT algorithm for the diameter of the graph (and $\mu$ combined). We show that \SDpath is \paraNP-hard already for domination number and $\mu$ combined.

\begin{apprestatable}{theorem}{theoremHardnessDomination}\label{thm:hardness_domination}
       \SDpath remains \NP-hard even when the self-deleting graph $(G,f)$ satisfies $\gamma(G)=\mu_f=1$.
\end{apprestatable}
\toappendix{
\sv{\theoremHardnessDomination*}
\begin{proof}
We reduce from \SDpath with $\mu \leq 1$ which is \NP-hard by \Cref{cor:hardness_fv1}. Let $(G,f,s,t)$ be such an instance of \SDpath.
    
    We create a new graph $G'$ from $G$ by attaching leaves $s'$ and $t'$ to $s$ and $t$, respectively, and then adding a universal vertex $u$. 
    Furthermore, we construct $f'$ in such a way that $f'(v) = f(v)$ for each vertex $v \in V(G)$, $f'(u)=\{\{t,t'\}\}$, and $f(s') = \{\{u, t'\}\}$. 
    The new instance is $(G',f',s',t')$. 
    Clearly, $\{u\}$ is a dominating set for $G'$, hence $\gamma=1$ and $\mu=1$, as claimed. 

    We now show that there is an $f$-conforming $s$-$t'$ path $P$ in $G$ if and only if there is an $f'$-conforming $s'$-$t'$ path $P'$ in $G'$.
    Both edges $\{s', s\}$ and $\{t,t'\}$ are not deleted by any vertex in $G$ nor $s'$ nor $t'$. 
    Therefore, $P$ can be extended by prepending $s'$ and appending $t'$ to create $f'$-conforming $s'$-$t'$ path in $G'$.

    In the other direction the vertex $t'$ is connected to the rest of $G'$ with edges $\{t, t'\}$ and $\{u, t'\}$.
    The edge $\{u, t'\}$ is immediately removed by $s'$.
    Including $u$ in $P$ disconnects the vertex $t'$ from the rest of $G'$ and thus $u$ is not in $P$.
    Therefore, $P$ can be obtained from $P'$ by removing the first and last vertex.      
\end{proof}
}%

\appsection{Kernels}{sec:kernels}
\label{sec:kernel}
In this section, we show that while \SDpath admits \FPT algorithms for broad number of structural parameters with $\mu$ combined, it does not admit a polynomial kernel w.r.t.\  the vertex cover number and $\mu$ combined already in the class of $2$-outerplanar graphs unless the polynomial hierarchy collapses. Moreover, there is no polynomial kernel in the class of cliques w.r.t.\ $\mu$, but there is a linear Turing kernel w.r.t.\ $\mu$ in the class of cliques.

\toappendix{
\subsubsection*{Kernelization Preliminaries}
A \emph{kernel} for a parameterized problem $Q$ is an algorithm that, given an instance $(x,k)$ of~$Q$, works in polynomial time in $(|x|+k)$ and returns an equivalent instance $(x',k')$ of $Q$. Moreover, $|x'|+k'\leq g(k)$ for some computable function $g$.
A \emph{Turing kernel} for a parameterized problem~$Q$ is an algorithm that, given an instance $(x,k)$ of~$Q$, decides whether $(x,k)\in Q$ in time polynomial in $(|x|+k)$, when given access to an oracle that decides membership in~$Q$ for any instance $(x',k')$ with $|x'|+k'\leq g(k)$ in a single step for some computable function $g$.
In the above definitions, if $g$ is polynomial or linear function, we say that $Q$ admits a \emph{polynomial or linear (Turing) kernel}, respectively.
\begin{definition}[polynomial equivalence relation~\cite{BodlaenderJK2012_nonkernels}]
An equivalence relation~$\mathcal{R}$ on the set~$\Sigma^*$ is called a \emph{polynomial equivalence relation} if the following conditions are satisfied:
\begin{enumerate}
    \item There exists an algorithm that, given strings $x,y\in \Sigma^*$, decides whether $(x,y)\in \mathcal{R}$ in time polynomial in $|x|+|y|$.
    \item For every $n\in\mathbb{N}$, $\mathcal{R}$ splits the set of strings from $\Sigma^*$ of length at most $n$ into at most $\operatorname{poly}(n)$ many equivalence classes.
\end{enumerate}
\end{definition}

\begin{definition}[OR-cross-composition~\cite{BodlaenderJK2012_nonkernels}]
    Let $L\subseteq \Sigma^*$ be a language and $\mathcal{R}$ a polynomial equivalence relation on $\Sigma^*$ and let $Q\subseteq \Sigma^*\times \mathbb{N}$ be a parameterized problem. An \emph{OR-cross-composition of $L$ into $Q$ (with respect to $\mathcal{R}$)} is an algorithm that, given $\tau$ instances $x_1,x_2,\ldots,x_{\tau}\in \Sigma^*$ of $L$ belonging to the same equivalence class of $\mathcal{R}$, takes time polynomial in $\sum_{i=1}^{\tau}|x_i|$ and outputs an instance $(y,k)\in \Sigma^*\times \mathbb{N}$ such that:
    \begin{enumerate}
        \item $k\leq \operatorname{poly}(\max_{i}|x_i|+\log \tau)$
        \item $(y,k)\in Q$ if and only if there is $i\in[\tau]$ such that $x_i\in L$.
    \end{enumerate}
\end{definition}

\begin{theorem}[\cite{BodlaenderJK2012_nonkernels}]\label{thm:cross_composition_machinery}
    If an \NP-hard language $L$ OR-cross-composes into the parameterized problem $Q$, then $Q$ does not admit a polynomial kernel unless $\NP\subseteq \coNP/_{\operatorname{poly}}$.
\end{theorem}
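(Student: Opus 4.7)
The statement is a standard consequence of the Fortnow--Santhanam distillation theorem, which says that if an \NP-hard language $L$ has an OR-distillation whose output size is bounded polynomially in the length of the longest input, then $\NP\subseteq \coNP/_{\poly}$. The plan is to assume, for contradiction, that $Q$ admits a polynomial kernel of size $p(k)$ for some polynomial $p$, and then compose this kernel with the OR-cross-composition to obtain such a distillation for $L$.

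\textbf{Step 1 (bucketing by $\mathcal{R}$).} Given $\tau$ instances $x_1,\dots,x_\tau$ of $L$, each of bitlength at most $n$, partition them according to the equivalence classes of the polynomial equivalence relation $\mathcal{R}$. Since $\mathcal{R}$ has at most $q(n)$ classes on strings of length $\leq n$ for some polynomial $q$, and membership in a class is decidable in polynomial time, this bucketing can be carried out in time polynomial in $\sum_i|x_i|$.

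\textbf{Step 2 (compose cross-composition with the kernel).} For each nonempty bucket $B_j$ containing instances $x_{i_1},\dots,x_{i_{s_j}}$, invoke the OR-cross-composition of $L$ into $Q$ to obtain $(y_j,k_j)$ with $k_j\leq r(n+\log s_j)\leq r(n+\log\tau)$ for some polynomial $r$, and such that $(y_j,k_j)\in Q$ iff some $x_{i_\ell}\in L$. Then apply the hypothetical polynomial kernel to $(y_j,k_j)$ to obtain an equivalent instance $(y_j',k_j')$ of $Q$ of total size at most $p(k_j)\leq p(r(n+\log\tau))$.

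\textbf{Step 3 (assemble the distillation).} The disjunction $\bigvee_{i=1}^\tau [x_i\in L]$ holds iff $\bigvee_j [(y_j',k_j')\in Q]$. Encoding the tuple $\bigl((y_1',k_1'),\dots,(y_{q(n)}',k_{q(n)}')\bigr)$ as a single instance of (a language equivalent to) $Q$, say via padding or via standard $\mathrm{OR}$-closure of $\NP$, yields a polynomial compression of $\mathrm{OR}(L)$ whose output length is bounded by $q(n)\cdot p(r(n+\log\tau))$, which is polynomial in $n$ whenever $\log\tau\leq n^{O(1)}$. By the Fortnow--Santhanam theorem (as sharpened for cross-composition in~\cite{BodlaenderJK2012_nonkernels}), an \NP-hard language admitting such a compression forces $\NP\subseteq \coNP/_{\poly}$, contradicting our assumption.

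\textbf{Main obstacle.} The routine calculations are straightforward; the delicate part is invoking the Fortnow--Santhanam machinery correctly. In particular, one must verify that composing cross-composition with the kernel yields an \emph{OR-distillation} with output size polynomial in $n$ (not in $\tau$), and this is exactly where the parameter bound $k_j\leq r(n+\log\tau)$ in the definition of cross-composition is crucial: it guarantees that $p(r(n+\log\tau))$ stays polynomial in $n$ for the relevant choices of $\tau$. Handling this and the bucketing carefully (so that the polynomial $q(n)$ blowup in the number of classes does not destroy the bound) is the main technical point.
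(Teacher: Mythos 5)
The paper does not prove this theorem at all---it imports it verbatim from the cited reference \cite{BodlaenderJK2012_nonkernels}---and your sketch correctly reproduces the standard argument from that source: bucket by $\mathcal{R}$, compose the cross-composition with the hypothetical kernel, and feed the resulting OR-distillation into the (generalized) Fortnow--Santhanam theorem, with the bound $k_j\leq r(n+\log\tau)$ and deduplication (so $\log\tau\leq n+1$) doing exactly the work you identify. The only nitpick is your aside about ``OR-closure of \NP'': since $Q$ is not assumed to be in \NP, one should instead invoke the version of Fortnow--Santhanam that permits distillation into an \emph{arbitrary} target language (here $\mathrm{OR}$ of the unparameterized $Q$), which is the form actually proved in the cited reference.
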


We remark that $\NP\subseteq \textsf{coNP}/_{\operatorname{poly}}$ implies that the polynomial hierarchy collapses to the third level~\cite{Yap1983}. %
}

\begin{apprestatable}{theorem}{theoremNoKernel}\label{thm:nokernel}
     Unless $\NP\subseteq \coNP/_{\operatorname{poly}}$, \SDpath does not admit a polynomial kernel with respect to 
     \begin{enumerate}[a)]
      \item $\vcn$ and $\mu$ combined, even on $2$-outerplanar graphs;
      \item $\vi$ even with $\mu=1$ and on $2$-outerplanar graphs;
      \item $\mu$ on cliques.
     \end{enumerate}
\end{apprestatable}
\sv{
\begin{proof}[Proof Sketch for \itemstyle{a)}]
    We provide an OR-cross-composition of \tsat into \SDpath parameterized by $\vcn$ and $\mu$. The idea is to use \Cref{construction:sat} for a formula with all $O(n^3)$ clauses with auxiliary \emph{skip edges} that allow to pass a clause for free. Before $s$ we prepend selector vertices for $\tau$ instances that will delete appropriate skip edges and thus modify the rest of the graph to look like the reduction for the given formula. Details can be found in the appendix.
\end{proof}

}

\toappendix{
\sv{\theoremNoKernel*}
\begin{proof}
\begin{enumerate}[a)]
    \item We provide an OR-cross-composition of \tsat into \SDpath parameterized by $\vcn$ and $\mu$. We first define a suitable polynomial equivalence relation $\mathcal{R}$. Two instances of \tsat are $\mathcal{R}$-equivalent if they contain the same number of variables. Given $\tau$ formulas $\varphi_1,\varphi_2,\ldots,\varphi_{\tau}$ over $n$ variables (without loss of generality) $X=\{x_1,x_2,\ldots,x_{n}\}$, invoke \Cref{construction:sat} for the formula $\varphi^*$ that contains all $\binom{2n}{3}\leq 8n^3$ possible clauses on~$X$. Slightly modify the clause gadget for each clause $C\in \varphi^*$ by adding a \emph{skip edge} from $\iota_{C}$ to $o_C$. Denote this part of the construction as $G$. Next, add a new starting vertex $s'$ and connect $s'$ with $s$ by $\tau$ edge-disjoint paths of length $2$. Denote the $i$-th path by $P_i=(s',v_i,s)$. The middle vertices $v_i$ will be responsible for selecting the instance to activate inside $G$. More precisely, the vertex $v_i$ deletes all skip edges corresponding to clauses that are present in the formula $\varphi_i$. Formally, $f(v_i)=\{\{\iota_C,o_C\}\mid C\in \varphi_i\}$. The resulting graph is denoted~$G'$, the starting vertex is $s'$, and the target vertex is $t$ (as in~$G$). The correctness of the construction inside $G$ follows from similar arguments as in \Cref{claim:construction_sat_correctness}. The selectors~$v_i$ guarantee that exactly the clauses $C\in \varphi_i$ must be passed using the edges $e_\ell^C$ and the remaining clauses can be skipped using the skip edges. Hence, $(G',f,s',t)$ is a yes-instance if and only if at least one of $\varphi_1,\ldots,\varphi_{\tau}$ is satisfiable. Note that size of $G$ is polynomial in $\max_{i}|\varphi_i|$, the set $V(G)\cup \{s'\}$ forms a vertex cover for $G'$ and $\mu \leq 8n^3$, hence this is a valid OR-cross-composition. It remains to argue that $G'$ is $2$-outerplanar. Clearly we can draw $G'$ by having the vertex $s'$ and all the $\iota$ and $o$ vertices of all the gadgets on the outer face. Removing these vertices leaves a forest which is clearly outerplanar. The claimed result follows from \Cref{thm:cross_composition_machinery}.
    \item We use the same approach as in \itemstyle{a)}, but modify \Cref{construction:sat} as in \Cref{cor:hardness_fv1}: replace each selector vertex $v_i$ by a path of length $|f(v_i)|$. Now the set $V(G)\cup \{s'\}$ is not necessarily a vertex cover but leaves a set of paths that are polynomial in the size of $G$ (which is polynomial in $\max_{i}|\varphi_i|$). Hence $\vi(G')$ is polynomial in $\max_{i}|\varphi_i|$ and thus this is a valid OR-cross-composition of \tsat into \SDpath parameterized by $\vi$ with $\mu = 1$.
    \item We again use the same approach as in \itemstyle{a)}, but add all missing edges so that $G'$ is a clique. Now, each vertex of $G'$ only deletes its nonexistent edges into $G$ (but not into the selector vertices $v_i$). Any $f$-conforming $s'$-$t$ path is still forced to take at least one selector vertex $v_i$ before entering $G$ and there is clearly no advantage of visiting a selector vertex twice. More precisely, every $f$-conforming path visiting more than one selector vertex can be shortened to visit exactly one selector vertex.
    \end{enumerate}
\end{proof}
}%

\begin{apprestatable}{theorem}{theoremKernel}\label{thm:kernel}
    \SDpath admits 
    \begin{enumerate}[a)]
     \item an $O(\fen)$ kernel;
     \item an $O(\vcn)$ kernel on outerplanar graphs;
     \item a Turing kernel with $O(\mu)$ vertices on cliques.
    \end{enumerate}    
\end{apprestatable}

%
%
%

\toappendix{
\sv{\theoremKernel*}
\sv{\subparagraph*{Linear kernel for feedback edge number}}
To prove \Cref{thm:kernel} \itemstyle{a)}, we design two reduction rules. Without loss of generality, suppose that the input graph $G$ is connected and let $F\subseteq E(G)$ be the feedback edge set of size $\fen(G)$. Note that $F$ can be found in linear time by finding the spanning tree of $G$. We let $T=G\setminus F$.

\begin{rrule}\label{rrule:fen_leaves}
    If $v\in V(G)\setminus \{s,t\}$ is a leaf in $G$, remove it.
\end{rrule}

\begin{rrule}\label{rrule:fen_degree2}
    Suppose there is a sequence $v_1,v_2,\ldots,v_r$ of vertices for $r\geq 4$ with the following properties:
    \begin{enumerate}[a)]
        \item $s,t \notin W = \{v_2,v_3,\ldots,v_{r-1}\}$,
        \item vertices in $W$ are not incident to an edge of $F$,
        \item vertices $v_1,v_2,v_3,\ldots,v_{r-1},v_r$ form a path in $G$, and 
        \item vertices in $W$ have degree $2$ in $G$.
    \end{enumerate}
    Then do the following:
    \begin{itemize}
    \item Delete all vertices of $W$ from $G$.
    \item If $v_1 \neq v_r$, then add a new vertex $v^*$ with neighbors $v_1$ and $v_r$ and modify the deletion sets as follows. 
        \begin{enumerate}[i)]
            \item Vertex $v^*$ deletes all edges previously deleted by vertices of $W$ not incident to any of them.
            \item If a vertex $v \notin W$ deletes some edge $\{v_i,v_{i+1}\}$, then $v$ now deletes both edges $\{v_1,v^*\}$ and $\{v^*,v_r\}$.
            \item If the path $(v_1,v_2,\ldots,v_r)$ is not $f$-conforming, then $v^*$ also deletes the edge $\{v^*,v_r\}$. 
            If the path $(v_r,v_{r-1},\ldots,v_1)$ is not $f$-conforming, then $v^*$ also deletes the edge $\{v_1,v^*\}$.
        \end{enumerate}
    \end{itemize}
\end{rrule}

Correctness of \Cref{rrule:fen_leaves} immediately follows because no leaves of $G$ except for $s$ or $t$ can be part of any $s$-$t$ path in $G$.

\begin{lemma}
    \Cref{rrule:fen_degree2} is correct.
\end{lemma}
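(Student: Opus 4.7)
The plan is to prove that $(G,f,s,t)$ admits an $f$-conforming $s$-$t$-path if and only if the reduced instance does, via a case analysis on whether $v_1=v_r$. The key structural observation is that every internal vertex $v_i\in W=\{v_2,\ldots,v_{r-1}\}$ has degree exactly $2$ in $G$ with both neighbors on the sequence $(v_1,\ldots,v_r)$. Consequently, any simple $s$-$t$-path in $G$ either avoids $W$ entirely, or enters $W$ through one endpoint and is forced to traverse the whole $W$-segment and leave through the other endpoint.

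When $v_1=v_r$, the sequence $(v_1,v_2,\ldots,v_{r-1},v_1)$ is a cycle attached to the rest of $G$ only at $v_1$. Any simple $s$-$t$-path using a vertex of $W$ would have to re-enter $v_1$, contradicting simplicity, so no $s$-$t$-path uses $W$ and removing $W$ (together with the now-meaningless references to incident edges in other vertices' deletion sets) preserves the answer.

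For the main case $v_1\neq v_r$, I would show both directions. Given an $f$-conforming $s$-$t$-path $P$ in $G$ that avoids $W$, it is already a path in $G'$ and its $f'$-conformity is immediate because the only changes to deletion sets affecting edges of $P$ come from rules (i) and (ii), which target only the new edges $\{v_1,v^*\}$ and $\{v^*,v_r\}$ that $P$ does not use. If instead $P$ traverses $W$, say as $v_1\to v_2\to\cdots\to v_r$, I would substitute this subpath by $(v_1,v^*,v_r)$ to obtain $P'$. To verify $f'$-conformity of $P'$ I would check that: no vertex preceding $v^*$ on $P'$ deletes either new edge, because by rule (ii) such a deletion would require deleting some $\{v_i,v_{i+1}\}$ in $G$, contradicting $f$-conformity of $P$; rule (iii) does not make $v^*$ delete $\{v^*,v_r\}$, since $(v_1,\ldots,v_r)$ is an $f$-conforming subpath of $P$; and for edges of $P'$ occurring after $v^*$, the deletions previously contributed by internal $W$-vertices are inherited by $v^*$ through rule (i), while deletions of edges internal to $W$ are irrelevant as those edges are absent from $G'$.

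The reverse direction is symmetric: an $f'$-conforming $s$-$t$-path $P'$ either avoids $v^*$ (and then is an $f$-conforming path in $G$ immediately) or contains a subpath $(v_1,v^*,v_r)$ or $(v_r,v^*,v_1)$, which I would replace by $(v_1,v_2,\ldots,v_r)$ or its reverse; rule (iii) guarantees the chosen directional traversal is $f$-conforming, while rules (i) and (ii) ensure that external vertices and the lifted internal $W$-vertices collectively block exactly the same future edges as $v^*$ did. The main obstacle is bookkeeping: tracking all three deletion modifications simultaneously, separating inherited deletions (from $W$) from external ones (from $V(G)\setminus W$), and treating both traversal directions of the $W$-segment, so as to confirm that the compression of deletion information into the single vertex $v^*$ is lossless in both directions.
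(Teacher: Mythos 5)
Your proposal is correct and follows essentially the same path-segment substitution argument as the paper's proof: replace the $W$-segment by $v^*$ (or vice versa) and check that the three deletion-set modifications exactly preserve $f$-conformity in both directions. You are in fact somewhat more careful than the paper, which silently skips the $v_1=v_r$ case and the directionality of the traversal that rule (iii) handles; both of these details are correctly treated in your write-up.
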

\begin{proof}
    Denote by $G$ the old graph and by $G'$ the new graph obtained from $G$ by applying \Cref{rrule:fen_degree2} once. Let $P=(s=u_1,e_1,\ldots,u_k=t)$ be an $f$-conforming $s$-$t$ path in $G$ and suppose that $P$ was touched by the application of \Cref{rrule:fen_degree2}. Hence for some $i\geq 2$ and $j = i+r -3$ we have $W=\{u_i,u_{i+1},\ldots,u_{j}\}$. The new path is created by replacing this segment by the vertex $v^*$. Denote the new path $P'$. Path $P'$ is $f$-conforming because $v^*$ deletes exactly what the vertices in $W$ deleted and the edges $\{u_{i-1},v^*\}$ and $\{v^*,u_{j}\}$ are not deleted because no vertex before $u_i$ deletes any edge inside $W$. Hence $P'$ is $f$-conforming.

    On the other hand, suppose that $P'=(s=u_1,e_1,\ldots,u_k=t)$ is an $f$-conforming path in~$G'$ and suppose that $u_i=v^*$ for some~$i$. Let $P$ be a path in $G$ created from $P'$ by replacing~$v^*$ by the vertices in~$W$. No edge incident to $W$ on $P$ is deleted by some vertex $u$ before $v^*$ as otherwise $u$ also deleted both edges incident to $v^*$, contradicting the assumption that~$P'$ is $f$-conforming. Finally, no vertex of $W$ deletes any edge after the vertex $v^*$ as otherwise also $v^*$ deleted them, again contradicting the assumption that $P'$ is $f$-conforming.
\end{proof}

The kernelization algorithm consists of exhaustively applying \Cref{rrule:fen_leaves,rrule:fen_degree2}. It remains to show that after their exhaustive application the total number of vertices is linear in $|F|$. To do so, we upper bound the number of vertices of degrees $1,2$ and $\geq 3$ in $T$. First, we recall a well known fact about the number of inner vertices of degree at least $3$ in a tree.

\begin{observation}\label{obs:bound_on_vertices_of_atleast3}
    Let $T$ be any tree and $\ell$ the number of leaves of $T$. Then the number of vertices of degree at least $3$ is at most $\ell-2$.
\end{observation}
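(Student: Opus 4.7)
The plan is to use a double-counting argument on the sum of degrees, exploiting the fact that in a tree on $n$ vertices the number of edges is exactly $n-1$, so $\sum_{v} \deg(v) = 2(n-1)$. I would partition the vertex set of $T$ by degree, writing $n_1$, $n_2$, $n_{\geq 3}$ for the number of vertices of degree $1$, $2$, and at least $3$, respectively. By definition, $n_1 = \ell$, and $n = n_1 + n_2 + n_{\geq 3}$ (here I am using that no tree with at least two vertices has isolated vertices, and the statement is trivial for the single-vertex tree since then $\ell \geq 1$ gives a vacuous bound after interpreting the degenerate case).

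Next, I would split the degree sum as
\[
n_1 + 2 n_2 + \sum_{v \,:\, \deg(v) \geq 3} \deg(v) \;=\; 2(n-1) \;=\; 2n_1 + 2n_2 + 2 n_{\geq 3} - 2,
\]
which simplifies to $\sum_{v \,:\, \deg(v) \geq 3} \deg(v) = n_1 + 2 n_{\geq 3} - 2$. Since every vertex counted on the left has degree at least $3$, the left-hand side is at least $3 n_{\geq 3}$, and rearranging gives $n_{\geq 3} \leq n_1 - 2 = \ell - 2$, which is exactly the claim.

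There is essentially no obstacle; the only mild subtlety is handling the trivial cases $n = 1$ and $n = 2$ separately, where the bound either holds vacuously or as $0 \leq 0$, and confirming we are working with a tree (connected, so $|E| = n - 1$) rather than a forest. If the observation is intended for forests as well, I would apply the argument component-wise and sum.
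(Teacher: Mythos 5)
Your proof is correct and uses exactly the same argument as the paper: the handshaking lemma $\sum_v \deg(v) = 2(n-1)$ combined with partitioning the vertices by degree into the classes of degree $1$, $2$, and at least $3$, then lower-bounding the degree sum by $n_1 + 2n_2 + 3n_{\geq 3}$. Your extra attention to the degenerate cases $n \leq 2$ is a minor refinement the paper omits, but the approach is the same.
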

\begin{proof}
    Recall that trees have $n-1$ edges and the handshaking lemma gives $\sum_{v\in V}\deg (v) = 2n-2$. Let $d_1,d_2,d_{\geq 3}$ denote the number of vertices of degrees $1$, $2$ and $\geq 3$ in $T$, respectively. Then we have $2(d_1+d_2+d_{\geq 3})-2=2n-2=\sum_{v\in V}\deg v \geq d_1+2d_2+3d_{\geq 3}$. Thus $d_{\geq 3}\leq d_1-2=\ell-2$.
\end{proof}

\begin{lemma}
    If $G$ is reduced with respect to \Cref{rrule:fen_leaves,rrule:fen_degree2}, then $|V(G)|\leq 8\fen+4$.
\end{lemma}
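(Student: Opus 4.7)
The plan is to classify $V(G)$ by its $T$-degree and by membership in $X := V(F) \cup \{s,t\}$ (so $|X| \leq 2\fen + 2$), and then to bound the ``important'' set $I := V(G) \setminus B$ and its complement $B$ separately. Concretely, let $L$, $D_2$, $D_{\geq 3}$ partition $V(G)$ by $T$-degree, and set $B := D_2 \setminus X$; these are precisely the $G$-degree-$2$ vertices that are neither $s, t$ nor incident to any feedback edge, i.e.\ the vertices eligible to form the set $W$ in \Cref{rrule:fen_degree2}.

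The first step is the key inclusion $L \subseteq X$: a non-$s,t$ leaf of $T$ has $T$-degree $1$, and by \Cref{rrule:fen_leaves} it must have some incident $F$-edge (else it would be a leaf of $G$ and would have been removed). Thus $|L| \leq |X| \leq 2\fen + 2$, and \Cref{obs:bound_on_vertices_of_atleast3} then gives $|D_{\geq 3}| \leq |L| - 2 \leq 2\fen$. Using $L \subseteq X$ together with the partition identity $|X| = |L| + |D_2 \cap X| + |D_{\geq 3} \cap X|$, I would cancel the $|L|$ term to obtain
\[
 |I| = |L| + |D_{\geq 3}| + |D_2 \cap X| = |D_{\geq 3}| + |X| - |D_{\geq 3} \cap X| \leq |D_{\geq 3}| + |X| \leq 4\fen + 2.
\]

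Next I would argue that $B$ is independent in $T$. If $uv \in E(T)$ with $u,v \in B$, then both $u$ and $v$ have $G$-degree $2$ and no $F$-edges, so each has a unique other $T$-neighbor $u', v'$; these must be distinct, since $u' = v'$ would put the triangle $\{u,v,u'\}$ inside the tree $T$. The path $(u', u, v, v')$ in $G$ then satisfies every condition of \Cref{rrule:fen_degree2} with $W = \{u,v\}$ and $r = 4$, contradicting exhaustive application. Consequently, suppressing the $B$-vertices in $T$ yields a tree $T'$ on vertex set $I$ in which each $B$-vertex corresponds to a distinct edge of $T'$, giving $|B| \leq |E(T')| = |I| - 1 \leq 4\fen + 1$. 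Combining, $|V(G)| = |I| + |B| \leq 2|I| - 1 \leq 8\fen + 3 \leq 8\fen + 4$.

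The main subtlety is noticing that the inclusion $L \subseteq X$ is available and using it to cancel the $|L|$ contribution; without this, the naive bound $|I| \leq |L| + |D_{\geq 3}| + |X|$ degrades to $6\fen + 4$ and the final count balloons to roughly $12\fen + 7$. The second delicate point is invoking the triangle-freeness of $T$ to guarantee that the four distinct vertices required by the path hypothesis of \Cref{rrule:fen_degree2} genuinely exist whenever two elements of $B$ are $T$-adjacent.
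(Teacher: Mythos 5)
Your proof is correct and follows essentially the same route as the paper's: leaves of $T$ are forced into $V(F)\cup\{s,t\}$ by \Cref{rrule:fen_leaves}, the degree-$\geq 3$ vertices are bounded via \Cref{obs:bound_on_vertices_of_atleast3}, and exhaustive application of \Cref{rrule:fen_degree2} makes the remaining degree-$2$ vertices pairwise non-adjacent in $T$, so they inject into a set of size $O(\fen)$. The only difference is bookkeeping — you realize this injection by suppressing the vertices of $B$ into distinct edges of a contracted tree, whereas the paper roots $T$ and maps each degree-$2$ vertex to its child — and your accounting (using $L\subseteq X$ to avoid double-counting) in fact lands at $8\fen+3\leq 8\fen+4$.
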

\begin{proof}
    Note that there are at most $2\fen$ vertices of $G$ incident to an edge of $F$. The remaining vertices satisfy $\deg_T(v)=\deg_G(v)$. By assumption that \Cref{rrule:fen_leaves} was applied exhaustively, leaves of $T$ are either the terminal vertices $s,t$ or vertices incident to an edge of $F$. Hence there are at most $2\fen+2$ leaves in $T$. Next, by \Cref{obs:bound_on_vertices_of_atleast3} there are at most $(2\fen + 2) - 2 = 2\fen$ vertices of degree at least $3$ in $T$. Finally, we bound the number of vertices of degree $2$. To do this, we root $T$ in an arbitrary vertex $r$. Note that we applied \Cref{rrule:fen_degree2} exhaustively, so every vertex of degree $2$ is either $s$ or $t$ or it is adjacent only to vertices of degree at least $3$, vertices incident with an edge of $F$, or to $s$ or $t$.
    To bound number of vertices of degree $2$ consider mapping each vertex of degree $2$ in $T$ to its child. As argued above, each vertex of degree $2$ except possibly $s$ or $t$ is mapped to a vertex of degree at least $3$ or to a vertex incident to an edge of $F$, or $s$ or $t$. As this mapping is clearly injective, we obtain that there are at most $2\fen + 2 + 2\fen=4\fen + 4$ of vertices of degree $2$ in $T$ (note that vertices $s$ and $t$ were already counted). Hence, the total number of vertices in $T$ (and hence in $G$) is at most $8\fen + 4$, which proves the lemma.
\end{proof}
This finishes the proof \Cref{thm:kernel} \itemstyle{a)}.
}%

\toappendix{
\begin{proof}[Proof of \Cref{thm:kernel} \itemstyle{b)}]
     Let $G$ be the input graph. Towards the kernel, we utilize just \Cref{rrule:fen_leaves}. We claim that the resulting graph has $O(\vcn)$ vertices and edges. Let $G'$ be created from $G$ by exhaustively applying \Cref{rrule:fen_leaves} and let $S\subseteq V(G')$ be a vertex cover for $G'$ of size $\vcn=\vcn(G')$. We bound the number of vertices $v\in V(G')\setminus S$ according to size of $N(v)\cap S$. Note that there are no edges with both endpoints in $G'\setminus S$ as $S$ is a vertex cover. Let $D_1,D_2,D_{\geq 3}$ be the sets of vertices in $G'\setminus S$ of degrees $1,2$ and $\geq 3$, respectively. Clearly $|D_1|\leq 2$ as the only remaining vertices of degree $1$ are $s$ or $t$ because $G'$ is reduced with respect to \Cref{rrule:fen_leaves}.
    \begin{claim}
        $|D_{\geq 3}|\leq 2|S|-3$.
    \end{claim}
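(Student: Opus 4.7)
The plan is to exploit the outerplanarity of $G'$ together with the structure imposed by $S$ being a vertex cover. Since $V(G')\setminus S$ is an independent set, every edge incident to a vertex of $D_{\geq 3}\subseteq V(G')\setminus S$ has its other endpoint in $S$. I would therefore focus on the bipartite subgraph $H$ of $G'$ with parts $D_{\geq 3}$ and $N_{G'}(D_{\geq 3})\subseteq S$, which by the above captures every edge of $G'$ incident to a vertex of $D_{\geq 3}$.

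The proof would then proceed by a double counting of $|E(H)|$. On one hand, every vertex of $D_{\geq 3}$ has degree at least $3$ in $G'$, and all of its incident edges belong to $H$, so $|E(H)|\geq 3|D_{\geq 3}|$. On the other hand, $H$ is a subgraph of the outerplanar graph $G'$, so it is itself outerplanar, and the classical bound on the number of edges of an outerplanar graph yields $|E(H)|\leq 2|V(H)|-3\leq 2(|D_{\geq 3}|+|S|)-3$. Chaining the two estimates gives $3|D_{\geq 3}|\leq 2|D_{\geq 3}|+2|S|-3$, equivalently $|D_{\geq 3}|\leq 2|S|-3$.

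The only care needed is with the degenerate cases used to apply the outerplanar edge bound. If $D_{\geq 3}=\emptyset$ the claim is vacuous. Otherwise, any $v\in D_{\geq 3}$ contributes three distinct neighbours in $S$, which forces $|S|\geq 3$ and hence $|V(H)|\geq 4$, validating the use of the $2n-3$ bound. I do not expect a substantial obstacle here; the main point is realizing that one must restrict attention to the bipartite subgraph $H$ rather than the induced subgraph on $S\cup D_{\geq 3}$, because edges inside $S$ would inflate $|E(H)|$ without contributing to the degree sum over $D_{\geq 3}$ and thereby break the double count.
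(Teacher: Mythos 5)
Your proof is correct and follows essentially the same double-counting argument as the paper, which works with the induced subgraph $G'[S\cup D_{\geq 3}]$ rather than your bipartite subgraph. Your closing remark is mistaken, however: the induced subgraph works just as well, since edges inside $S$ only add to $|E(H)|$, leaving the lower bound $|E(H)|\geq 3|D_{\geq 3}|$ perfectly valid, while the outerplanarity bound $|E(H)|\leq 2|V(H)|-3$ applies to the induced subgraph too — so the double count is not broken by taking the larger $H$.
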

    \begin{claimproof}
        Let $D=D_{\geq 3}$.
        Consider the induced subgraph $H=G'[S\cup   D]$. Since $H$ is outerplanar, $|E(H)|\leq 2|V(H)|-3= 2(|D|+|S|)-3$. On the other hand, $H$ has at least $3|D|$ edges. We obtain $2(|D|+|S|)-3\geq 3|D|$. This yields $|D|\leq 2|S|-3$.
    \end{claimproof}
    \begin{claim}
        $|D_2|\leq 4|S|-6$
    \end{claim}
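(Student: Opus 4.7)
The plan is to bound $|D_2|$ through a two-step argument that exploits outerplanarity of $G'$. Since $S$ is a vertex cover, the set $V(G')\setminus S$ is independent in $G'$, so each vertex $v\in D_2$ has both of its two neighbors in $S$. This yields a natural map $\phi\colon D_2\to \binom{S}{2}$ sending $v$ to the unordered pair consisting of its two neighbors.

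First I would bound the preimages of $\phi$: for every pair $\{a,b\}\subseteq S$, at most two vertices of $D_2$ map to $\{a,b\}$. Indeed, three distinct preimages $v_1,v_2,v_3$ would, together with $a$ and $b$, realize $K_{2,3}$ as a subgraph of $G'$ (with parts $\{a,b\}$ and $\{v_1,v_2,v_3\}$), contradicting the outerplanarity of $G'$.

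Next I would bound the image $I:=\phi(D_2)\subseteq \binom{S}{2}$ by $2|S|-3$. Consider the simple graph $G^\ast := (S,I)$. For each $\{a,b\}\in I$ pick a representative $v_{a,b}\in \phi^{-1}(\{a,b\})$, and then contract each $v_{a,b}$ into one of its neighbors; this realizes $G^\ast$ as a minor of $G'$. Since outerplanarity is minor-closed (the class is characterized by the forbidden minors $K_4$ and $K_{2,3}$), the graph $G^\ast$ is outerplanar, and the classical edge bound yields $|I|\leq 2|S|-3$ whenever $|S|\geq 2$.

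Combining the two estimates gives $|D_2|\leq 2|I|\leq 4|S|-6$, as required. The corner cases $|S|\leq 1$ are vacuous: no vertex of degree two can have both its neighbors inside a set of size at most $1$, so $D_2=\emptyset$, and these instances are anyway trivial for the kernelization. No step is a real obstacle; the only point deserving care is the outerplanarity of $G^\ast$, and the minor-closure argument above dispatches it without having to manipulate a concrete outerplanar embedding of $G'$.
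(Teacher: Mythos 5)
Your proof is correct and takes essentially the same route as the paper's: your map $\phi$ and its image $I$ correspond exactly to the paper's contraction of each degree-2 vertex into an edge between its two $S$-neighbours (yielding an outerplanar multigraph with at most two parallel edges per pair, by the same $K_{2,3}$ argument) followed by removal of parallel edges. The differences are cosmetic; your explicit appeal to minor-closure to certify outerplanarity of $G^{\ast}$ is a welcome bit of extra care that the paper leaves implicit.
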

    \begin{claimproof}
        Let $D=D_2$. Consider the subgraph $H=G'[S\cup D]$. For the sake of bounding the size of $D$, suppose that we contract each vertex of $D$ to a single edge. More precisely, given $v\in D$ and $N(v)=\{u_1,u_2\}$, we remove the vertex $v$ and add an edge $\{u_1,u_2\}$. We obtain an outerplanar multigraph $H_1$. Note that $H_1$ may contain parallel edges. However, note that there cannot be three (or more) parallel edges between two vertices as otherwise the input graph contained $K_{2,3}$ as a subgraph, contradicting the fact that $H$ is outerplanar. Let $H_2$ be the simple graph that results from $H_1$ by removing parallel edges. We obtain $|E(H_2)|\leq 2|V(H_2)|-3$. By the above argumentation we have $|E(H_1)|\leq 2|E(H_2)|$. But each edge of $H_1$ corresponds to a vertex of $D$. Finally, note that $|V(H_1)|=|V(H_2)|=|S|$. Thus we obtain $|D|\leq |E(H_1)|\leq 2|E(H_2)| \leq 2(2|V(H_2)|-3)=4|S|-6$.
    \end{claimproof}

    Hence, the graph $G'$ has at most $|S|+|D_1|+|D_2|+|D_{\geq 3}|\leq |S| + 2 + 4|S|-6 + 2|S|-3 \leq 7|S|= 7\vcn(G')\leq 7\vcn(G)$ vertices. The last inequality follows from the fact that $G'$ is a subgraph of $G$. Since $G'$ is itself outerplanar, it has at most $14\vcn(G')$ edges and this finishes the proof. 
 \end{proof}
 \begin{proof}[Proof of \Cref{thm:kernel} \itemstyle{c)}]
     Let $(G=(V,E),f,s,t)$ be the input instance. Let $P=(s=v_1,e_1,v_2,\ldots,e_{r-1},v_r=t)$ be a shortest $f$-conforming $s$-$t$ path in $G$. Observe that the vertex $s$ must delete all the edges of the form $\{s,v_i\}$ for $v_i\in\{v_3,v_4,\ldots,v_r\}$. Hence, we only guess the second vertex $v_2$ on the path and the remaining vertices must be those that have their edge to $s$ deleted by $s$. For $v\in V\setminus \{s\}$, we let $X^v=\{s,t,v\}\cup \{u\in V(G) \mid \{s,u\}\in f(s)\}$. We create $n-1$ instances of the form $(G[X^v],s,t,f')$, where $f'$ is restriction of $f$ to $G[X^v]$, for each $v\in V$. Note that $|X^v|\leq \mu + 3$ and this is the claimed linear Turing kernel. There is an $f$-conforming $s$-$t$ path in $G$ if and only if there is $v\in V\setminus \{s\}$ such that there is an $f$-conforming $s$-$t$ path in $G[X^v]$.   
\end{proof}

Note that for the proof we only used that the vertex $s$ is universal in $G$.
Hence, in fact the theorem yields a Turing kernel with $O((n - \deg s)+|f(s)|)$ vertices in general graphs.
}%

\begin{apprestatable}{corollary}{corollaryFptOnCliquesSingleExp}\label{cor:fpt_on_cliques_singleexp}
    \SDpath can be solved in $2^{O(\mu)}n^2$ time if the underlying graph is a clique and $2^{o(\mu)}\poly(n)$-time algorithm on cliques violates ETH.
\end{apprestatable}
\toappendix
{
\sv{\corollaryFptOnCliquesSingleExp*}
Before proving \Cref{cor:fpt_on_cliques_singleexp}, we design a single-exponential algorithm for \SDpath on general self-deleting graphs. 
Let $(G=(V,E),f)$ be a self-deleting graph and let $\mathcal{D}(G)=\{f(v)\mid v \in V\}$ be the set of distinct deletion sets in $G$. We present an algorithm with running time $O(2^{|\mathcal{D}(G)|}(n+m+|f|))$. Note that any self-deleting graph satisfies $|\mathcal{D}(G)|\leq n$, hence the running time of the algorithm can also be expressed as $O(2^n(n+m+|f|))$.

\begin{theorem}\label{thm:single_exp_alg}
    \SDpath can be solved in $O(2^{|\mathcal{D}(G)|}(n+m+|f|))$ time. Moreover, \SDpath cannot be solved in  $2^{o(|\mathcal{D}(G)|})\poly(n)$ time unless ETH fails.
\end{theorem}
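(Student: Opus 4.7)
The plan has two parts: an \FPT algorithm via a BFS on an exponentially blown-up state graph, and a matching ETH lower bound inherited from \Cref{construction:sat}.

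For the algorithm, I would enumerate the distinct deletion sets as $\mathcal{D}(G) = \{D_1, \ldots, D_d\}$ with $d = |\mathcal{D}(G)|$ and let $\iota(v) \in [d]$ be the index with $f(v) = D_{\iota(v)}$. For every edge $e \in E$, I would precompute in $O(|f|)$ total time the bitmask $B_e = \{i \in [d] \mid e \in D_i\}$. The key object is a directed state graph $H$ on vertex set $V \times 2^{[d]}$ with an arc from $(u, S)$ to $(v, S \cup \{\iota(v)\})$ whenever $\{u,v\} \in E$ and $B_{\{u,v\}} \cap S = \emptyset$. Intuitively, $S$ records the set of deletion-set indices already collected along the current prefix, so the intersection condition captures exactly that $\{u,v\}$ has not yet been marked for deletion by any earlier vertex. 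A plain BFS from $(s, \{\iota(s)\})$ then decides whether some state $(t, S^*)$ is reachable.

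Correctness reduces to a routine induction: a reachable state $(v, S)$ corresponds to an $f$-conforming $s$-$v$ walk in $G$ whose collected deletion-set indices form precisely $S$, and any such walk can be shortcut to an $f$-conforming $s$-$v$ path. Since $|V(H)| \leq n \cdot 2^d$ and $|E(H)| \leq m \cdot 2^d$, and each arc is tested in amortized $O(1)$ time via bitmask intersection, the total BFS cost is $O(2^d(n+m+|f|))$ as claimed. The only mild subtlety is bookkeeping for the bitmask operations on $d$-bit masks, which is absorbed into the polynomial factor because we only need a nonemptiness test per examined edge.

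For the ETH lower bound, I would feed the output of \Cref{construction:sat} into the hypothetical algorithm. In that construction, only the $2n'$ vertices of the form $T_x$ and $F_x$ carry nonempty deletion sets, so $|\mathcal{D}(G)| \leq 2n' + 1$, while by \Cref{rem:sat_is_linear} the entire instance has size $O(n'+m')$. Consequently, a $2^{o(|\mathcal{D}(G)|)}\poly(n)$-time algorithm for \SDpath would yield a $2^{o(n'+m')}$-time algorithm for \tsat, violating ETH.
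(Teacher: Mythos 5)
Your proposal is correct and follows essentially the same route as the paper: a BFS over the product state space $V \times 2^{[d]}$ where the subset component records which deletion sets have been triggered, with the arc condition $B_{\{u,v\}} \cap S = \emptyset$ being exactly the paper's condition $(u,v) \notin \bigcup_{i \in S} D_i$, and the identical ETH lower bound via \Cref{construction:sat} with $|\mathcal{D}(G)| \leq 2n'+1$. The only (harmless) additions are the explicit bitmask precomputation and the explicit walk-to-path shortcutting remark, neither of which changes the argument.
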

\begin{proof}
    Let $\mathcal{D}(G)=\{D_1,\ldots,D_k\}$. For vertex $v$, let $\operatorname{type}(v)$ denote the index $i$ such that $f(v)=D_i$.
    Construct a new graph $G'$ by creating $2^k$ copies of $G$ as follows. For simplicity, regard $G$ as a directed graph where an edge $\{u,v\}$ is represented by a pair of edges $(u,v),(v,u)$. If $\{u,v\}\in f(w)$, then both $(u,v)$ and $(v,u)$ are in $f(w)$.
    Formally, we construct a new directed graph $G'$, where $V(G')=\{(v,S)\mid S\subseteq [k],v\in V(G)\}$. Now, for each edge $(u,v)\in E(G)$ in the original graph, for each $S\subseteq [k]$ such that $(u,v)\notin \bigcup_{i\in S}D_i$, we add an edge from $(u,S)$ to $(v,S\cup \{\operatorname{type}(v)\})$ into $E(G')$.

    Each layer of $G'$ represents the graph, where we the set of edges $\bigcup_{i\in S}D_i$ is deleted. Every time we visit a vertex $v$ (i.e., we use edge $(u,v)$), we reflect the actual deleting set. We can now run a simple breadth-first search on $G'$ and look for a simple path from $s' = (s,\{\operatorname{type}(s)\})$ to the vertex $t'=(t,S)$ for any $S\subseteq [k]$. It is not hard to verify that $f$-conforming $s$-$t$ paths in~$G$ correspond to $s'$-$t'$ paths in~$G'$. Note that we can build $G'$ in $O(2^k(n+m+|f|))$ time and its size is $2^k(n+m)$.
    
    For the lower bound, consider \Cref{construction:sat} and notice that $|\mathcal{D}(G)|=2n'+1$, where $n'$ is the number of variables of the \textsc{$3$-Sat} formula and $G$ is the resulting graph of the reduction. Hence an $2^{o(|\mathcal{D}(G)|)}\poly(n)$ algorithm for \SDpath would yield $2^{o(n')}$ algorithm for \textsc{$3$-Sat}, contradicting ETH.
\end{proof}

\begin{proof}[Proof of \Cref{cor:fpt_on_cliques_singleexp}]
    For the algorithm we use \Cref{thm:kernel} \itemstyle{c)} together with \Cref{thm:single_exp_alg} (note that $|f| \in 2^{O(\mu)}\cdot n$). The lower bound follows by chaining \Cref{construction:sat} with the modification from the proof of \Cref{cor:hardness_fv1} and then applying the reduction from the proof of \Cref{cor:hardness_cliques}.
\end{proof}
}

\section{Conclusion and open problems}
We initiated a systematic study of complexity of finding a simple path in a self-deleting graph, which we call \SDpath. While the problem is hard on very restricted graph classes, we were able to design \FPT algorithm(s) parameterized by the solution size and structure of the deletion function. This further allowed us to design \FPT algorithms for various structural parameters combined with the structure of the deletion function. 

\lv{\subparagraph*{Better derandomization of color-coding.}}
Our derandomization of the color-coding algorithm yields running time of $2^{O(k\log (k\mu))}\poly(n)$ (\Cref{thm:fpt_by_k_and_mu_better}) or $2^{O(k\mu)}\poly(n)$ (\Cref{thm:fpt_by_k_and_mu_worse}). We were unable to derandomize it in a way to match the randomized running time of $2^{O(k\log \mu)}\poly(n)$ from \Cref{thm:randomized_color_coding_alg}. We conjecture that with a suitable pseudorandom object, there is a way to derandomize the algorithm into a deterministic $2^{O(k\log \mu)}\poly(n)$ time.

\lv{\subparagraph*{Optimal algorithms for structural parameters and $\mu$ combined.}}
Our framework for \FPT algorithms parameterized by $k$ and $\mu$ together with lower bounds on the number of edges in traceable graphs with dense structure does not give optimal running times under ETH. For example, already for cliques, the framework gives running time $2^{O(\mu \log \mu)}(n+m)$ (\Cref{obs:fpt_on_cliques_worse}) which is not optimal (\Cref{cor:fpt_on_cliques_singleexp}). 
Assuming ETH, we cannot obtain an algorithm for \SSDpath with running time $2^{o(k)\log \mu}\poly(n)$ (see \Cref{rem:sat_is_linear}). 
Similarly, an algorithm with running time $2^{k \cdot o(\log \mu)}\poly(n)$ would imply that \SSDpath parameterized by $k$ is in \FPT (this is due to~\cite[Lemma 1]{CaiJ01subexpcollapse}),
which would then imply that $\FPT=\W{1}$. Note that \SSDpath becomes \FPT w.r.t. $k$ if $\mu \in O(\log n)$ by plugging into the algorithm from \Cref{thm:fpt_by_k_and_mu_better} and using the fact that $(\log n)^{g(k)}$ is fpt-time.

\lv
{
\todo[inline]{But it does not admit even a Turing polynomial kernel for the parameter $k$ with $\mu = O(\log n)$}
}

Can the algorithms for structural parameters and $\mu$ be improved to match the above ETH lower bounds? For example, is it possible to solve \SDpath in deterministic $2^{O(\vcn \log \mu)}\poly(n)$ time (note that we can achieve such a running time by a randomized algorithm)?

\bibliography{references}

\clearpage
\appendix
\appendixText

\end{document}